\def\techreport{}
\newcolumntype{d}[1]{D{.}{.}{#1}}
\let\oldnl\nl
\newcommand{\nonl}{\renewcommand{\nl}{\let\nl\oldnl}}
\newcommand{\ass}{\colonequals}
\newcommand{\appendixreftr}[1]{%
\ifdefined\techreport%
#1%
\else%
our technical report~\cite{techreport}%
\fi%
}
\newcommand{\appendixref}[1]{%
\ifdefined\techreport%
#1%
\fi%
}
\newcommand{\RSM}       {\mathcal{R}}
\newcommand{\CRSM}      {\mathcal{R^{\parallel}}}
\newcommand{\Module}    {\mathcal{M}}
\newcommand{\Automaton} {\mathcal{C}}
\newcommand{\WPDS}      {\mathcal{P}}
\newcommand{\CPDS}      {\mathcal{P}^{\parallel}}
\newcommand{\APN}       {\mathcal{A}^{\parallel}}
\newcommand{\swappush}{\mathsf{sp}}
\newcommand{\Nodes}{N}
\newcommand{\In}  {\mathit{In}}   
\newcommand{\En}  {\mathit{En}}   
\newcommand{\Ex}  {\mathit{Ex}}   
\newcommand{\Call}{\mathit{Call}} 
\newcommand{\Ret} {\mathit{Ret}}  
\newcommand{\cnd}[2]{\tup{#1,#2}} 
\newcommand{\rnd}[2]{\tup{#1,#2}} 
\newcommand{\post}{\mathit{post}}
\newcommand{\Comps}{\Pi}
\newcommand{\Runs}{\Lambda}
\newcommand{\conf} [2]{\tup{#1,#2}} 
\newcommand{\lconf}[2]{\tup{#1,#2}}    
\newcommand{\gconf}[3]{\tup{#1,#2, #3}}    
\newcommand{\sem}[1]{\llbracket #1 \rrbracket}
\newcommand{\trans}[1][]{\xrightarrow{#1}}
\newcommand{\Trans}[1][]{\xRightarrow{#1}}
\newcommand{\reach}[2][]{\xrightarrow{#1}\mathrel{\vphantom{\to}^{#2}}}
\newcommand{\Reach}[2][]{\xRightarrow{#1}\mathrel{\vphantom{\to}^{#2}}}
\newcommand{\state}[2]{\tup{#1, #2}} 
\newcommand{\lang}{\mathcal{L}}
\newcommand{\Weights}{D}
\newcommand{\zero}{\overline{0}}
\newcommand{\one} {\overline{1}}
\newcommand{\semiring}{\tup{\Weights, \oplus, \otimes, \zero, \one}}
\newcommand{\sleq}{\sqsubseteq}
\newcommand{\Height}{H}
\newcommand{\booleanSemiring} {\tup{ \tup{0,1}, \lor, \land, 0, 1 }}
\newcommand{\rsmwt} {w}          
\newcommand{\autwt} {\ell}       
\newcommand{\compwt}{\otimes}    
\newcommand{\runwt} {\otimes}    
\newcommand{\Compswt}{\bigoplus} 
\newcommand{\rsmdist}{d}         
\newcommand{\WL}     {\mathsf{WL}}
\newcommand{\summary}{\mathsf{sum}}
\newcommand{\ConfDistBounded}{\mathtt{ConfDist}}
\newcommand{\Marks}{\mathbb{M}}
\newcommand{\freshMark}{\widehat{m}}
\newcommand{\Outgoing}{\mathsf{out}}
\newcommand{\eps}{\varepsilon}
\newcommand{\Map}{\mapsto}
\DeclareMathOperator{\poly}{poly}
\newcommand{\set}  [1]{\{#1\}}
\newcommand{\tup}  [1]{\langle#1\rangle}
\newcommand{\ov}{\overline}
\DeclareFontFamily{OMX}{MnSymbolE}{}
\DeclareSymbolFont{MnLargeSymbols}{OMX}{MnSymbolE}{m}{n}
\DeclareFontShape{OMX}{MnSymbolE}{m}{n}{
    <-6>  MnSymbolE5
   <6-7>  MnSymbolE6
   <7-8>  MnSymbolE7
   <8-9>  MnSymbolE8
   <9-10> MnSymbolE9
  <10-12> MnSymbolE10
  <12->   MnSymbolE12
}{}
\DeclareFontShape{OMX}{MnSymbolE}{b}{n}{
    <-6>  MnSymbolE-Bold5
   <6-7>  MnSymbolE-Bold6
   <7-8>  MnSymbolE-Bold7
   <8-9>  MnSymbolE-Bold8
   <9-10> MnSymbolE-Bold9
  <10-12> MnSymbolE-Bold10
  <12->   MnSymbolE-Bold12
}{}
\let\llangle\@undefined
\let\rrangle\@undefined
\DeclareMathDelimiter{\llangle}{\mathopen}%
                     {MnLargeSymbols}{'164}{MnLargeSymbols}{'164}
\DeclareMathDelimiter{\rrangle}{\mathclose}%
                     {MnLargeSymbols}{'171}{MnLargeSymbols}{'171}
\newcommand{\myparagraph}[1]{\smallskip\noindent{\bf #1}}
\begin{document}

\newcommand{\grants}{%
  This research was supported in part by the Austrian Science Fund (FWF) under
  grants \mbox{S11402-N23}, \mbox{S11407-N23}, \mbox{P23499-N23}, and
  \mbox{Z211-N23}, and by the European Research Council (ERC) under grant
  \mbox{279307}.%
}

\title{Faster Algorithms for\\ Weighted Recursive State Machines}

\author{Krishnendu Chatterjee\inst{1}\and Bernhard Kragl\inst{1}\and\\ Samarth Mishra\inst{2}\and Andreas Pavlogiannis\inst{1}}
\institute{IST Austria, Klosterneuburg, Austria \and IIT Bombay, Mumbai, India}

\maketitle

\begin{abstract}
  Pushdown systems (PDSs) and recursive state machines (RSMs), which are
  linearly equivalent, are standard models for interprocedural analysis.
  Yet RSMs are more convenient as they (a)~explicitly model function calls and
  returns, and (b)~specify many natural parameters for algorithmic analysis,
  e.g., the number of entries and exits.
  We consider a general framework where RSM transitions are labeled from a
  semiring and path properties are algebraic with semiring operations, which can
  model, e.g., interprocedural reachability and dataflow analysis problems.

  Our main contributions are new algorithms for several fundamental problems. 
  As compared to a direct translation of RSMs to PDSs and the best-known
  existing bounds of PDSs, our analysis algorithm improves the complexity for
  finite-height semirings (that subsumes reachability and standard dataflow
  properties). 
  We further consider the problem of extracting distance values from the
  representation structures computed by our algorithm, and give efficient
  algorithms that distinguish the complexity of a one-time preprocessing from
  the complexity of each individual query.
  Another advantage of our algorithm is that our improvements carry over to the
  concurrent setting, where we improve the best-known complexity for the
  context-bounded analysis of concurrent RSMs.
  Finally, we provide a prototype implementation that gives a significant
  speed-up on several benchmarks from the SLAM/SDV project.
\end{abstract}

\section{Introduction}
\label{sec:introduction}

\myparagraph{Interprocedural analysis.}
One of the classical algorithmic analysis problems in programming languages is
the interprocedural analysis. 
The problem is at the heart of several key applications, ranging from alias
analysis, to data dependencies (modification and reference side effect), to
constant propagation, to live and use
analysis~\cite{Reps95,Sagiv96,Callahan86,Grove93,Land91,Knoop96,Cousot77,Giegerich81,Knoop92}.
In seminal works~\cite{Reps95,Sagiv96} it was shown that a large class of
interprocedural dataflow analysis problems can be solved in polynomial time. 

\myparagraph{Models for interprocedural analysis.}
Two standard models for interprocedural analysis are \emph{pushdown systems} (or
finite automata with stacks) and \emph{recursive state machines 
 (RSMs)}~\cite{AlurBEGRY05,HBMC:Procedural}. 
An RSM is a formal model for control flow graphs of programs with recursion. 
We consider RSMs that consist of modules, one for each method or function that
has a number of entry nodes and a number of exit nodes, and each module contains
boxes that represent calls to other modules. 
A special case of RSMs with a single entry and a single exit node for every
module (\emph{SESE RSMs}, aka \emph{supergraph} in~\cite{Reps95}) has also been
considered.
While pushdown systems and RSMs are linearly equivalent (i.e., there is a linear
translation from one model to the other and vice versa), there are two distinct
advantages of RSMs.
First, the model of RSMs closely resembles the problems of programming languages
with explicit function calls and returns, and hence even its special cases such
as SESE RSMs has been considered to model many applications.
Second, the model of RSMs provides many parameters, such as the number of entry
and exit nodes, and the number of modules, and better algorithms can be
developed by considering that some parameters are small.
Typically the SESE RSMs can model data-independent interprocedural analysis,
whereas general RSMs can model data dependency as well.
For most applications, the number of entries and exits of a module, usually
represents the input parameters of the module. 

\myparagraph{Semiring framework.}
We consider a general framework to express computation properties of RSMs where
the transitions of an RSM are labeled from a semiring. 
The labels are referred to as weights.
A {\em computation} of an RSM executes transitions between configurations
consisting of a node (representing the current control state) and a stack of
boxes (representing the current calling context).
To express properties of interest we need to define how to assign weights to
computations, i.e., to accumulate weights \emph{along} a computation, and how to
assign weights to sets of computations, i.e., to combine weights \emph{across} a
set of computations.
The weight of a given computation is the semiring product of the weights on the
individual transitions of the computation, and the weight of a given set of
computations is the semiring plus of the weights of the individual computations
in the set. 
For example, (i)~with the Boolean semiring (with semiring product as AND, and
semiring plus as OR) we express the reachability property; (ii)~with a Dataflow
semiring we can express problems from dataflow analysis.
One class of such problems is given by the IFDS/IDE
framework~\cite{Reps95,Sagiv96} that considers the propagation of dataflow facts
along distributive dataflow functions (note that the IFDS/IDE framework only
considers SESE RSMs).
Hence the large and important class of dataflow analysis problems that can be
expressed in the IFDS/IDE framework can also be expressed in our framework.
Pushdown systems with semiring weights have also been extensively considered in
the literature~\cite{Reps05,LalRB05,RepsLK07,LalR08}.

\myparagraph{Problems considered.}
We consider the following basic \emph{distance problems}.
\begin{compactitem}
\item {\em Configuration distance.}
Given a set of {\em source} configurations and a {\em target} configuration, the
{\em configuration distance} is the weight of the set of computations that start
at some source configuration and end in the target configuration. 
In the {\em configuration distance problem} the input is a set of source
configurations and the output is the configuration distance to all reachable
configurations.

\item {\em Superconfiguration distance.}
We also consider a related problem of {\em superconfiguration distance}.
A {\em superconfiguration} represents a sequence of modules, rather than a sequence
of invocations. 
Intuitively, it does not consider the sequence of function calls, but only which 
functions were invoked.
This is a coarser notion than configurations and allows for fast overapproximation.
The superconfiguration distance problem is then similar to the configuration distance
problem, with configurations replaced by superconfigurations.

\item {\em Node distance.}
Given a set of source configurations and a target node, the {\em node distance}
is the weight of the set of computations that start at some source configuration
and end in a configuration with the target node (with arbitrary stack). 
In the {\em node distance problem} the input is a set of source configurations
and the output is the node distance to all reachable nodes.
\end{compactitem}

\myparagraph{Symbolic representation.}
A core ingredient for solving distance problems is the symbolic representation
of sets of RSM configurations and their manipulation.
Given a symbolic representation of the set of initial configurations, we provide
a two step approach to solve the distance problems.
In step one we compute a symbolic representation of the set of all
configurations reachable from the initial configurations.
Furthermore, the transitions in the representation are annotated with
appropriate semiring weights to capture the various distances described above.
In step two we query the computed representation for the required distances.
Thus we make the important distinction between the complexity of a
\emph{one-time} preprocessing and the complexity of every \emph{individual
  query}.

\myparagraph{Concurrent RSMs.} 
While reachability is the most basic property, the study of pushdown systems and
RSMs with the semiring framework is the fundamental quantitative extension of
the basic problem. 
An orthogonal fundamental extension is to study the reachability property in a
{\em concurrent} setting, rather than the sequential setting.
However, the reachability problem in concurrent RSMs (equivalently concurrent
pushdown systems) is undecidable~\cite{Ramalingam00}. 
A very relevant problem to study in the concurrent setting is to consider
context-bounded reachability, where at most $k$ context switches are
allowed.
The context-bounded reachability problem is both decidable~\cite{QadeerR05} and
practically relevant~\cite{MusuvathiQ07,MusuvathiQBBNN08}.

\myparagraph{Previous results.}
Many previous results have been established for pushdown systems, and the
translation of RSMs to pushdown systems implies that similar results carry over
to RSMs as well.
We describe the most relevant previous results with respect to our results.
For an RSM $\RSM$, let $|\RSM|$ denote its size, $\theta_e$ and $\theta_x$ the
maximum number of entries and exits, respectively, and $f$ the number of modules. 
The existing results for weighted pushdown systems over semirings of height
$\Height$~\cite{Schwoon02,Reps05} along with the linear translation of RSMs to
pushdown systems~\cite{AlurBEGRY05} gives an
$O(\Height \cdot |\RSM| \cdot \theta_e \cdot \theta_x\cdot f )$-time algorithm for the
configuration and node distance problems for RSMs.
The previous results for context-bounded reachability of concurrent pushdown systems~\cite{QadeerR05}
applied to concurrent RSMs gives the following complexity bound: 
$O(|\CRSM|^5 \cdot \theta^{||~5}_x\cdot n^k\cdot |G|^k)$, where 
$|\CRSM|$ is the size of the concurrent RSM,   
$\theta^{||}_x$ is the number of exit nodes, 
$n$ is the number of component RSMs,
$G$ is the global part of the concurrent RSM,
and $k$ is the bound on the number of context switches.

\begin{table}[t]
\centering
\scriptsize
\renewcommand{\arraystretch}{1.2}
\begin{tabular}{|l|ll|ll|}
\hline
&\multicolumn{2}{c|}{\bf Sequential}&\multicolumn{2}{c|}{\bf Concurrent}\\
\hline
\hline
Existing & $H\cdot |\RSM|\cdot \theta_e\cdot \theta_x\cdot f$ & \cite{Schwoon02,Reps05} & $|\CRSM|^5 \cdot \theta^{||~5}_x\cdot n^k\cdot |G|^k$ & \cite{QadeerR05}\\
\hline
Our result & $\Height\cdot (|\RSM|\cdot \theta_e + |\Call|\cdot \theta_e\cdot \theta_x)$ & [Theorem~\ref{them:finite-height}] & $|\CRSM|\cdot \theta_e^{||}\cdot  \theta_x^{||}\cdot  n^k\cdot |G|^{k+2}$ & [Theorem~\ref{them:concurrent}]\\
\hline
\end{tabular}
\caption{Asymptotic time complexity of computing configuration automata.}
\label{tab:intro1}
\end{table}


\begin{table*}[t]
\begin{ThreePartTable}
\centerline{
\scriptsize
\renewcommand{\arraystretch}{1.2}
\begin{tabular}{|c|c|c|c|c|c|c|}
\hline
\multicolumn{5}{|c|}{Semiring} & \multicolumn{2}{c|}{RSM}\\
\hline
General & Boolean & Constant size & \multicolumn{2}{c|}{Size $|D|$\tnote{*}} & \multicolumn{2}{c|}{Sparse\tnote{$\dagger$}}\\
\hline
Query   & Query   & Query         & Preprocess & Query                       & Preprocess & Query\\
\hline
\hline
$n\cdot \theta_e^2$ &
$|\RSM|\cdot \theta_e \cdot\frac{n}{\log n}$ &
$n\cdot \frac{\theta_e^2}{\log \theta_e}$ &
$|\RSM|\cdot \theta_e^{1+\eps \cdot \log |D|}$ &
$n\cdot \frac{\theta_e^2}{\eps^2\cdot \log^2 \theta_e}$ &
$|\RSM|\cdot \theta_e^{\omega-1}\cdot x$ &
$n\cdot \left \lceil \frac{\theta_e^2}{\log x} \right \rceil$\\
\hline
\end{tabular}
}
\caption{Asymptotic time complexity of answering a configuration/superconfiguration
  distance query of size $n$. 
  Preprocess time refers to additional preprocessing after the configuration
  automaton is constructed.}
\label{tab:intro2}
\begin{tablenotes}
\item[*] For any fixed $\eps>0$.
\item[$\dagger$] In a sparse RSM every module only calls a constant number of other
  modules, and the result applies only to superconfiguration distances. 
  The parameter $x$ has to satisfy $x = O(\poly(|\RSM|))$, and
  $\omega$ is the smallest constant required for multiplying two square matrices
  of size $m\times m$ in time $O(m^\omega)$ (currently $\omega\simeq 2.372$).
\end{tablenotes}
\end{ThreePartTable}
\end{table*}


\myparagraph{Our contributions.}
Our main contributions are as follows:

\begin{compactenum}
\item[1.] 
  \emph{Finite-height semirings.}
  We present an algorithm for computing configuration and node distance problems
  for RSMs over semirings with finite height $\Height$ with running time
  $O(\Height\cdot (|\RSM|\cdot \theta_e + |\Call|\cdot \theta_e\cdot
  \theta_x))$, where $|\Call|$ is the number of call nodes.
  The algorithm we present constructs the symbolic representations from which the
  distances can be extracted.
  Thus our algorithm improves the current best-known algorithms by a factor of
  $\Omega((|\RSM|\cdot f)/(\theta_x + |\Call|))$ (Table~\ref{tab:intro1}) 
  (also see Remark~\ref{rem:finite-height} for details).

\item[2.] 
  \emph{Distance queries.}
  Once a symbolic representation is constructed, it can be used for extracting
  distances.
  We present algorithms which given a configuration query of size $n$, return
  the distance in $O(n \cdot \theta_e^2)$ time. 
  Furthermore, we present several improvements for the case when the semiring
  has a small domain.
  Finally, we show that when the RSM has a sparse call graph, we can obtain a
  range of tradeoffs between preprocessing and querying times.
  Our results on distance queries are summarized in Table~\ref{tab:intro2}.

\item[3.]
 \emph{Concurrent RSMs.} 
 For the context-bounded reachability of concurrent RSMs we present an algorithm with time bound
 $O(|\CRSM|\cdot \theta_e^{||}\cdot  \theta_x^{||}\cdot  n^k\cdot |G|^{k+2})$. 
 Thus our algorithm significantly improves the current best-known algorithm 
 (Table~\ref{tab:intro1}).

\item[4.] 
  \emph{Experimental results.}
  We experiment with a basic prototype implementation for our algorithms.
  Our implementation is an explicit (rather than symbolic) one.
  We compare our implementation with jMoped~\cite{jmoped}, which is a leading
  and mature tool for weighted pushdown systems, on several real-world
  benchmarks coming from the SLAM/SDV project~\cite{BallBLKL10,BallR00}.
  We consider the basic reachability property (representative for finite-height
  semirings) for the sequential setting.
  Our experimental results show that our algorithm provides significant
  improvements on the benchmarks compared to jMoped.
\end{compactenum}

\myparagraph{Technical contribution.} 
The main technical contributions are as follows:
\begin{compactitem}
\item We show how to combine (i)~the notion of {\em configuration automata} as a
  \emph{symbolic} representation structure for sets of configurations, and
  (ii)~entry-to-exit \emph{summaries} to avoid redundant computations, and
  obtain an efficient dynamic programming algorithm for various distance
  problems in RSMs over finite-height semirings.

\item Configuration and superconfiguration distances are extracted using graph
  traversal of configuration automata.
  When the semiring has small domain, we obtain several speedups by exploiting
  advances in matrix-vector multiplication.
  Finally, the speedup of superconfiguration distance extraction on sparse RSMs
  is achieved by devising a Four-Russians type of algorithm, which spends some
  polynomial preprocessing time in order to allow compressing the query input in
  blocks of logarithmic length.
\end{compactitem}

\bigskip\noindent All proofs are provided
in \appendixreftr{Appendix~\ref{app:proofs-sequential}
  and~\ref{app:proofs-concurrent}}.


\section{Preliminaries}
\label{sec:preliminaries}

In this section we present the necessary definitions of recursive state machines
(RSMs) where every transition is labeled with a value (or weight) from an
appropriate domain (semiring).
Then we formally state the problems we study on weighted RSMs.


\myparagraph{Semirings.}
An \emph{idempotent semiring} is a quintuple $\semiring$, where $\Weights$ is a
set called the \emph{domain}, $\zero$ and $\one$ are elements of $\Weights$, and
$\oplus$ (the \emph{combine} operation) and $\otimes$ (the \emph{extend}
operation) are binary operators on $\Weights$ such that
\begin{compactenum}
\item $\tup{\Weights,\oplus,\zero}$ is an idempotent commutative monoid with neutral
  element $\zero$,
\item $\tup{\Weights,\otimes,\one}$ is a monoid with neutral element $\one$,
\item $\otimes$ distributes over $\oplus$,
\item $\zero$ is an annihilator for $\otimes$, i.e.,
  $a \otimes \zero = \zero \otimes a = \zero$ for all $a \in \Weights$.
\end{compactenum}
An idempotent semiring has a canonical partial order $\sleq$, defined by
\begin{align*}
  a \sleq b \iff a \oplus b = a.
\end{align*}
Furthermore, this partial order is \emph{monotonic}, i.e., for all
$a,b,c \in \Weights$
\begin{align*}
  a \sleq b &\implies a \oplus c \sleq b \oplus c, \\
  a \sleq b &\implies a \otimes c \sleq b \otimes c, \\
  a \sleq b &\implies c \otimes a \sleq c \otimes b.
\end{align*}
The \emph{height} $H$ of an idempotent semiring is the length of the longest
descending chain in $\sleq$.
In the rest of the paper we will only write semiring to mean an idempotent
finite-height semiring.

\begin{remark}\label{rem:finite-height_1}
  Instead of finite height, the more general \emph{descending chain
    condition} would be sufficient for our purposes.
  This only requires that there are no infinite descending chains in $\sleq$,
  but there is not necessarily a finite height $\Height$.
\end{remark}


\myparagraph{Recursive State Machines (informally).}
Intuitively, an RSM is a collection of finite automata, called modules, such
that computations consist of ordinary local transitions within a module as well
as calls to other modules, and returns from other modules.
For this, every module has a well-defined interface of entry and exit nodes.
Calls to other modules are represented by boxes, which have call and return
nodes corresponding to the respective entry and exit nodes of the called module.

Unlike pushdown automata (PDAs), there is no explicit stack manipulation in
RSMs.
Instead a call stack is maintained implicitly along computations as follows.
When a call node of a box is reached, the control is passed to the respective
entry node of the called module and the box is pushed onto the top of the stack.
When an exit node of a module is reached, a box is popped off from the top of
the stack and the control is passed to the corresponding return node of the box.
Hence, the stack is a sequence of boxes representing the current calling context
and a configuration in a computation of an RSM is a node together with a
sequence of boxes.

\myparagraph{Recursive State Machines (formally).}
A \emph{recursive state machine (RSM)} over a semiring $\semiring$ is a tuple
$\RSM=\tup{\Module_1,\dots,\Module_k}$, where every \emph{module}
$\Module_i = \tup{B_i,Y_i,\Nodes_i,\delta_i,w_i}$ is given by
\begin{compactitem}
\item a finite set $B_i$ of \emph{boxes},
\item a mapping $Y_i : B_i \Map \set{1,\dots,k}$,
\item a finite set $\Nodes_i = \In_i \cup \En_i \cup \Ex_i \cup \Call_i \cup \Ret_i$ of
  \emph{nodes}, partitioned into
  \begin{compactitem}
  \item \emph{internal} nodes $\In_i$,
  \item \emph{entry} nodes $\En_i$,
  \item \emph{exit} nodes $\Ex_i$,
  \item \emph{call} nodes $\Call_i = \set{\cnd{b}{e} \mid b \in B_i \text{ and } e \in \En_{Y_i(b)}}$,
  \item \emph{return} nodes $\Ret_i  = \set{\rnd{b}{x} \mid b \in B_i \text{ and } x \in \Ex_{Y_i(b)}}$,
  \end{compactitem}
\item a \emph{transition relation}
  $\delta_i \subseteq (\In_i \cup \En_i \cup \Ret_i) \times (\In_i \cup \Ex_i \cup \Call_i)$,
\item a \emph{weight function} $\rsmwt_i : \delta_i \Map \Weights$, with $\rsmwt_i(u,x)=\one$ for every exit node $x\in \Ex_i$.
\end{compactitem}
We write $B$ for $\bigcup_{i=1}^k B_i$, and similarly for $\Nodes$, $\In$,
$\En$, $\Ex$, $\Call$, $\Ret$, $\delta$, $\rsmwt$.
To measure the size of an RSM we let
$|\RSM| = \max(|\Nodes|, \sum_i |\delta_i|)$.
A major source of complexity in analysis problems for RSMs is the number of
entry and exit nodes of the modules.
Throughout the paper we express complexity with respect to the \emph{entry
  bound} $\theta_e = \max_{1 \leq i \leq k} |\En_i|$ and the \emph{exit bound}
$\theta_x = \max_{1 \leq i \leq k} |\Ex_i|$, i.e., the maximum number of entries
and exits, respectively, over all modules.
Note that the restriction on the weight function to assign weight $\one$ to
every transition to an exit node is wlog, as any weighted RSM that does not
respect this can be turned into an equivalent one that does, with only a
constant factor increase in its size.

\myparagraph{Stacks.}
A \emph{stack} is a sequence of boxes $S = b_1 \dots b_r$, where the first box
denotes the top of the stack; and $\varepsilon$ is the empty stack.
The \emph{height} of $S$ is $|S| = r$, i.e, the number of boxes it contains.
For a box $b$ and a stack $S$, we denote with $bS$ the concatenation of $b$ and
$S$, i.e., a push of $b$ onto the top of $S$.

\myparagraph{Configurations and transitions.}  
A \emph{configuration of an RSM $\RSM$} is a tuple $\conf{u}{S}$, where
$u \in \In \cup \En \cup \Ret$ is an internal, entry, or return node, and $S$ is a
stack.
For $S = b_1 \dots b_r$, where $b_i \in B_{j_i}$ for $1 \leq i \leq r$ and some
$j_i$, we require that $Y_{j_i}(b_i) = j_{i-1}$ for $1 < i \leq r$, as well as
$u \in \Nodes_{Y_{j_1}(b_1)}$.
This corresponds to the case where the control is inside the module of node $u$,
which was entered via box $b_1$ from module $\Module_{j_1}$, which was entered
via box $b_2$ from module $\Module_{j_2}$, and so on.
  
We define a transition relation $\Trans$ over configurations and a
corresponding weight function $w : {\Trans} \Map \Weights$ , such that
$\conf{u}{S} \Trans \conf{u'}{S'}$ with
$\rsmwt(\conf{u}{S},\conf{u'}{S'}) = v$ if and only if there exists a
transition $t \in \delta_i$ in $\RSM$ with $\rsmwt_i(t)=v$ and one of the
following holds:
\begin{compactenum}
\item \textit{Internal transition:}
      $u' \in \In_i$,
      $t = \tup{u,u'}$,
      and $S' = S$.
\item \textit{Call transition:}
      $u' = e \in \En_{Y_i(b)}$ for some box $b \in B_i$,
      $t = \tup{u,\cnd{b}{e}}$,
      and $S' = bS$.
\item \textit{Return transition:}
      $u' = \rnd{b}{x} \in R_i$ for some box $b \in B_i$ and exit node $x \in \Ex_{Y_i(b)}$,
      $t = \tup{u,x}$,
      and $S = bS'$.
\end{compactenum}
Note that we follow the convention that a call immediately enters the called
module and a return immediately returns to the calling module.
Hence, the node of a configuration can be an internal node, an entry node, or a
return node, but not a call node or an exit node.

\myparagraph{Computations.}
A \emph{computation} of an RSM $\RSM$ is a sequence of configurations
$\pi = c_1,\dots,c_n$, such that $c_i \Trans c_{i+1}$ for every $1 \leq i < n$.
We say that $\pi$ is a computation from $c_1$ to $c_n$, of length $|\pi| = n-1$,
and of weight $\compwt(\pi) = \bigotimes_{i=1}^{n-1} \rsmwt(c_i,c_{i+1})$ (the
empty extend is $\one$).
We write $\pi : c \Reach{*} c'$ to denote that $\pi$ is a computation from $c$
to $c'$ of any length.
A computation $\pi: c \Reach{*} c'$ is called \emph{non-decreasing} if the stack
height of every configuration of $\pi$ is at least as large as that of $c$ (in
other words, the top stack symbol of $c$ is never popped in $\pi$).
The computation $\pi$ is called \emph{same-context} if it is non-decreasing, and
$c$ and $c'$ have the same stack height.
A computation that cannot be extended by any transition is called a
\emph{halting} computation.
For a set of computations $\Comps$ we define its weight as
$\Compswt(\Comps) = \bigoplus_{\pi \in \Comps} \compwt(\pi)$ (the empty combine
is $\zero$).
For a configuration $c$ and a set of configurations $R$ we denote by
$\Comps(R,c)$ the set of all computations from any configuration in $R$ to $c$.
Here, and for similar purposes below, we will use the convention to write
$\Comps(c,c')$ instead of $\Comps(\set{c},c')$.

\begin{example}
  Figure~\ref{fig:rsm} shows an RSM $\RSM=\tup{\Module_1,\Module_2}$
  that consists of two modules $\Module_1$ and
  $\Module_2$.
  The modules are mutually recursive, since box $b_1$ of module $\Module_1$
  calls module $\Module_2$, and box $b_2$ of module $\Module_2$ calls module
  $\Module_1$.
  A possible computation of $\RSM$ is
  \begin{equation}\label{eq:computation}
  \begin{array}{l}
  \conf{e_1^1}{\eps} \Trans[w_1]
  \conf{e_2}{b_1} \Trans[w_5]
  \conf{e_1^1}{b_2 b_1} \Trans[w_1]
  \conf{e_2}{b_1 b_2 b_1} \Trans[w_6]
  \conf{e_1^2}{b_2 b_1 b_2 b_1} \Trans[w_2]\\
  \conf{u_1}{b_2 b_1 b_2 b_1} \Trans[w_4]
  \conf{\rnd{b_2}{x_1}}{b_1 b_2 b_1} \Trans[w_7]
  \conf{\rnd{b_1}{x_2}}{b_2 b_1} \Trans[w_3]
  \conf{u_1}{b_2 b_1} \Trans[w_4]\\
  \conf{\rnd{b_2}{x_1}}{b_1} \Trans[w_7]
  \conf{\rnd{b_1}{x_2}}{\eps} \Trans[w_3]
  \conf{u_1}{\eps}.
  \end{array}
  \end{equation}
\end{example}

\begin{figure}[tb]
\centering
\input{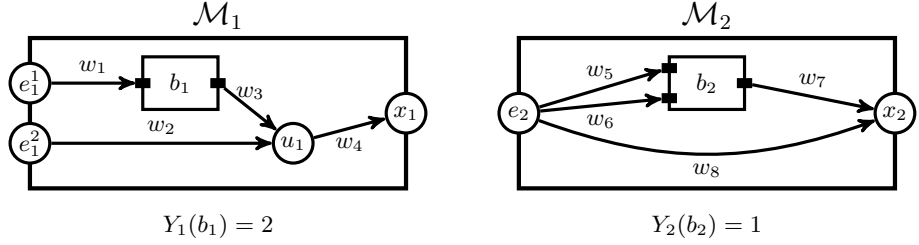}
\begin{tikzpicture}
\def\bend{20}

\node[] at (0, 1.3) {\large $\Module_1$};

\node[csm]  (csm1) at ( 0   ,  0  ) {};
\node[node] (e11)  at (-2.5 ,  0.4) {$e_1^1$};
\node[node] (e12)  at (-2.5 , -0.4) {$e_1^2$};
\node[box]  (b1)   at (-0.5 ,  0.4) {$b_1$};
\node[port] (b1e1) at (-1   ,  0.4) {};
\node[port] (b1x1) at ( 0   ,  0.4) {};
\node[node] (u1)   at ( 1   , -0.4) {$u_1$};
\node[node] (x1)   at ( 2.5 ,  0  ) {$x_1$};

\draw[trans] (e11)  to node [above, midway] {$w_1$} (b1e1);
\draw[trans] (e12)  to node [above, midway] {$w_2$} (u1);
\draw[trans] (b1x1) to node [above, midway] {$w_3$} (u1);
\draw[trans] (u1)   to node [below, midway] {$w_4$} (x1);

\node[] at (0, -1.5) {$Y_1(b_1) = 2$};

\def\ydisp{0}
\def\xdisp{6.5}

\node[] at (0+\xdisp, 1.3+\ydisp) {\large $\Module_2$};

\node[csm]  (csm2) at ( 0   + \xdisp , 0   + \ydisp) {};
\node[node] (e2)   at (-2.5 + \xdisp , 0   + \ydisp) {$e_2$};
\node[box]  (b2)   at ( 0   + \xdisp , 0.4 + \ydisp) {$b_2$};
\node[port] (b2e1) at (-0.5 + \xdisp , 0.6 + \ydisp) {};
\node[port] (b2e2) at (-0.5 + \xdisp , 0.2 + \ydisp) {};
\node[port] (b2x1) at ( 0.5 + \xdisp , 0.4 + \ydisp) {};
\node[node] (x2)   at ( 2.5 + \xdisp , 0   + \ydisp) {$x_2$};

\draw[trans] (e2)   to node [below, midway] {$w_6$} (b2e2);
\draw[trans] (e2)   to node [above, midway] {$w_5$} (b2e1);
\draw[trans] (b2x1) to node [above, midway] {$w_7$} (x2);
\draw[trans, bend right=\bend] (e2) to node [below, midway] {$w_8$} (x2);

\node[] at (0+\xdisp, -1.5+\ydisp) {$Y_2(b_2) = 1$};
\end{tikzpicture}
\caption{Example of a weighted RSM that consists of two modules
  with mutual recursion.}\label{fig:rsm}
\end{figure}



\myparagraph{Distance problems.}
Given a set of configurations $R$, the set of configurations that are
\emph{reachable} from $R$ is
\[
\post^*(R) = \set{ c \mid \exists c_0 \in R : c_0 \Reach{*} c }.
\]
Instead of mere reachability, we are interested in the following distance
metrics that aggregate over computations from $R$ using the semiring combine and
hence are expressed as semiring values.
\begin{compactitem}

\item {\em Configuration distance.}
The \emph{configuration distance} from $R$ to $c$ is defined as
\[
\rsmdist(R,c) = \Compswt(\Comps(R,c)).
\]
That is, we take the combine over the weights of all computations from a
configuration in $R$ to $c$.
Naturally, for configurations $c$ not reachable from $R$ we have
$\rsmdist(R,c) = \zero$.

\item {\em Superconfiguration distance.}
A \emph{superstack} is a sequence of modules
$\ov{S} = \Module_1 \dots \Module_r$.
A stack $S = b_1 \dots b_r$ \emph{refines} $\ov{S}$ if $b_i \in B_i$ for all
$1 \leq i \leq r$, i.e., the $i$-th box of $S$ belongs to the $i$-th module of
$\ov{S}$.
A \emph{superconfiguration} of $\RSM$ is a tuple $\conf{u}{\ov{S}}$.
Let
$\sem{ \conf{u}{\ov{S}} } = \set{ \conf{u}{S} \mid S \text{
    refines } \ov{S} }$.
The \emph{superconfiguration distance} from $R$ to a superconfiguration
$\ov{c}$ is defined as
\[
\rsmdist(R,\ov{c}) = \Compswt_{c \in \sem{ \ov{c}}} \rsmdist(R, c)
\]
The superconfiguration distance is only concerned with the sequence
of modules that have been used to reach the node $u$, rather than the concrete
sequence of boxes as in the configuration distance.
This is a coarser notion than configuration and allows for fast overapproximation.

\item {\em Node and same-context distance.}
The \emph{node distance} of a node $u$ from $R$ is defined as
\[
\rsmdist(R, u) = \Compswt_{c=\conf{u}{S}} \rsmdist(R, c)
\]
where $S$ ranges over stacks of $\RSM$.
Finally, the \emph{same-context node distance} of a node $u$ in module $\Module_i$ is defined as
\[
\rsmdist(\Module_i, u)=\Compswt_{e\in \En_i} \rsmdist(\conf{e}{\eps}, \conf{u}{\eps}).
\]
Intuitively, the node distance minimizes over all possible ways (i.e., stack
sequences) to reach a node, and the same-context problem considers nodes in the
same module that can be reached with empty stack.
\end{compactitem}

\myparagraph{Relevance.}
We discuss the relevance of the model and the problems we consider in program
analysis.
A prime application area of our framework is the analysis of procedural
programs.
Computations in an RSM correspond to the interprocedurally valid paths of a
program.
The distance values defined above allow to obtain information at different
levels of granularity, depending on the requirement for a particular analysis.
MEME (multi-entry, multi-exit) RSMs naturally arise in the model checking of 
procedural programs, where every node represents a combination of control location 
and data.
Checking for reachability, usually of an error state, requires only the simple
Boolean semiring.
On the other hand, interprocedural data flow analysis problems, like in IFDS/IDE,
are usually cast on SESE (single-entry, single-exit) RSMs 
(the control flow graph of the program) using richer semirings.
Our framework captures both of these important applications, and furthermore
allows a hybrid approach of modeling program information both in the state space
of the RSM as well as in the semiring.


\section{Configuration Distance Algorithm}
\label{sec:rsm_sequential}

In this section we present an algorithm which takes as input an RSM $\RSM$ and a
representation of a \emph{regular set} of configurations $R$, and computes a
representation of the set of reachable configurations $\post^*(R)$ that allows
the extraction of the distance metrics defined above.
In Section~\ref{sec:sequential:automata} we introduce configuration automata
as representation structures for regular sets of configurations.
In Section~\ref{sec:sequential:bounded} we present an algorithm for RSMs over
finite-height semirings.
The algorithm saturates the input configuration automaton with additional
transitions and assigns the correct weights via a dynamic programming approach
that gradually relaxes transition weights from an initial overapproximation.
We exploit the monotonicity property in idempotent semirings which allows to
factor the computation into subproblems, and hence corresponds to the
\emph{optimal substructure} property of dynamic programming.
Although a transition might have to be processed multiple times, the finite
height of the semiring prevents a transition from being relaxed indefinitely.
Here we show that the final configuration automata constructed
by our algorithms correctly capture configuration distances.
The extraction of distance values is considered in 
Section~\ref{sec:distance_extraction}.

\subsection{Configuration Automata}
\label{sec:sequential:automata}

In general, like $R$, the set $\post^*(R)$ is infinite.
Hence we make use of a representation of regular sets of configurations as the
language accepted by configuration automata, defined below.
The main feature of a regular set of configurations $R$ is its closure under
$\post^*$. 
That is, $\post^*(R)$ is also a regular set of configurations and can be
represented by a configuration automaton.

\myparagraph{Intuition.}
Every state in a configuration automaton corresponds to a node in the RSM.
In order to represent arbitrary regular sets of configurations we must allow the
replication of states with the same node. 
Therefore we annotate every state with a \emph{mark} (see Remark~\ref{rem:marks}
for details).
Transitions are of two types: (i)~$\eps$-transitions pointing from a node $u$ to
an entry node $e$ and labeled with $\eps$, denoting that a computation reaching
$u$ entered the module of $u$ via entry $e$, and (ii)~b-transitions pointing
from an entry node $e$ to another entry node $e'$ and labeled with a box $b$,
corresponding to a call transition $\tup{u,\cnd{b}{e}}$ in the module of $e'$ in
the RSM.
Reading the labels along a path in the automaton yields a stack.

In addition to the labeling with boxes we label every transition of a
configuration automaton with a semiring value.
In the final configuration automata constructed by our algorithms, every run
generates a configuration $c$ and thereby captures a certain subset
$\Comps \subseteq \Comps(R,c)$ of computations from the initial set of
configurations $R$ to $c$.
The weight of the run equals the combine over the weight of the computations in
$\Comps$.
The combine over the weights of all runs in the automaton that generate $c$
equals the combine over the weights of all computations from $R$ to $c$, i.e.,
the configuration distance $\rsmdist(R, c)$.
Since the transitions in a configuration automaton are essentially reversed
transitions of the RSM (and the extend operation is not commutative), the weight
of a run is given by the extend of the transitions in reversed order.

\myparagraph{Configuration automata.}
Let $\Marks$ be a countably infinite set of \emph{marks}.
A \emph{configuration automaton for an RSM $\RSM$}, also called an
\emph{$\RSM$-automaton}, is a weighted finite automaton
$\Automaton = \tup{Q,B,{\trans},I,F,\autwt}$, where
\begin{compactitem}
\item $Q \subseteq (\In \cup \En \cup \Ret) \times \Marks$ is a finite set
of \emph{states},
\item $B$ (the boxes of $\RSM$) is the transition alphabet,
\item ${\trans} \subseteq Q \times ({B \cup \set{\eps}}) \times Q$ is a
transition relation, such that every transition has one of the following
forms:
\begin{compactitem}
\item \textit{b-transition:}
$\state{e}{m_e} \trans[b] \state{e'}{m_{e'}}$, where
$b \in B_i$ for some $i$, $e \in \En_{Y_i(b)}$, and $e' \in \En_i $,
\item \textit{$\eps$-transition:}
$\state{u}{m_u} \trans[\eps] \state{e}{m_e}$, where
$e \in \En_i$ for some $i$, and either $u \in \In_i \cup \Ret_i$, or $u = e$,
\end{compactitem}
\item $I \subseteq Q$ is a set of \emph{initial states},
\item $F \subseteq Q$ and $F \subseteq \En \times \Marks$ is a set of \emph{final states},
\item $\autwt : {\trans} \Map \Weights$ is a \emph{weight function} that assigns a
semiring weight to every transition.
\end{compactitem}

\begin{remark}[Marks]\label{rem:marks}
  The marks in the states of a configuration automaton are introduced to support
  the general setting of representing an arbitrary set of configurations, e.g.,
  with stacks that are not even reachable in the RSM.
  Since every state is tied to an RSM node, the marks allow to have multiple
  ``copies'' of the same node in unrelated parts of the automaton.
  Furthermore, our algorithm (Section~\ref{sec:sequential:bounded}) introduces a
  \emph{fresh mark} to recognize when it can safely store \emph{entry-to-exit
    summaries}.
  For the common setting of starting the analysis from the entry nodes of a main
  module with empty stack, marks are not necessary and can be elided.
\end{remark}

\myparagraph{Runs and regular sets of configurations.}
A \emph{run} of a configuration automaton $\Automaton$ is a sequence
$\lambda = t_1,\dots,t_{n}$, such that there are states $q_1,\dots,q_{n+1}$ and each
$t_i = q_i \trans[\sigma_i] q_{i+1}$ is a transition of $\Automaton$ labeled
with $\sigma_i$.
We say that $\lambda$ is a run from $q_1$ to $q_{n+1}$, of length $|\lambda| = n$,
labeled by $S = \sigma_1 \dots \sigma_n$, and of weight
$\runwt(\lambda) = \bigotimes_{i=n}^{1} \autwt(t_i)$
%
(note that the weights of the transitions are extended in reverse order).
We write $\lambda : q \reach[S / v]{*} q'$ to denote that $\lambda$ is a run
from $q$ to $q'$ of any length labeled by $S$ and of weight $v$.
We will also use the notation without $v$ if we are not interested in the
weight.
The run $\lambda$ is \emph{accepting} if $q \in I$ and $q' \in F$.
A configuration $\conf{u}{S}$ is \emph{accepted} by $\Automaton$ if there is an
accepting run $\lambda: \state{u}{m} \reach[S]{*} q_f$ for some mark
$m \in \Marks$, and additionally $\runwt(\lambda) \neq \zero$.
We say that two runs are \emph{equivalent} if they accept the same configuration
with the same weight.
For technical convenience we consider that for every state $\state{e}{m_e}$ with
entry node $e \in \En$ there is an $\eps$-self-loop
$\state{e}{m_e} \trans[\eps] \state{e}{m_e}$ with weight $\one$.

The set of all configurations accepted by $\Automaton$ is denoted by
$\lang(\Automaton)$.
A set of configurations $R$ is called \emph{regular} if there exists an
$\RSM$-automaton $\Automaton$ such that $\lang(\Automaton)= R$.
For a configuration $c$ let $\Runs(c)$ be the set of all accepting runs of $c$
and define $\Automaton(c) = \bigoplus_{\lambda \in \Runs(c)} \runwt(\lambda)$
the weight that $\Automaton$ assigns to $c$.

We note that, despite the imposed syntactic restrictions, our definition of
configuration automata is most general in the following sense.

\begin{proposition}
Let $R$ be a set of configurations such that their string representations
is a regular language.
Then there exists a configuration automaton $\Automaton$ such that
$\lang(\Automaton) = R$.
\end{proposition}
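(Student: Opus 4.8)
The plan is to show that an arbitrary NFA $A$ recognizing the string representations of configurations in $R$ can be massaged into a configuration automaton of the required syntactic form without changing the accepted language. A string representation of a configuration $\conf{u}{S}$ with $S = b_1\dots b_r$ is naturally read as $u\, b_1\, b_2 \cdots b_r$, i.e., the node followed by the stack contents top-to-bottom; by the definition of configurations, this is precisely the string read along a run of a configuration automaton (the $\eps$-transition at the start recording the entry $e$ reached, then $b$-transitions tracing the calling context). So the first step is to fix this encoding and note that, since $R$'s string representations form a regular language, there is some NFA $A$ for it.

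The key technical work is a normal-form construction. First I would observe that the only configurations we must represent have a node $u\in \In\cup\En\cup\Ret$ and a stack that is \emph{consistent} in the sense of the configuration definition (box/module matching via the $Y_i$ maps, and $u$ living in the right module); strings violating these constraints are simply not configurations, so we may intersect $A$ with a (small) DFA enforcing consistency — this is where the $\eps$/$b$-transition typing constraints of a configuration automaton come from. Second, I would split each state of $A$ according to which RSM node it "is about": since after reading the first symbol $u$ the relevant node is determined, and after reading each subsequent box $b_i$ the current entry node is forced by the $b$-transition typing rule, a standard product/relabeling makes every state carry a well-defined node. States are then elements of (node)$\times$(mark), with marks drawn from $\Marks$ used merely as the copy index distinguishing the (possibly many) $A$-states mapped to the same node — this is exactly the role of marks described in Remark~\ref{rem:marks}. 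Third, I would reshape transitions into the two allowed shapes: the initial $u$-symbol becomes the leading $\eps$-transition $\state{u}{m_u}\trans[\eps]\state{e}{m_e}$ into the entry node $e$ of $u$'s module (with $u=e$ permitted), and each stack symbol $b_i$ becomes a $b$-transition between entry states; weights can all be set to $\one$ (or to whatever witnesses $\neq\zero$) since at this stage we only care about the language, not distances. Finally, set $I$ to the image of $A$'s initial states and $F$ to the image of $A$'s accepting states, observing that acceptance in $A$ happens after the whole stack is read, i.e., at an entry state, so $F\subseteq\En\times\Marks$ as required; the $\eps$-self-loops on entry states are harmless and can be added freely.

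The main obstacle I anticipate is bookkeeping rather than conceptual: making sure the relabeling that assigns a node to every state is actually well-defined and consistent along \emph{all} accepting runs — i.e., that one cannot reach the same $A$-state once "as node $u$" and once "as node $u'$" with $u\neq u'$ in a way that cannot be separated by state-splitting. The resolution is that we only need to preserve the language, so we may freely take the product of $A$ with the deterministic "position/node tracker" automaton before splitting; any state that would be ambiguous is duplicated by the product, and dead states (those not lying on any accepting run, e.g., where the partial stack is already inconsistent) are pruned. After that, checking that the resulting $\Automaton$ is syntactically a configuration automaton and that $\lang(\Automaton)=R$ is a routine induction on run length matching runs of $\Automaton$ with runs of the (consistency-restricted) $A$.
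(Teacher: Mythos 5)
Your construction is correct, and it is the argument one would expect here; the paper in fact states this proposition without giving a proof, so there is nothing to compare against except the intended folklore construction, which you have reproduced. The essential points are all present: fixing the string encoding $u\,b_1\cdots b_r$, taking the product of the given NFA with the deterministic ``module tracker'' so that every state acquires a well-defined node label (with marks serving as copy indices), absorbing the leading node symbol $u$ into the choice of initial state $\state{u}{m_u}$ plus the leading $\eps$-transition rather than reading it, observing that acceptance necessarily happens at an entry state so that $F\subseteq \En\times\Marks$, and setting all weights to $\one$ so that run weights are nonzero. One phrase is slightly off --- $I$ is not literally ``the image of $A$'s initial states'' but rather the fresh states $\state{u}{m_u}$ introduced to consume the first symbol, with $\eps$-transitions into the images of the $A$-states reachable from $A$'s initial states on reading $u$ --- but your earlier description of the leading $\eps$-transition makes clear you mean exactly this. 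The only caveat worth recording is a degenerate edge case inherent to the proposition itself rather than to your proof: if some module has no entry nodes yet contributes boxes to a stack of a configuration in $R$, no configuration automaton can represent that configuration, since b-transitions must target entry states; both the statement and your argument implicitly assume every relevant module has at least one entry.
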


\subsection{Algorithm for Finite-Height Semirings}
\label{sec:sequential:bounded}

In the following we present algorithm $\ConfDistBounded$ for computing the set
$\post^*(R)$ of a regular set of configurations $R$.
The algorithm operates on an $\RSM$-automaton $\Automaton$ with
$\lang(\Automaton)=R$.
In the end, it has constructed an $\RSM$-automaton $\Automaton_{\post^*}$ such
that $\lang(\Automaton_{\post^*})=\post^*(R)$.
Moreover, the configuration distance $\rsmdist(R,c)$ from $R$ to any
configuration $c$ can be obtained from the labels of $\Automaton_{\post^*}$ as
$\Automaton_{\post^*}(c)$. 
A computation is called \emph{initialized}, if its first configuration is
accepted by the initial configuration automation $\Automaton$.

\myparagraph{Key technical contribution.}
In this work we consider the configuration distance computation.
Using the notion of configuration automata as a \emph{symbolic} representation
structure for regular sets of configurations, the solution of the configuration
distance problem has been previously studied in the setting of (weighted)
pushdown systems~\cite{Schwoon02,Reps05,Bouajjani05}.
One of the main algorithmic ideas for the efficient RSM reachability algorithm
of~\cite{AlurBEGRY05} is to expand RSM transitions and use entry-to-exit
\emph{summaries} to avoid traversing a module more than once.
However, the algorithm in~\cite{AlurBEGRY05} is limited to the node reachability
problem. 
We combine the symbolic representation of configuration automata, along with the
summarization principle, to obtain an efficient algorithm for the
general configuration distance problem on RSMs.

\myparagraph{Intuitive description of $\ConfDistBounded$.}
The intuition behind our algorithm is very simple: it performs a forward search
in the RSM. 
In every iteration it picks a frontier node $u$ and extends the already
discovered computations to $u$ with the outgoing transitions from $u$. 
Depending on the type of outgoing transitions, a new node discovered and added
to the frontier can be (a)~an internal node by following an internal transition,
(b)~the entry node of another module by following a call transition, and (c)~a
return node corresponding to a previously discovered call by following an exit
transition.

In summary, the algorithm simply follows interprocedural paths. 
However, the crux to achieve our complexity is to keep summaries of paths
through a module. 
Whenever we discovered a full (interprocedural) path from an entry $e$ to an
exit $x$, we keep its weight as an upper bound.
Now any subsequently discovered call reaching $e$ does not need to continue the
search from $e$, but short-circuits to $x$ by using the stored summary.

\myparagraph{Preprocessing.}
In order to ease the formal presentation of the algorithm, we consider the
following preprocessing on the initial configuration automaton $\Automaton$.
Let $M \subseteq \Marks$ be the set of marks in the initial automaton and
$\freshMark \in \Marks \setminus M$ a \emph{fresh mark}.
\begin{compactenum}
\item For every node $u \in \In \cup \En \cup \Ret$, we add a new state
  $\state{u}{\freshMark}$ marked with the fresh mark.
  Additionally, all these new states are declared initial.
\item For every initial state $\state{u}{m_u} \in I$ such that there is a call
  transition $t = \tup{u,\cnd{b}{e}} \in \delta_i$ in $\RSM$, for every state
  $\state{e'}{m_{e'}}$ where $e'$ is an entry node of the same module as $u$, we
  add a b-transition $\state{e}{\freshMark} \trans[b] \state{e}{m_{e'}}$ with
  weight $\zero$.
\item For every state $\state{e}{m_e}$ with entry node $e \in \En_i$ and every
  internal or return node $u \in \In_i \cup \Ret_i$ in the same module as $e$,
  we add an $\eps$-transition
  $\state{u}{\freshMark} \trans[\eps] \state{e}{m_e}$ with weight $\zero$.
\end{compactenum}
Essentially the preprocessing a priori adds to $\Automaton$ all possible states
and transitions, so that the algorithm only has to relax those transitions (i.e.,
without adding them first).
Note that the preprocessing only provides for an easier presentation of our
algorithm.
Indeed, in practice it would be impractical to do the full preprocessing and
thus our implementation adds states and transitions to the automaton on the fly.

\myparagraph{Technical description of $\ConfDistBounded$.}
We present a detailed explanation of the algorithm supporting the formal
description given in Algorithm~\ref{alg:rsm_post}.
We require that every transition in the input configuration automaton
$\Automaton$ has weight $\one$, since the configurations in $\lang(\Automaton)$
should not contribute any initial weight to the configuration distance.
The algorithm maintains a \emph{worklist} $\WL$ of weighted transitions either
of the form $\state{u}{m_u} \trans[\eps] \state{e}{m_e}$ or
$\state{e}{m_e} \trans[b] \state{e'}{m_{e'}}$, and a \emph{summary function}
$\summary : (\En \times \Marks) \times \Ex \Map \Weights$.
Initially, the worklist contains all such transitions where the source state
$\state{u}{m_u}$ is an initial state in $I$, and $\summary$ is all $\zero$.
In every iteration a transition $t_\Automaton$ is extracted from the
worklist and processed as follows.
Since every accepting run starting with $t_\Automaton$ corresponds to a
reachable configuration $\conf{u}{S}$ (where $S$ varies over different runs),
every transition $t_{\RSM}=\tup{u,u'}$ in $\RSM$ gives rise to another reachable
configuration.
More precisely, the run corresponds to a set of computations reaching
$\conf{u}{S}$ from the initial set of configurations, and $t_\RSM$ allows to
extend these computations by one step.
The algorithm incorporates the newly discovered computations by relaxing a
transition as follows, illustrated in Figure~\ref{fig:relax}.

\tikzstyle{relaxpicture}=[->,>=stealth',auto]
\tikzstyle{reltrans}=[thick]
\tikzstyle{exttrans}=[]
\tikzstyle{fintrans}=[dotted]
\tikzstyle{mwa}=[midway,above]
\tikzstyle{nsa}=[near start,above]

\newcommand{\qflabel}{}
\newcommand{\leftWidth}{0.53}
\newcommand{\rightWidth}{0.46}

\begin{figure}[tb]
  \small

  \begin{minipage}{\leftWidth\linewidth}
    \centering
    \fbox{$t_\RSM = \tup{u,u'}$}
    
    \begin{tikzpicture}[relaxpicture]
      \node [] (u)  at (0.0 , 1.0) {$\state{u}{m_u}$};
      \node [] (e)  at (1.8 , 1.0) {$\state{e}{m_e}$};
      \node [] (qf) at (3.1 , 1.0) {\qflabel};
      \node [] (up) at (0.5 , 0.2) {$\state{u'}{\freshMark}$};
      
      \path
      (u)  edge [exttrans] node [mwa] {$\eps$} (e)
      (e)  edge [fintrans] node [mwa] {$S$}    (qf)
      (up) edge [reltrans] node [nsa] {$\eps$} (e);
    \end{tikzpicture}
    
    (a) Internal transition (Lines~\ref{alg:rsm_post:internal_trans_loop}--\ref{alg:rsm_post:relax1})
  \end{minipage}
  \begin{minipage}{\rightWidth\linewidth}
    \centering
    \fbox{$t_\RSM = \tup{u,\cnd{b}{e'}}$}

    \begin{tikzpicture}[relaxpicture]
      \node [] (u)  at (0.0 , 1.0) {$\state{u}{m_u}$};
      \node [] (e)  at (1.8 , 1.0) {$\state{e}{m_e}$};
      \node [] (qf) at (3.1 , 1.0) {\qflabel};
      \node [] (ep) at (0.5 , 0.2) {$\state{e'}{\freshMark}$};
      
      \path
      (u)  edge [exttrans] node [mwa] {$\eps$} (e)
      (e)  edge [fintrans] node [mwa] {$S$}    (qf)
      (ep) edge [reltrans] node [nsa] {$b$} (e);
    \end{tikzpicture}
    
    (b) Call transition (Lines~\ref{alg:rsm_post:call_trans_loop}--\ref{alg:rsm_post:add_eps_loop})
  \end{minipage}

  \vspace{.5cm}

  \begin{minipage}{\leftWidth\linewidth}
    \centering
    \fbox{$t_\RSM = \tup{u,x}$}
    
    \begin{tikzpicture}[relaxpicture]
      \node [] (u)  at (0.0 , 1.0) {$\state{u}{m_u}$};
      \node [] (e)  at (1.7 , 1.0) {$\state{e}{m_e}$};
      \node [] (e1) at (3.5 , 1.0) {$\state{e'}{m_{e'}}$};
      \node [] (qf) at (4.8 , 1.0) {\qflabel};
      \node [] (r)  at (2.0 , 0.2  ) {$\state{\rnd{b}{x}}{\freshMark}$};
      
      \path
      (u)  edge [exttrans] node [mwa] {$\eps$} (e)
      (e)  edge []         node [mwa] {$b$}    (e1)
      (e1) edge [fintrans] node [mwa] {$S'$}   (qf)
      (r)  edge [reltrans] node [nsa] {$\eps$} (e1);
    \end{tikzpicture}
    
    (c) Exit transition (Lines~\ref{alg:rsm_post:exit_trans_loop}--\ref{alg:rsm_post:relax3})
  \end{minipage}
  \begin{minipage}{\rightWidth\linewidth}
    \centering
    \fbox{$\summary(\state{e}{\freshMark},x)$}

    \begin{tikzpicture}[relaxpicture]
      \node [] (e)  at (0.0 , 1.0) {$\state{e}{\freshMark}$};
      \node [] (ep) at (1.8 , 1.0) {$\state{e'}{m_{e'}}$};
      \node [] (qf) at (3.1 , 1.0) {\qflabel};
      \node [] (r)  at (0.5 , 0.2) {$\state{\rnd{b}{x}}{\freshMark}$};
      
      \path
      (e)  edge [exttrans] node [mwa] {$b$} (ep)
      (ep)  edge [fintrans] node [mwa] {$S$}    (qf)
      (r) edge [reltrans] node [nsa] {$\eps$} (ep);
    \end{tikzpicture}
    
    (d) Using summary (Lines~\ref{alg:rsm_post:process_b}--\ref{alg:rsm_post:relax4})
  \end{minipage}
  
  \caption{Relaxation steps of $\ConfDistBounded$.}
  \label{fig:relax}
\end{figure}


\begin{enumerate}[parsep=0pt]
\item If $t_{\Automaton}$ is of the form
  $\state{u}{m_u} \trans[\eps] \state{e}{m_e}$, then:

  \begin{enumerate}[series=relax:cases]
  \item If $u'$ is an internal node then the algorithm captures the internal
    transition $\conf{u}{S} \Trans \conf{u'}{S}$ by relaxing the transition
    $\state{u'}{\freshMark} \trans[\eps] \state{e}{m_e}$ using the weights
    $\autwt(t_\Automaton)$ and $\rsmwt(t_{\RSM})$.

  \item If $u'$ is a call node $\cnd{b}{e'}$ then the transition
    $\state{e'}{\freshMark} \trans[b] \state{e}{m_e}$ is relaxed with the new weight
    $\autwt(t_\Automaton) \otimes \rsmwt(t_\RSM)$.
    Furthermore, an $\eps$-self-loop is stored in the worklist to continue
    exploration from the called entry node $e'$.
    
  \item If $u'$ is an exit node $x$ then the algorithm relaxes
    $\summary(\state{e}{m_e},x)$ if a smaller computation to $x$ has been
    discovered.
    Note that for $m_e = \freshMark$ this corresponds to valid entry-to-exit
    computations from $e$ to $x$.
    If another call to $e$ is discovered later, the summary is used to avoid
    traversing the module again.
    For $m_e \neq \freshMark$ the summary does not necessarily correspond to
    valid entry-to-exit computations (e.g., because node $u$ was provided as an
    initial configuration) and is only stored to avoid redundant work.
    
    For a return transition from $\conf{u}{S}$ the stack $S$ has to be
    non-empty.
    The algorithm looks for all possible boxes $b$ at the top of $S$ by going
    along a $b$-transition from $\state{e}{m_e}$ to a state
    $\state{e'}{m_{e'}}$.
    Then for any $S = bS'$, relaxing the transition
    $\state{\rnd{b}{x}}{\freshMark} \trans[\eps] \state{e'}{m_{e'}}$ captures
    the return transition $\conf{u}{S} \Trans \conf{\rnd{b}{x}}{S'}$.
    Note that here we make use of the fact that the return transition itself has
    weight $\one$.
  \end{enumerate}
  
\item If $t_{\Automaton}$ is of the form
  $\state{e}{m_e} \trans[b] \state{e'}{m_{e'}}$, then:
  
  \begin{enumerate}[resume*=relax:cases]
  \item for every exit node $x$ in the module of $e$ the summary function
    is used to relax the weight of the transition
    $\state{\rnd{b}{x}}{\freshMark} \trans[\eps] \state{e'}{m_{e'}}$ to the
    value $\autwt(t_\Automaton) \otimes \summary(\state{e}{\freshMark},x)$.
  \end{enumerate}
\end{enumerate}

The initial states of $\Automaton_{\post^*}$ are the initial states of
$\Automaton$ together with all states with the fresh mark added in the
preprocessing.
The final states of $\Automaton_{\post^*}$ are the unmodified final states of
$\Automaton$.

\newcommand\mycommfont[1]{\ttfamily\textcolor{blue}{#1}}
\SetCommentSty{mycommfont}

\begin{algorithm}[!h]\small
  \caption{$\ConfDistBounded$}
  \label{alg:rsm_post}
  \KwIn{RSM $\RSM$ and $\RSM$-automaton $\Automaton$ with $\autwt(t) = \one$ for all transitions $t$ in $\Automaton$}
  \KwOut{$\RSM$-automaton $\Automaton_{\post^*}$ with $\Automaton_{\post^*}(c) = \rsmdist(\lang(\Automaton), c)$ for all configurations $c$} 

  \SetKwProg{RelaxProc}{Procedure}{}{}

  \BlankLine

  preprocess $\Automaton$ as described in the main text\;
  \tcp{Initialization of worklist and summary function}
  $\WL \ass$ $\set{t = q \trans[\eps] q' \mid q \in I \text{ and } \autwt(t) = \one}$\;\label{alg:rsm_post:init_wl}
  $\summary(\state{e}{m_e},x) \ass \zero$ for all states $\state{e}{m_e}$ and $x \in \Ex$

  \tcp{Main loop}
  \While{$\WL \neq \emptyset$}{\label{alg:rsm_post:while}
    extract $t_\Automaton$  from $\WL$\;\label{alg:rsm_post:extract_wl}

    \uIf{$t_\Automaton = \state{u}{m_u} \trans[\eps] \state{e}{m_e}$}{\label{alg:rsm_post:process_eps}
      let $\Module_i$ be the module of node $u$\;
      \tcp{Internal transitions from $u$}
      \ForEach{$t_\RSM = \tup{u,u'} \in \delta_i$ where $u' \in \In_i$}{\label{alg:rsm_post:internal_trans_loop}
        $\Relax(\state{u'}{\freshMark} \trans[\eps] \state{e}{m_e},
        \autwt(t_\Automaton) \otimes \rsmwt_i(t_\RSM))$\;\label{alg:rsm_post:relax1}
      }

      \tcp{Call transitions from $u$}
      \ForEach{$t_\RSM = \tup{u,\cnd{b}{e'}} \in \delta_i$}{\label{alg:rsm_post:call_trans_loop}
        $\Relax(\state{e'}{\freshMark} \trans[b] \state{e}{m_e},
        \autwt(t_\Automaton) \otimes \rsmwt_i(t_\RSM))$\;\label{alg:rsm_post:relax2}
        add $\state{e'}{\freshMark} \trans[\eps] \state{e'}{\freshMark}$ to $\WL$, if it was never added before\label{alg:rsm_post:add_eps_loop}
      }

      \tcp{Exit transitions from $u$}
      \ForEach{$t_\RSM = \tup{u,x} \in \delta_i$ where $x \in \Ex_i$}{\label{alg:rsm_post:exit_trans_loop}
        \If{$\summary(\state{e}{m_e}, x) \not \sleq \autwt(t_\Automaton) $}{\label{alg:rsm_post:summary_if}
          $\summary(\state{e}{m_e},x) \ass \summary(\state{e}{m_e},x)
          \oplus \autwt(t_\Automaton) $\;\label{alg:rsm_post:summarize}
          \ForEach{$\state{e}{m_e} \trans[b/v] \state{e'}{m_{e'}}$}{\label{alg:rsm_post:entry_exit_trans_loop}
            $\Relax(\state{\rnd{b}{x}}{\freshMark} \trans[\eps] \state{e'}{m_{e'}}, v \otimes \summary(\state{e}{m_e},x))$\label{alg:rsm_post:relax3}
          }
        }
      }
    }
    \uElseIf{$t_\Automaton = \state{e}{m_e} \trans[b] \state{e'}{m_{e'}}$}{\label{alg:rsm_post:process_b}
      let $\Module_i$ be the module of node $e$\;
      \tcp{Using entry-to-exit summaries}
      \ForEach{$x \in \Ex_i$}{\label{alg:rsm_post:call_return_loop}
        $\Relax(\state{\rnd{b}{x}}{\freshMark} \trans[\eps] \state{e'}{m_{e'}},
        \autwt(t_\Automaton) \otimes \summary(\state{e}{\freshMark},x)$\;\label{alg:rsm_post:relax4}
      }
    }
  }

  \BlankLine

  \RelaxProc{$\Relax(t, v)$}{
    \If{$\autwt(t) \neq \autwt(t) \oplus v$}{
      $\autwt(t) \ass \autwt(t) \oplus v$\;
      add $t$ to $\WL$\label{alg:rsm_post:add_wl}
    }
  }
\end{algorithm}


\vspace{-.5cm}

\begin{example}
In Figure~\ref{fig:confaut_reachability} we illustrate an execution of
$\ConfDistBounded$ for the reachability problem in the RSM from
Figure~\ref{fig:rsm}.
The reader can verify that every configuration in the example
computation~\eqref{eq:computation} is accepted by a run of the constructed
automaton.
\end{example}

\begin{figure}[t]
\centering
\small
\input{graphics/confaut_style}
\begin{minipage}[c]{.41\textwidth}
  \caption{The configuration automaton $\Automaton_{\post^*}$ constructed by
    $\ConfDistBounded$ for the RSM in Figure~\ref{fig:rsm} over the Boolean
    semiring $\booleanSemiring$, expressing the reachability problem.
    The initial input automaton $\Automaton$ is given by the black states,
    whereas the gray states represent the newly added states with the fresh mark
    $\freshMark$.
    The black/gray color gives a similar distinction for the transitions (i.e.,
    the gray transitions have been added by the algorithm).
    The set of initial states of $\Automaton$ is $I=\{e_1^1, e_2\}$, and the set
    of final states is the singleton set $F=\{e_1^1\}$.
    Transitions added in the preprocessing phase with value $\zero$
    are not shown.}\label{fig:confaut_reachability}
\end{minipage}
\begin{minipage}[c]{.58\textwidth}
\begin{tikzpicture}[scale=.7]
\def\bend{10}
\def\ystep{2.8}
\def\xstep{3}

\node[state_init]                 (e2m)    at (0,0)                {$e_2$};
\node[state_init]                 (e12m)   at (0,-1*\ystep)        {};
\node[state_init,minimum size=21] ()       at (0,-1*\ystep)        {$e_1^1$};
\node[state_added]                (e11mh)  at (1*\xstep,-0*\ystep) {$e_1^1$};
\node[state_added]                (b1x2mh) at (1*\xstep,-1*\ystep) {$\rnd{b_1}{x_2}$};
\node[state_added]                (e2mh)   at (2*\xstep,-0*\ystep) {$e_2$};

\node[state_added] (e12mh) at (1*\xstep,1*\ystep) {$e_1^2$};
\node[state_added] (u1mh)  at (2*\xstep,-1*\ystep) {$u_1$};

\node[state_added] (b2x1mh) at (0*\xstep,1*\ystep) {$\rnd{b_2}{x_1}$};

\draw[trans_init]  (e2m)    to node[left,  midway]{$b_1$}  (e12m);
\draw[trans_added] (b2x1mh) to node[left,  midway]{$\eps$} (e2m);
\draw[trans_added] (b1x2mh) to node[above, midway]{$\eps$} (e12m);
\draw[trans_added] (b1x2mh) to node[right, midway]{$\eps$} (e11mh);

\draw[trans_added] (u1mh) to node[above, near start]{$\eps$} (e11mh);
\draw [trans_added] (u1mh) to ([xshift=1.2cm]u1mh) to node[right, pos=0.6]{$\eps$}  ([xshift=1.2cm]2*\xstep,1*\ystep) to (e12mh);
\draw[trans_added, , bend left=28] (u1mh) to node[below, midway]{$\eps$} (e12m);

\draw[trans_added, bend right=\bend] (e2mh)  to node[above, midway]{$b_1$} (e11mh);
\draw[trans_added, bend right=\bend] (e11mh) to node[below, midway]{$b_2$} (e2mh);

\draw[trans_added] (e12mh) to node[above, midway]{$b_2$} (e2mh);
\draw[trans_added] (e11mh) to node[below, midway]{$b_2$} (e2m);
\draw[trans_added] (e12mh) to node[above, midway]{$b_2$} (e2m);

\draw [trans_added] (b2x1mh) to ([yshift=1.5cm]b2x1mh) to node[above, pos=0.6]{$\eps$} ([yshift=1.5cm, xshift=0.8cm]2*\xstep,1*\ystep) to ([xshift=0.8cm]e2mh) to (e2mh);

\draw [trans_added, out=160, in=200, looseness=5] (e2m)   to node[left,  midway]{$\eps$} (e2m);
\draw [trans_added, out=70, in=110, looseness=5]  (e12mh) to node[above, midway]{$\eps$} (e12mh);
\draw [trans_added, out=70, in=110, looseness=5]  (e11mh) to node[above, midway]{$\eps$} (e11mh);
\draw [trans_added, out=250, in=290, looseness=5] (e2mh)  to node[below, midway]{$\eps$} (e2mh);

\end{tikzpicture}
\end{minipage}
\end{figure}


\myparagraph{Correctness.}
In the following we outline the correctness of the algorithm.
We start with a simple observation about the shape of runs in the constructed
configuration automaton.

\begin{proposition}\label{prop:run-shape}
  For every accepting run $\lambda$ there exists an equivalent accepting run
  $\lambda'$ that starts with an $\eps$-transition followed by only
  b-transitions. 
  Furthermore, all but the first state contain an entry node.
\end{proposition}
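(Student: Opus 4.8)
The plan is to establish the claim by analyzing the shape of transitions that can appear in $\Automaton_{\post^*}$ together with the definition of accepting runs. Recall that the final states $F$ are required to be entry-node states (that is, $F \subseteq \En \times \Marks$), and the only two transition types in a configuration automaton are $\eps$-transitions of the form $\state{u}{m_u} \trans[\eps] \state{e}{m_e}$ and b-transitions of the form $\state{e}{m_e} \trans[b] \state{e'}{m_{e'}}$. A key structural observation is that every transition \emph{into} a state targets an entry-node state (this holds for both transition types: an $\eps$-transition points to an entry node by definition, and a b-transition also points to an entry node). Hence, reading any run from \emph{left to right}, every state after the first one is an entry-node state; this already gives the second sentence of the proposition once we know a suitable equivalent run exists.

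First I would take an arbitrary accepting run $\lambda : \state{u}{m} \reach[S]{*} q_f$ with $q_f \in F$ and $\runwt(\lambda) \neq \zero$, and argue that its sequence of transitions, read from the start, must be an $\eps$-transition possibly followed by further transitions, but that no $\eps$-transition can occur after a b-transition in a way that keeps the run accepting. The source of an $\eps$-transition $\state{v}{m_v} \trans[\eps] \state{e}{m_e}$ may be an internal, return, or entry node; the source of a b-transition is always an entry node. So once the run has taken a b-transition it sits at an entry-node state, and from an entry-node state one may still take an $\eps$-transition to another entry node. The cleanest route is therefore not to forbid such $\eps$-transitions syntactically, but to \emph{eliminate} them: whenever an $\eps$-transition $\state{e}{m_e} \trans[\eps] \state{e'}{m_{e'}}$ occurs strictly after the first position of the run, I would use the $\eps$-self-loops (every entry state has an $\eps$-self-loop with weight $\one$) and the structure of the algorithm to replace this $\eps$-step together with the preceding b-step by a single equivalent b-step, or absorb it into the neighboring transition, preserving the generated configuration (the label string $S$) and the run weight. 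This is where I would lean on the specific transitions the algorithm produces rather than on generic automaton manipulation.

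More concretely, the argument has two parts. (1) \emph{First transition is an $\eps$-transition:} if $\lambda$ begins with a b-transition $\state{e}{m} \trans[b] \state{e'}{m'}$, then because $e$ is itself an entry node we may prepend the $\eps$-self-loop $\state{e}{m} \trans[\eps] \state{e}{m}$ of weight $\one$; this changes neither the generated configuration nor the weight, and now the run starts with an $\eps$-transition. (2) \emph{No $\eps$-transition after the first position:} suppose $\lambda$ contains, after position one, a transition $t_i = \state{e}{m_e} \trans[\eps] \state{e'}{m_{e'}}$; since this is not the first transition and every transition targets an entry state, its source $\state{e}{m_e}$ is already an entry state reached by the previous transition $t_{i-1}$, which — being a transition whose target is $\state{e}{m_e}$ — is either a b-transition $\state{\cdot}{\cdot} \trans[b] \state{e}{m_e}$ or another $\eps$-transition. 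I would show that in either case the pair $t_{i-1}, t_i$ can be collapsed into a single transition of the same kind as $t_{i-1}$ with weight $\autwt(t_{i-1}) \otimes \autwt(t_i)$ (respectively $\autwt(t_i) \otimes \autwt(t_{i-1})$, matching the reversed-extend convention for run weights), which exists in $\Automaton_{\post^*}$ because the algorithm's $\Relax$ operations close the automaton under exactly these compositions. Iterating this collapse from left to right terminates and yields the desired run $\lambda'$ of the form $\eps$-transition followed by b-transitions only, with the same generated configuration and the same (or a dominating, hence equal by idempotence and the fact that $\lambda'$ still accepts the configuration) weight; combined with observation that all non-first states are entry states, this is the proposition.

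The main obstacle I expect is part (2): justifying that the collapsed transition actually \emph{appears} in $\Automaton_{\post^*}$ with the right weight. This requires a careful inspection of which $\Relax$ calls the algorithm performs — in particular that when a b-transition $\state{e}{m_e} \trans[b] \state{e'}{m_{e'}}$ and an onward $\eps$-transition out of $\state{e'}{m_{e'}}$ coexist, the relevant composed transition has been (or will be) relaxed to at least the composed weight — or, alternatively, formulating the statement as "there exists an equivalent run in \emph{some} larger but equivalent automaton" if one only needs it for the correctness proof downstream. I would first check whether a purely syntactic normalization (using only the $\eps$-self-loops and the fact that consecutive $\eps$-transitions into entry nodes can be merged because $\eps$-transitions form a transitively-closed sub-relation under $\Relax$) suffices, since that avoids reasoning about b-transition compositions at all and is likely the intended slick proof; the reversed order of extension in $\runwt$ is the only subtlety to keep straight there.
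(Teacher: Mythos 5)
Your overall plan is sound, and the first half of it is exactly what is needed: every transition of a configuration automaton (of either kind) points into a state whose node is an entry node, so all but the first state of any run carry entry nodes; and if a run begins with a b-transition its first state is an entry state, so the weight-$\one$ $\eps$-self-loop can be prepended without changing the accepted configuration or the weight. The gap is in your treatment of $\eps$-transitions occurring after the first position. Your main line of argument collapses the pair $t_{i-1},t_i$ into a single composed transition and asserts that this composed transition is present in $\Automaton_{\post^*}$ ``because the algorithm's relaxation operations close the automaton under exactly these compositions.'' That premise is false: Algorithm~\ref{alg:rsm_post} never relaxes the composition of a b-transition with a subsequent $\eps$-transition into a new b-transition (inspect the four relaxation sites: they only touch $\eps$-transitions out of internal and return nodes carrying the fresh mark, and b-transitions out of entry nodes carrying the fresh mark --- never a composed edge), nor does it transitively close the $\eps$-edges. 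If your proof genuinely needed such closure, it would fail.

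Fortunately no collapsing is needed, and the observation you relegate to your final sentence is the actual proof. Since every transition targets an entry state, any $\eps$-transition at a position $i>1$ has an entry node as its \emph{source}; the syntactic definition of $\eps$-transitions forces an $\eps$-transition with entry-node source to satisfy $u=e$, i.e., source and target carry the same node. In $\Automaton_{\post^*}$ the only such transitions are the $\eps$-self-loops of weight $\one$ (the preprocessing adds $\eps$-transitions only out of states $\state{u}{\freshMark}$ with $u\in\In_i\cup\Ret_i$, and the same holds for every relaxation the algorithm performs; the self-loops themselves are never re-weighted). Hence the mid-run $\eps$-steps are literal self-loops and can simply be deleted: the state sequence remains a valid run, the label $S$ is unchanged, and the weight is unchanged because $\one$ is neutral for $\otimes$. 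No reasoning about compositions or weight domination is required. The one residual caveat is that the definition technically permits the user-supplied initial automaton to contain an $\eps$-transition $\state{e}{m_1}\trans[\eps]\state{e}{m_2}$ with $m_1\neq m_2$ (the case $u=e$ with distinct marks), for which plain deletion does not type-check; the clean course is to assume without loss of generality that the input automaton has no such transitions, which costs nothing for representing regular sets of configurations, rather than to lean on the nonexistent closure property.
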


The following three lemmas capture the correctness of $\ConfDistBounded$.
We start with completeness, namely that the distance computed for any
configuration $c$ is at most the actual distance from the initial set of
configurations $\lang(\Automaton)$ to $c$.
The proof relies on showing that for any initialized computation
$\pi : \conf{u}{S} \Reach{*} \conf{u'}{S'}$ there is a run $\lambda$ accepting
$\conf{u'}{S'}$ such that $\runwt(\lambda) \sleq \compwt(\pi)$, and follows an
induction on the length $|\pi|$.

\begin{restatable}[Completeness]{lemma}{lemcompleteness}\label{lem:completeness}
  For every configuration $c$ we have
  $\Automaton_{\post^*}(c) \sleq \rsmdist(\lang(\Automaton), c)$.
\end{restatable}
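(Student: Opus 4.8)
The plan is to show that the algorithm never overestimates: for every configuration $c$, the weight $\Automaton_{\post^*}(c)$ assigned by the final automaton is $\sleq$ the true distance $\rsmdist(\lang(\Automaton),c)$. Since $\rsmdist(\lang(\Automaton),c) = \Compswt(\Comps(\lang(\Automaton),c))$ is the combine over all initialized computations reaching $c$, and since the partial order satisfies $a \sleq b \oplus a'$ whenever $a \sleq b$, it suffices to show that for \emph{each} initialized computation $\pi : \conf{u_0}{S_0} \Reach{*} \conf{u}{S}$ (with $\conf{u_0}{S_0} \in \lang(\Automaton)$) there is an accepting run $\lambda$ of $\Automaton_{\post^*}$ accepting $\conf{u}{S}$ with $\runwt(\lambda) \sleq \compwt(\pi)$. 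Combining over all such $\pi$ and using that $\Automaton_{\post^*}(c)$ is the combine over \emph{all} accepting runs for $c$ (hence $\sleq$ the combine over the particular runs $\lambda$ we exhibit) yields the claim.

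The core argument is an induction on the length $|\pi|$. In the base case $|\pi|=0$, the configuration $\conf{u_0}{S_0}$ is accepted by the input automaton $\Automaton$; by Proposition~\ref{prop:run-shape} it has an accepting run of the canonical shape (an $\eps$-transition followed by b-transitions, all non-initial states carrying entry nodes), and since all input transitions have weight $\one$, the run weight is $\one \sleq \compwt(\pi) = \one$. For the inductive step, write $\pi = \pi' \cdot (\conf{u'}{S'} \Trans \conf{u}{S})$ where $\pi'$ ends at $\conf{u'}{S'}$; by the induction hypothesis there is a run $\lambda'$ accepting $\conf{u'}{S'}$ with $\runwt(\lambda') \sleq \compwt(\pi')$, and by Proposition~\ref{prop:run-shape} we may assume $\lambda'$ is canonical, so it decomposes as $\state{u'}{m_{u'}} \trans[\eps] \state{e}{m_e} \reach[S'']{*} q_f$ where $S' = S''$ (or rather $\lambda'$ reads $S'$). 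The proof then splits on the type of the RSM transition $\conf{u'}{S'}\Trans\conf{u}{S}$:

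\begin{compactitem}
\item \emph{Internal transition} $t_\RSM=\tup{u',u}$, $S=S'$: when the algorithm processed the transition $\state{u'}{m_{u'}}\trans[\eps]\state{e}{m_e}$ (which it does, since any transition whose weight was relaxed below $\one$ was put on the worklist and eventually extracted), line~\ref{alg:rsm_post:relax1} relaxed $\state{u}{\freshMark}\trans[\eps]\state{e}{m_e}$ with $\autwt(t_\Automaton)\otimes\rsmwt_i(t_\RSM)$, so the final weight of that transition is $\sleq \runwt(\lambda')\otimes\rsmwt(t_\RSM)$. Prepending it to the remaining b-transition part of $\lambda'$ (and using monotonicity of $\otimes$) gives the desired run for $\conf{u}{S}$.
\item \emph{Call transition} $t_\RSM=\tup{u',\cnd{b}{e'}}$, $u=e'$, $S=bS'$: line~\ref{alg:rsm_post:relax2} relaxed $\state{e'}{\freshMark}\trans[b]\state{e}{m_e}$ with $\autwt(t_\Automaton)\otimes\rsmwt_i(t_\RSM)$, and line~\ref{alg:rsm_post:add_eps_loop} seeded the $\eps$-self-loop at $\state{e'}{\freshMark}$; the accepting run is the $\eps$-self-loop followed by this new b-transition followed by the b-transition tail of $\lambda'$.
\item \emph{Return transition} $t_\RSM=\tup{u',x}$, $S'=bS$, $u=\rnd{b}{x}$: here $\lambda'$, being canonical and reading $S'=bS$, must begin $\state{u'}{m_{u'}}\trans[\eps]\state{e}{m_e}\trans[b/v]\state{e'}{m_{e'}}\reach[S]{*}q_f$. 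When the algorithm processed $\state{u'}{m_{u'}}\trans[\eps]\state{e}{m_e}$ through the exit loop (lines~\ref{alg:rsm_post:exit_trans_loop}--\ref{alg:rsm_post:relax3}), it updated $\summary(\state{e}{m_e},x)$ to a value $\sleq \autwt(t_\Automaton)=\runwt$ of the $\eps$-head and then relaxed $\state{\rnd{b}{x}}{\freshMark}\trans[\eps]\state{e'}{m_{e'}}$ with $v\otimes\summary(\state{e}{m_e},x)$; note $\rsmwt(t_\RSM)=\one$ since it is a transition into an exit, so the extra factor is absorbed. Combined with the $S$-reading tail of $\lambda'$ and monotonicity, this gives the run for $\conf{\rnd{b}{x}}{S}$. (The summary path, lines~\ref{alg:rsm_post:process_b}--\ref{alg:rsm_post:relax4}, is symmetric and handles the case where the b-transition is discovered after the relevant summary entry.)
\end{compactitem}

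The main obstacle I anticipate is the bookkeeping needed to make ``the algorithm processed transition $t_\Automaton$'' precise: I need an auxiliary invariant stating that whenever $\autwt(t)$ is relaxed strictly below its current value, $t$ is (re)added to the worklist, and that the main loop terminates only when the worklist is empty — so every value that ever contributed to some $\autwt(t)$ was propagated one step further. This is where the finite-height hypothesis is used (to guarantee termination), though completeness itself is really a safety/invariant argument rather than a termination argument. A secondary subtlety is the interaction with the fresh mark $\freshMark$: newly relaxed $\eps$- and b-transitions always target $\freshMark$-states, so I must check that the canonical-shape decomposition from Proposition~\ref{prop:run-shape} is compatible with splicing in a $\freshMark$-headed prefix; this follows because $\freshMark$-states were added for every node in preprocessing and declared initial, and the self-loop convention keeps runs well-formed. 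I would organize the induction so that the hypothesis explicitly carries ``$\lambda'$ is of canonical shape,'' which is exactly what the three cases need.
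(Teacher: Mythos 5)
Your proposal is correct and follows essentially the same route as the paper's proof: an induction on the length of an initialized computation $\pi$ showing that some accepting run $\lambda$ for the reached configuration satisfies $\runwt(\lambda) \sleq \compwt(\pi)$, with the same three-way case split on internal/call/return transitions, the same use of the relaxation lines and the summary function (including the two sub-cases for whether the $b$-transition or the summary entry stabilizes first), and the same final combine over all computations. The worklist bookkeeping you flag as the main obstacle is handled in the paper exactly as you propose, by carrying ``$t_1$ was added to the worklist'' as a second component of the induction hypothesis.
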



We now turn our attention to soundness, namely that the distance computed for
any configuration $c$ is at least the actual distance from the initial set of
configurations $\lang(\Automaton)$ to $c$.
The proof is established via a set of interdependent invariants that state that
the algorithm maintains sound entry-to-exit summaries and any run in the
automaton has a weight that is witnessed by a set of computations.

\begin{restatable}[Soundness]{lemma}{lemsoundness}\label{lem:soundness}
  For every configuration $c$ we have
  $\rsmdist(\lang(\Automaton), c) \sleq \Automaton_{\post^*}(c)$.
\end{restatable}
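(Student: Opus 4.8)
Fix a configuration $c$. We have $\Automaton_{\post^*}(c)=\bigoplus_{\lambda\in\Runs(c)}\runwt(\lambda)$, and by idempotence of $\oplus$ a combine of elements that are each $\sgeq\rsmdist(\lang(\Automaton),c)$ is again $\sgeq\rsmdist(\lang(\Automaton),c)$. Hence it suffices to show that every accepting run $\lambda$ of $c$ satisfies $\runwt(\lambda)\sgeq\rsmdist(\lang(\Automaton),c)$; in fact I will show the stronger statement that $\runwt(\lambda)$ is \emph{witnessed}, i.e. $\runwt(\lambda)=\Compswt(\Pi)$ for some set $\Pi\subseteq\Comps(\lang(\Automaton),c)$ of genuine initialized computations. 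This implies the inequality, since $\Compswt$ of a superset is $\sleq$-smaller so $\rsmdist(\lang(\Automaton),c)=\Compswt(\Comps(\lang(\Automaton),c))\sleq\Compswt(\Pi)=\runwt(\lambda)$; and it also covers unreachable $c$, for which $\Comps(\lang(\Automaton),c)=\emptyset$, so every run of $c$ would have weight $\zero$ and thus not be accepting, giving $\Automaton_{\post^*}(c)=\zero=\rsmdist(\lang(\Automaton),c)$. By Proposition~\ref{prop:run-shape} it is enough to treat normal-form runs.

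\textbf{The invariant.}
The argument is an induction on the iterations of the main \textbf{while} loop of Algorithm~\ref{alg:rsm_post}, maintaining a joint invariant that holds after every iteration (hence at termination): \emph{(i)}~every accepting run of the current automaton is witnessed as above; and \emph{(ii)}~every summary value $\summary(\state{e}{m_e},x)$ is sound, in the sense that it is a combine of weights $\autwt(\state{u}{m_u}\trans[\eps]\state{e}{m_e})$ ranging over already-processed $\eps$-transitions whose source node $u$ has an RSM transition $\tup{u,x}$ into the exit $x$. The two parts are interdependent. Invariant~(ii) is exactly what lets a summary be used soundly: every $\eps$-transition added to the worklist has an \emph{initial} source, so any continuation $\state{e}{m_e}\reach[S]{*}q_f$ to a final state extends a summand of~(ii) to an \emph{accepting} run accepting $\conf{u}{S}$, which by~(i) is witnessed; prepending the weight-$\one$ return step $\conf{u}{b\,S'}\Trans\conf{\rnd{b}{x}}{S'}$ (with $S=b\,S'$) to those witnessing computations then witnesses the weight assigned by the summary-based relaxations of Figure~\ref{fig:relax}(c),(d). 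Conversely, maintaining~(ii) when line~\ref{alg:rsm_post:summarize} fires only uses that the $\eps$-transition being processed is itself already sound.

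\textbf{Base case and inductive step.}
Initially (after the preprocessing) every genuine transition carries weight $\one$, all summaries are $\zero$, and the preprocessing adds only transitions of weight $\zero$ or the weight-$\one$ $\eps$-self-loops on entry states; hence a nonzero-weight run uses only weight-$\one$ transitions and has weight $\one$, while accepting a configuration of $\lang(\Automaton)$ whose distance is $\sleq\one$ via the length-$0$ computation --- establishing~(i) --- and~(ii) holds trivially since $\zero$ is the top element. For the inductive step I case on which relaxation pattern of Figure~\ref{fig:relax} and which summary update the iteration performs. In every case the newly contributed weight factors, through the (non-commutative) extend operation and in the reverse order dictated by $\runwt$, as (the weight of an already-witnessed accepting run, or an already-sound summary value) $\otimes$ (the weight of a single RSM transition $t_\RSM$ of the matching type); invoking distributivity of $\otimes$ over $\oplus$, monotonicity of both operators with respect to $\sleq$, and the one-step inequality $\rsmdist(\lang(\Automaton),c')\sleq\rsmdist(\lang(\Automaton),c)\otimes\rsmwt(t_\RSM)$ whenever $c\Trans c'$ via $t_\RSM$ --- in the exit/return cases with the body computation lifted underneath an arbitrary stack suffix --- one checks that each new normal-form accepting run is again witnessed (append $t_\RSM$, or a summarised body followed by a return, to each witnessing computation of the shorter run it extends) and that the updated summaries still satisfy~(ii). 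The $\oplus$ inside the relaxation procedure is absorbed by taking unions of witness sets, using distributivity of $\oplus$ over $\otimes$ and idempotence.

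\textbf{Conclusion and main obstacle.}
Termination holds because every transition weight can only strictly decrease in $\sleq$ and the height $\Height$ bounds the length of such chains; thus the invariant holds at termination, and part~(i) gives $\runwt(\lambda)\sgeq\rsmdist(\lang(\Automaton),c)$ for every accepting run, whence $\Automaton_{\post^*}(c)\sgeq\rsmdist(\lang(\Automaton),c)$, which is the claim. The hard part is finding an invariant that is at once strong enough to imply the statement and preserved by all of the relaxation patterns. The delicate points are: (a)~a summary can be read (line~\ref{alg:rsm_post:relax4}) before all of its contributions have been recorded --- harmless for an over-approximation, but it forces soundness of summaries to hold at \emph{every} intermediate configuration of the algorithm, not only at the end, and it is why phrasing~(ii) as a combine of run weights rather than as a within-module distance is convenient; (b)~one must track which transitions can ever carry a weight other than $\one$ (exactly those with a fresh-marked source), which is what confines the arbitrary initial set $\lang(\Automaton)$ to the base case rather than letting it obstruct the induction; and (c)~in the return case the body computation recorded by $\summary(\state{e}{\freshMark},x)$ must be correctly lifted underneath whatever box currently sits on top of the stack.
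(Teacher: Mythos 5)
Your overall strategy coincides with the paper's: reduce soundness to showing that the weight of every accepting run is \emph{witnessed}, i.e.\ equals $\Compswt(\Pi)$ for a set $\Pi$ of genuine initialized computations (the paper's invariant~\ref{inv:sound}), and maintain this as a loop invariant jointly with a soundness invariant for $\summary$. The gap is that your joint invariant is too weak to be preserved by the inductive step: you are missing the paper's invariant~\ref{inv:fresh}, which witnesses every run $\state{u_2}{\freshMark} \reach[S]{*} \state{u_1}{\freshMark}$ \emph{between fresh-marked states} by same-context computations $\conf{u_1}{\eps} \Reach{*} \conf{u_2}{S}$. This is not a technicality. The b-transitions relaxed in line~\ref{alg:rsm_post:relax2} do not only occur adjacent to the start of a normal-form accepting run: since calls can be issued from entry nodes and the targets of relaxed $\eps$- and b-transitions may themselves carry the fresh mark, an accepting run can have the shape $\state{u_2}{\freshMark} \trans[\eps] \state{e_1}{\freshMark} \trans[b_1] \cdots \trans[b_{i-1}] \state{e_i}{\freshMark} \trans[b_i] \state{e}{m_e} \reach[S_1]{*} q_f$, and relaxing $\state{e_i}{\freshMark} \trans[b_i] \state{e}{m_e}$ changes the weight of this whole run. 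Your recipe ``append $t_\RSM$ to each witnessing computation of the shorter run it extends'' re-witnesses only the suffix beginning at the relaxed transition (via the accepting run $t_\Automaton$ followed by that suffix); the fresh-marked prefix up to $\state{e_i}{\freshMark}$ is \emph{not} an accepting run (final states carry old marks), so your invariant~(i) says nothing about it. One needs invariant~\ref{inv:fresh} to witness that prefix by computations from $\conf{e_i}{\eps}$, lift them under the stack $b_i S_1$, and compose --- exactly the decomposition carried out in the paper's proof of the call case. These mid-run relaxations are the generic case, since they correspond to reaching configurations with nested calling contexts.

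A second, more formal defect is the phrasing of your invariant~(ii): you record $\summary(\state{e}{m_e},x)$ as a combine of transition \emph{weights} of already-processed $\eps$-transitions and defer their conversion into computations to the moment of use, invoking invariant~(i) on the \emph{current} automaton. But a contributing transition may be relaxed to a strictly smaller weight after it has been folded into the summary; the summary then retains the stale, larger summand, while invariant~(i) applied at use time only witnesses the transition's new, smaller weight, so the conversion produces a witness for the wrong value. The paper's invariant~\ref{inv:sum} avoids this by asserting directly that $\summary(\state{e}{\freshMark},x) = \Compswt(\Comps)$ for a fixed set $\Comps$ of entry-to-exit computations $\conf{e}{b} \Reach{*} \conf{\rnd{b}{x}}{\eps}$: such a witness, once exhibited, remains valid regardless of later relaxations. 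Note that establishing invariant~\ref{inv:sum} when line~\ref{alg:rsm_post:summarize} fires again requires invariant~\ref{inv:fresh} (applied to the single transition $\state{u}{\freshMark} \trans[\eps] \state{e}{\freshMark}$, followed by the weight-$\one$ exit and return steps), so the missing invariant is needed in two independent places.
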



\myparagraph{Complexity.}
Finally, we turn our attention to the complexity analysis of the algorithm,
which is done by bounding the number of times the algorithm can perform a
relaxation step.
The complexity bound is based on the height of the semiring $\Height$, which implies
that every transition can be relaxed at most $\Height$ times.
The contribution of the size of the initial automaton $\Automaton$ in the
complexity is captured by the number of initial marks $\kappa$.

\begin{restatable}[Complexity]{lemma}{lemcomplexity}\label{lem:complexity}
  Let $\kappa$ be the number of distinct marks $m\in \Marks$ of the initial
  automaton $\Automaton$.
  Algorithm $\ConfDistBounded$ constructs $\Automaton_{\post^*}$ in time
  $O(\Height\cdot (|\RSM|\cdot \theta_e\cdot \kappa^2 + |\Call|\cdot \theta_e\cdot \theta_x\cdot \kappa^3))$, and
  $\Automaton_{\post^*}$ has $O(|\RSM|\cdot \theta_e\cdot \kappa^2)$  transitions.
\end{restatable}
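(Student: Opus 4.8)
The plan is to bound the total running time by counting how often each relaxation can fire and how much work a single iteration of the main loop performs. First I would fix notation: let $\kappa$ be the number of distinct marks in the input automaton $\Automaton$; after preprocessing, the number of marks is $\kappa+1 = O(\kappa)$, since we only add the single fresh mark $\freshMark$. The key structural fact, which I would establish first, is a bound on the number of transitions that can ever appear in $\Automaton_{\post^*}$. By the syntactic restrictions on configuration automata, $\eps$-transitions go from a state $\state{u}{m_u}$ with $u \in \In \cup \En \cup \Ret$ to a state $\state{e}{m_e}$ with $e$ an entry node in the same module, and b-transitions go from $\state{e}{m_e}$ to $\state{e'}{m_{e'}}$ with $e,e'$ entries of adjacent modules linked by box $b$. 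Counting these: $\eps$-transitions are bounded by $O(|\Nodes|\cdot \theta_e \cdot \kappa^2)$ and b-transitions by $O(|\Call| \cdot \theta_e \cdot \kappa^2)$ (one source entry per call node, $\theta_e$ target entries, $\kappa^2$ mark pairs), so the total is $O(|\RSM|\cdot \theta_e \cdot \kappa^2)$, giving the stated transition bound on $\Automaton_{\post^*}$.

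Next I would bound the number of worklist insertions. A transition $t_\Automaton$ is added to $\WL$ only inside \Relax, and only when its weight strictly decreases in $\sleq$; since every descending chain in the semiring has length at most $\Height$, each transition is inserted at most $\Height$ times. (The $\eps$-self-loops added at Line~\ref{alg:rsm_post:add_eps_loop} are added at most once each by the explicit guard, which is subsumed.) Hence the main loop runs $O(\Height \cdot |\RSM|\cdot \theta_e \cdot \kappa^2)$ times. I would similarly note that each summary entry $\summary(\state{e}{m_e},x)$ is updated at most $\Height$ times (it only ever increases in the semiring order via $\oplus$, mirroring a descending chain under $\sleq$), and there are $O(\theta_e \cdot \kappa \cdot \theta_x)$ such entries.

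Then I would account for the cost of a single iteration, split by the type of the extracted transition. If $t_\Automaton$ is an $\eps$-transition $\state{u}{m_u}\trans[\eps]\state{e}{m_e}$: the internal-transition loop and the call-transition loop together touch at most $\deg(u)$ outgoing RSM edges, contributing $O(\sum_i |\delta_i|) = O(|\RSM|)$ amortized over all nodes; the exit-transition loop runs over $x \in \Ex_i$, and for each $x$ that triggers a summary update it iterates over all b-transitions out of $\state{e}{m_e}$, of which there are $O(\theta_e \cdot \kappa)$. If $t_\Automaton$ is a b-transition, the loop at Line~\ref{alg:rsm_post:call_return_loop} runs over $x \in \Ex_i$, costing $O(\theta_x)$. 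The dominant terms come from (a) processing $\eps$-transitions that drive summary updates into b-transitions and (b) processing b-transitions against summaries. For (a): each summary update is charged once, there are $O(\Height \cdot \theta_e \cdot \kappa \cdot \theta_x)$ of them, and each costs $O(\theta_e\cdot\kappa)$ for the inner b-transition loop, giving $O(\Height \cdot \theta_e^2 \cdot \theta_x \cdot \kappa^2)$ — but this should be re-charged more carefully to the $|\Call|$ call nodes rather than to $\theta_x$ exit slots, yielding the claimed $O(\Height\cdot |\Call|\cdot \theta_e\cdot \theta_x\cdot \kappa^3)$ term (the extra $\kappa$ coming from the fresh-mark target of the relaxed $\Ret$-transition combined with the $\kappa^2$ mark pairs on the b-transition). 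For (b): each of the $O(\Height\cdot|\Call|\cdot\theta_e\cdot\kappa^2)$ b-transition insertions costs $O(\theta_x)$, giving $O(\Height\cdot|\Call|\cdot\theta_e\cdot\theta_x\cdot\kappa^2)$, which is absorbed. The $\eps$-transition processing that only walks RSM edges contributes $O(\Height\cdot|\RSM|\cdot\theta_e\cdot\kappa^2)$. Summing yields the bound $O(\Height\cdot(|\RSM|\cdot\theta_e\cdot\kappa^2 + |\Call|\cdot\theta_e\cdot\theta_x\cdot\kappa^3))$.

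The main obstacle I anticipate is the amortized charging of the summary-driven relaxations in case (a): naively bounding the exit-loop cost per $\eps$-transition extraction by $O(\theta_x \cdot \theta_e \cdot \kappa)$ and multiplying by the number of $\eps$-transition extractions overcounts, because the inner b-transition loop only runs when the \texttt{if} guard at Line~\ref{alg:rsm_post:summary_if} fires, i.e., when a summary strictly decreases. The careful argument is to charge the $O(\theta_e\cdot\kappa)$ inner work to summary updates (of which there are $O(\Height\cdot\theta_e\cdot\kappa\cdot\theta_x)$ total), not to loop iterations — and then to observe that the relevant count of (entry, box) pairs feeding a given module is exactly what the $|\Call|$ parameter captures, so the product collapses to $|\Call|\cdot\theta_e\cdot\theta_x$ rather than $\theta_e^2\cdot\theta_x\cdot f$. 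Getting this bookkeeping exactly right, and confirming the $\kappa$ powers ($\kappa^2$ for the $|\RSM|$ term, $\kappa^3$ for the $|\Call|$ term), is the delicate part; everything else is a routine aggregation over descending chains of length $\le \Height$.
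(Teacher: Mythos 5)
Your proposal is correct and follows essentially the same route as the paper's proof: bound each transition's worklist insertions (and each summary entry's updates) by $\Height$ via the descending-chain argument, charge the internal/call loops to the outgoing RSM edges of each node summed to $O(\Height\cdot|\RSM|\cdot\theta_e\cdot\kappa^2)$, and charge the summary-driven inner loop to the $(\state{e}{m_e},x)$ pairs times the b-transitions out of $\state{e}{m_e}$, which collapses to the $O(\Height\cdot|\Call|\cdot\theta_e\cdot\theta_x\cdot\kappa^3)$ term. The only wobble is your intermediate count of $O(\theta_e\cdot\kappa\cdot\theta_x)$ summary entries (this is per module, not global), but your final re-charging to $(e,b)$ pairs counted by $|\Call|$ is exactly the paper's accounting, so the argument goes through.
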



\noindent
We summarize the results of this section in the following theorem.

\begin{theorem}\label{them:finite-height}
Let $\RSM$ be an RSM over a semiring of height $\Height$, and $\Automaton$ an
$\RSM$-automaton with $\kappa$ marks.
Algorithm $\ConfDistBounded$ constructs in
$O(\Height \cdot (|\RSM|\cdot \theta_e\cdot \kappa^2 + |\Call|\cdot \theta_e\cdot \theta_x\cdot \kappa^3))$ time an
$\RSM$-automaton $\Automaton_{\post^*}$ with $\kappa+1$ marks, such that
$\rsmdist(\lang(\Automaton),c) = \Automaton_{\post^*}(c)$ for every
configuration $c$.
\end{theorem}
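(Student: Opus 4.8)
This theorem is a packaging of the three preceding lemmas, so my plan is to assemble them rather than to prove anything new.

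For the correctness claim $\rsmdist(\lang(\Automaton),c) = \Automaton_{\post^*}(c)$ I would combine Lemma~\ref{lem:completeness}, which gives $\Automaton_{\post^*}(c) \sleq \rsmdist(\lang(\Automaton), c)$, with Lemma~\ref{lem:soundness}, which gives the opposite inequality $\rsmdist(\lang(\Automaton), c) \sleq \Automaton_{\post^*}(c)$, for every configuration $c$. The canonical order $\sleq$ of an idempotent semiring is antisymmetric --- if $a \sleq b$ and $b \sleq a$ then $a = a \oplus b = b \oplus a = b$ by commutativity of $\oplus$ --- so the two inequalities force equality. I would also record, as an immediate corollary, that $\lang(\Automaton_{\post^*}) = \post^*(\lang(\Automaton))$: a configuration $c$ is accepted by $\Automaton_{\post^*}$ exactly when $\Automaton_{\post^*}(c) \neq \zero$, which by the equality just established holds exactly when $\rsmdist(\lang(\Automaton),c) \neq \zero$, i.e. when $c$ is reachable from $\lang(\Automaton)$.

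For the mark count I would trace where marks can appear in $\Automaton_{\post^*}$: the $\kappa$ marks already present in the input $\Automaton$, plus the single fresh mark $\freshMark$ introduced in the preprocessing and not used in $\Automaton$; the main loop creates no further mark, since every transition it relaxes, every $\eps$-self-loop it adds, and every summary entry it updates refers only to states marked by $\freshMark$ or by a mark already occurring in $\Automaton$ (by inspection of Algorithm~\ref{alg:rsm_post}). Hence $\Automaton_{\post^*}$ has $\kappa+1$ marks. I would also note in passing that the constructed object genuinely satisfies the syntactic constraints of an $\RSM$-automaton, since every added transition is either a $b$-transition between entry states or an $\eps$-transition into an entry state, matching the required forms. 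The running-time bound $O(\Height \cdot (|\RSM|\cdot \theta_e\cdot \kappa^2 + |\Call|\cdot \theta_e\cdot \theta_x\cdot \kappa^3))$ is then quoted verbatim from Lemma~\ref{lem:complexity}.

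There is essentially no obstacle at the level of the theorem itself; the difficulty lives entirely in the three lemmas --- completeness by induction on $|\pi|$, exhibiting for each initialized computation $\pi$ a run of weight $\sleq \compwt(\pi)$; soundness by a simultaneous induction maintaining that the summaries under-approximate genuine entry-to-exit computations and that every run weight is witnessed by a set of computations; and complexity by charging each of the $O(\Height)$ possible relaxations of a transition against the $O(|\RSM|\cdot\theta_e\cdot\kappa^2)$ potential $\eps$-transitions and the $O(|\Call|\cdot\theta_e\cdot\theta_x\cdot\kappa^2)$ potential $b$-transitions, with an extra $\kappa$ factor for the inner loop over $b$-transitions in the exit case. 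The only points requiring care when reading off the theorem are that equality (not just one inequality) needs both directions together with antisymmetry of $\sleq$, and that the ``$+1$'' mark is counted once even when $\Automaton$ has no marks at all.
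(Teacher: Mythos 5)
Your proposal is correct and matches the paper's treatment: the theorem is stated there precisely as a summary of Lemmas~\ref{lem:completeness}, \ref{lem:soundness}, and \ref{lem:complexity}, with the equality obtained from the two inequalities via antisymmetry of $\sleq$ and the $\kappa+1$ mark count coming from the single fresh mark added in preprocessing. All the substantive work indeed lives in the three lemmas, exactly as you observe.
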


\begin{remark}[Comparison with existing work]\label{rem:finite-height}
We now relate Theorem~\ref{them:finite-height} with the existing work for
computing configuration distance (often called generalized reachability in the
literature) in weighted pushdown systems \emph{(WPDS)}~\cite{Schwoon02,Reps05}.
For simplicity we assume that the initial automaton is of constant size.
A formal description of WPDS is omitted; the reader can refer
to~\cite{AlurBEGRY05,Reps05}.
Let $\WPDS$ be a WPDS where:
\begin{compactenum}
\item $n_{\WPDS}$ is the number of states
\item $n_{\Delta}$ is the size of the transition relation
\item $n_\swappush$ is the number of different pairs $\tup{p', \gamma'}$
such that there is a transition of the form
$\tup{p,\gamma}\trans{} \tup{p', \gamma' \gamma''}$ (i.e., from some
state $p$ with $\gamma$ on the top of the stack, the WPDS $\WPDS$
(i)~transitions to state $p'$, (ii)~swaps $\gamma$ and $\gamma''$, and
(iii)~pushes $\gamma'$ on the top of the stack).
\end{compactenum}
As shown in~\cite{Reps05}, given a WPDS $\WPDS$ with weights from a semiring
with height $\Height$, together with a corresponding automaton
$\Automaton^{\WPDS}$ that encodes configurations of $\WPDS$, an automaton
$\Automaton^{\WPDS}_{\post^*}$ can be constructed as a solution to the
configuration distance problem for $\WPDS$.
For ease of presentation we focus on the common case where $\Automaton^{\WPDS}$ has constant size 
(e.g., for encoding an initial configuration of $\WPDS$ with empty stack).
Then the time required to construct $\Automaton^{\WPDS}_{\post^*}$ is
$O(\Height\cdot n_{\WPDS}\cdot n_{\Delta} \cdot n_\swappush)$~\cite{Schwoon02,Reps05}.

A direct consequence of~\cite[Theorem~1]{AlurBEGRY05} is that
an RSM $\RSM$ and a configuration automaton $\Automaton^{\RSM}$ can be converted to an equivalent PDS $\WPDS$ and configuration automaton $\Automaton^{\WPDS}$, and vice versa, such that the following equalities hold:
\[
|\RSM|=\Theta(n_{\Delta});\quad  \theta_x = \Theta(n_{\WPDS}); \quad f\cdot \theta_e= \Theta(n_\swappush),
\]
where $f$ represents the number of modules.
Hence, the bound we obtain by translating the input RSM to a WPDS and using the algorithm of~\cite{Schwoon02,Reps05}
is $O(H\cdot |\RSM|\cdot \theta_e\cdot \theta_x\cdot f)$.
Our complexity bound on Theorem~\ref{them:finite-height} is better by a factor $\Omega((|\RSM|\cdot f)/(\theta_x + |\Call|))$.
Moreover, to verify such improvements, we have also constructed a family of dense RSMs, and apply 
our algorithm, and compare against the jMoped implementation of the existing algorithms, and 
observe a linear speed-up (see Section~\ref{subsec:example} for details).

The above analysis considers an explicit model, where $\RSM$ comprises two
parts, a program control-flow graph $\RSM_\mathrm{CFG}$ and the set of all data
valuations $V$, where $|V|=\theta_e=\theta_x$.
Hence, $|\RSM|=|\RSM_\mathrm{CFG}|\cdot |V|^2$.
In a symbolic model, where all the data valuations are tracked on the semiring,
the input RSM is a factor $|V|^2$ smaller (i.e., the contribution of the data
valuation to $|\RSM|$), and $\theta_e=\theta_x=1$.
However, now each semiring operation incurs a factor $|V|^2$ increase in time
cost, and the height of the semiring increases by a factor $|V|^2$ as well, in
the worst case.
Hence, existing symbolic approaches for PDSs have the same worst-case time
complexity as the explicit one, and our comparison applies to these as well.
For further discussion on symbolic extensions of our algorithm we refer
to \appendixreftr{Appendix~\ref{app:discussion-symbolic}}.
\end{remark}


\section{Distance Extraction}\label{sec:distance_extraction}

The algorithm presented in Section~\ref{sec:rsm_sequential} takes as input a
weighted RSM $\RSM$ over a semiring and a configuration automaton $\Automaton$
that represents a regular set $R$ of configurations of $\RSM$, and outputs an
automaton $\Automaton_{\post^*}$ that encodes the distance $\rsmdist(R, c)$ to
every configuration $c$.
We now discuss the algorithmic problem of extracting such distances from
$\Automaton_{\post^*}$, and present fast algorithms for this problem.
First we will consider the general case for RSMs over an arbitrary semiring. 
Then we present several improvements for special cases, like RSMs over a
semiring with small domain, or sparse RSMs.
As the correctness of the constructions is straightforward, our attention will
be on the complexity.

\subsection{Distances over General Semirings}\label{subsec:semiring_dist}

\myparagraph{Configuration distances.}
Given a configuration $c=\conf{u}{S}$, $S = b_1 \dots b_{|S|}$, the task is to
extract
$
\rsmdist(R,c) = \Compswt(\Comps(R,c)).
$
This is done by a dynamic-programming style algorithm, which computes
iteratively for every prefix $b_1 \dots b_i$ of $S$ and state $\state{e}{m_e}$
with $e \in \En_j$ and $b_i \in B_j$, the weight
\[
w_{\state{e}{m_e}} = \Compswt \set{ \runwt(\lambda) \mid \lambda:\state{u}{m_u} \reach[b_1\dots b_i]{*}  \state{e}{m_e} }.
\]
Since there are $O(\kappa^2\cdot \theta_e^2)$ transitions labeled with $b_i$,
every iteration requires $O(\kappa^2\cdot \theta_e^2)$ time, and the total time
for computing $\rsmdist(R, c)$ is $O(|S|\cdot \kappa^2\cdot \theta_e^2)$.

\myparagraph{Superconfiguration distances.}
Given a superconfiguration $\ov{c}=\conf{u}{\ov{S}}$,
$\ov{S} = \Module_1 \dots \Module_{|\ov{S}|}$, the task is to extract
$
\rsmdist(R,\ov{c}) = \Compswt_{c \in \sem{\conf{u}{\ov{S}}}} \rsmdist(R, c).
$
To handle such queries, we perform a one-time preprocessing of
$\Automaton_{\post^*}$, so that the transitions are labeled with modules instead
of boxes.
That is, we create an automaton $\ov{\Automaton}_{\post^*}$, initially
identical to $\Automaton_{\post^*}$.
Then we add a transition
$t=\state{e}{m_{e}} \trans[\Module] \state{e'}{m_{e'}}$, with $\Module$ being
the module of $e'$, if there exists a b-transition
$\state{e}{m_{e}} \trans[b] \state{e'}{m_{e'}}$ in $\Automaton_{\post^*}$.
The weight function $\ov{\autwt}$ of $\ov{\Automaton}_{\post^*}$ is such that
the weight of the transition $t$ is
\[
\ov{\autwt}(t)=\Compswt_{t':\state{e}{m_{e}} \trans[b] \state{e'}{m_{e'}}} \autwt(t')
\]
where $t'$ ranges over transitions of $\Automaton_{\post^*}$.
This construction requires linear time in the number of b-transitions of
$\Automaton_{\post^*}$, i.e., $O(|\RSM|\cdot \theta_e)$.
It is straightforward to see that
\[
\Compswt_{\ov{\lambda} : \state{u}{m_u}\reach[\ov{S}]{*} q_f  } \ov{\autwt}(\lambda) =
\Compswt_{\lambda : \state{u}{m_u}\reach[S]{*} q_f } \autwt(\lambda)
\]
where $\ov{\lambda}$ and $\lambda$ range over accepting runs of
$\ov{\Automaton}_{\post^*}$ and $\Automaton_{\post^*}$ respectively, and $S$
refines $\ov{S}$.
Then, given a superconfiguration $\ov{c}=\conf{u}{\ov{S}}$, the extraction of
$\rsmdist(R,\ov{c})$ is done similarly to the configuration distance extraction,
in $O(|S|\cdot \kappa^2\cdot \theta_e^2)$ time.

\myparagraph{Node distances.}
For node distances, the task is to compute
$
\rsmdist(R, u) = \Compswt_{c=\conf{u}{S}} \rsmdist(R, c)
$
for every node $u$ of $\RSM$.
This reduces to treating the automaton $\Automaton_{\post^*}$ as a graph $G$,
and solving a traditional single-source distance problem, where the source set
contains all states with old marks (i.e., old states that appear in the initial
automaton $\Automaton$).
This requires $O(\Height \cdot |\Automaton_{\post^*}|)$ time for semirings of height
$\Height$.
An informal argument for these bounds is to observe that $G$ can be itself
encoded by a SESE RSM $\RSM_G$ with a single module, where the entry represents
the source set of nodes with old marks.
Then, running $\ConfDistBounded$ for the
corresponding semiring, we obtain a solution to the single-source distance
problem in the aforementioned times, as established in
Theorem~\ref{them:finite-height}.
Finally, computing same-context node distances requires $O(|\RSM|\cdot \theta)$
time in total (i.e., for all nodes).
Hence, regardless of the semiring, all node distances can be computed with no
overhead, i.e., within the time bounds required for constructing the respective
configuration automaton $\Automaton_{\post^*}$.
The following theorem summarizes the complexity bounds that we obtain for the
various distance extraction problems.

\begin{theorem}[Distance extraction]\label{them:weight_extraction}
Let $\RSM$ be an RSM over a semiring of height $\Height$
and $\Automaton$ an $\RSM$-automaton with $\kappa$ marks. 
After $O(\Height\cdot |\RSM|\cdot \theta_e\cdot \theta_x\cdot \kappa^3)$ preprocessing time
\begin{compactenum}
\item configuration and superconfiguration distance queries $\conf{u}{S}$ are
answered in $O(|S|\cdot \theta_e^2\cdot \kappa^2)$ time;
\item node distance queries are answered in $O(1)$ time.
\end{compactenum}
\end{theorem}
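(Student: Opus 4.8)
\noindent First I would carry out the one-time preprocessing described in Section~\ref{subsec:semiring_dist} and verify that it fits the stated bound. The steps are: (i)~run $\ConfDistBounded$ on $\Automaton$ to obtain $\Automaton_{\post^*}$; by Theorem~\ref{them:finite-height} this takes $O(\Height\cdot(|\RSM|\cdot\theta_e\cdot\kappa^2 + |\Call|\cdot\theta_e\cdot\theta_x\cdot\kappa^3))$ time and guarantees $\rsmdist(\lang(\Automaton),c)=\Automaton_{\post^*}(c)$ for every configuration $c$; (ii)~build the module-labelled automaton $\ov{\Automaton}_{\post^*}$, replacing, for each ordered pair of entry states, all $b$-transitions with $b$ in a fixed module by one $\Module$-transition carrying the combine of their weights, in time linear in the number of $b$-transitions of $\Automaton_{\post^*}$, i.e.\ $O(|\RSM|\cdot\theta_e\cdot\kappa^2)$ by Lemma~\ref{lem:complexity}; (iii)~for node distances, view $\Automaton_{\post^*}$ as a weighted graph and solve one single-source distance problem on it (e.g.\ by encoding the graph as a one-module SESE RSM and re-invoking $\ConfDistBounded$, as sketched in Section~\ref{subsec:semiring_dist}), costing $O(\Height\cdot|\Automaton_{\post^*}|)=O(\Height\cdot|\RSM|\cdot\theta_e\cdot\kappa^2)$, then store $\rsmdist(\lang(\Automaton),u)$ for every node $u$ in a table, and also precompute all same-context node distances in $O(|\RSM|\cdot\theta)$ time. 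Since $|\Call|\le|\RSM|$ and $\theta_e,\theta_x,\kappa\ge 1$, all of these terms are dominated by $O(\Height\cdot|\RSM|\cdot\theta_e\cdot\theta_x\cdot\kappa^3)$, which gives the preprocessing bound, and a node distance query is then an $O(1)$ table look-up, proving part~2.

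For a configuration query $c=\conf{u}{S}$, $S=b_1\cdots b_{|S|}$, the goal is to output $\rsmdist(\lang(\Automaton),c)=\Automaton_{\post^*}(c)=\bigoplus_{\lambda\in\Runs(c)}\runwt(\lambda)$. By Proposition~\ref{prop:run-shape} it is enough to combine over runs of the normal form $\state{u}{m_u}\trans[\eps]\state{e_0}{m_0}\trans[b_1]\cdots\trans[b_{|S|}]\state{e_{|S|}}{m_{|S|}}$ with $\state{e_{|S|}}{m_{|S|}}\in F$. The plan is a left-to-right dynamic program over the prefixes of $S$: for $i=0,\dots,|S|$ and every entry state $\state{e}{m_e}$ set
\[
  w^{(i)}_{\state{e}{m_e}} \;=\; \bigoplus\{\, \runwt(\lambda) \mid \lambda : \state{u}{m_u} \reach[b_1 \cdots b_i]{*} \state{e}{m_e} \,\},
\]
initialising $w^{(0)}$ by combining the weights of the $\eps$-transitions out of $\state{u}{m_u}$ and, because $\runwt$ extends the transition weights in reverse so the freshly appended $b_i$-transition is the leftmost factor, updating by $w^{(i)}_{\state{e}{m_e}}=\bigoplus_{t:\state{e'}{m'}\trans[b_i/v]\state{e}{m_e}} v\otimes w^{(i-1)}_{\state{e'}{m'}}$; the answer is $\bigoplus_{\state{e}{m_e}\in F}w^{(|S|)}_{\state{e}{m_e}}$. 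Distributivity of $\otimes$ over $\oplus$ shows that this recurrence computes exactly the combine over all normal-form runs, hence $\Automaton_{\post^*}(c)$. Each level touches only $b_i$-transitions, of which there are $O(\kappa^2\cdot\theta_e^2)$ (both endpoints range over $\theta_e$ entries with $\kappa$ marks), so the query runs in $O(|S|\cdot\theta_e^2\cdot\kappa^2)$ time.

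A superconfiguration query $\ov{c}=\conf{u}{\ov{S}}$, $\ov{S}=\Module_1\cdots\Module_{|\ov{S}|}$, is answered by running the same dynamic program on $\ov{\Automaton}_{\post^*}$ along $\ov{S}$: the identity displayed in Section~\ref{subsec:semiring_dist} (again pure distributivity) says that combining $\ov{\autwt}$-weighted runs labelled $\ov{S}$ equals combining $\autwt$-weighted runs labelled by the stacks $S$ refining $\ov{S}$, which is precisely $\rsmdist(\lang(\Automaton),\ov{c})$; the time bound is unchanged because $\ov{\Automaton}_{\post^*}$ has at most $O(\kappa^2\cdot\theta_e^2)$ transitions labelled by any fixed module. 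Together with part~2 this proves all three claims.

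The main obstacle is the correctness of the query dynamic program rather than the complexity accounting. One must check that the normal form of Proposition~\ref{prop:run-shape} loses no weight, i.e.\ that restricting to runs consisting of one $\eps$-transition followed by $b$-transitions through entry states still combines to $\Automaton_{\post^*}(c)$; and, since $\otimes$ is non-commutative and $\runwt$ reverses the order, that prepending a box to the partial stack corresponds to multiplying the new transition weight \emph{on the left}, handling the degenerate cases $|S|=0$ and the weight-$\one$ $\eps$-self-loops. Everything else — defining $\ov{\Automaton}_{\post^*}$, the graph reduction for node distances, and the arithmetic that makes the auxiliary preprocessing dominated by the cost of constructing $\Automaton_{\post^*}$ — is routine given Theorem~\ref{them:finite-height} and Lemma~\ref{lem:complexity}.
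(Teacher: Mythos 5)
Your proposal is correct and follows essentially the same route as the paper: the same prefix-by-prefix dynamic program over the $O(\theta_e^2\cdot\kappa^2)$ transitions per box label (with the new weight extended on the left, matching the reversed $\runwt$), the same module-relabelled automaton $\ov{\Automaton}_{\post^*}$ for superconfigurations, and the same single-source graph computation on $\Automaton_{\post^*}$ tabulated for $O(1)$ node queries, all dominated by the stated preprocessing bound. No gaps.
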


\subsection{Distances over Semirings with Small Domain}\label{subsec:semiring_dist_domain}

We now turn our attention to configuration and superconfiguration distance
extraction for the case of semirings with small domains $D$.
Such semirings express a range of important problems, with reachability being
the most well-known (expressed on the Boolean semiring with $|D|=2$).
We harness algorithmic advancements on the matrix-vector multiplication problem
and Four-Russians-style algorithms to obtain better bounds on the distance
extraction problem.

Recall that given a box $b$, the configuration automaton $\Automaton_{\post^*}$
has at most $(\theta_e\cdot \kappa)^2$ transitions labeled with $b$ .
Such transitions can be represented by a matrix
$A_{b}\in D^{(\theta_e\cdot \kappa) \times (\theta_e\cdot \kappa)}$.
Additionally, for every internal node $u$ we have one matrix
$A_u\in D^{(\kappa) \times (\theta_e\cdot \kappa)}$ that captures the weights of
all transitions of the form $\state{u}{m_u}\trans[\eps]\state{e}{m_e}$.
Then, answering a configuration distance query $\conf{u}{S}$ with
$S=b_1,\dots, b_{|S|}$ amounts to evaluating the expression
\begin{align}\label{eq:matrix_mult}
\mathbf{\one}_{\kappa} \cdot A_u\cdot A_{b_{1}} \cdots  A_{b_{|S|}} \cdot \mathbf{\one}_{\kappa\cdot \theta_e}^{\top}
\end{align}
where $\mathbf{\one}_{z}$ is a row vector of $\one$s and size $z$, $\cdot^\top$
denotes the transpose, and matrix multiplication is taken over the semiring.
The situation is similar in the case of superconfiguration distances, where we
have one matrix $A_{\Module, \Module'}$ for each pair of modules $\Module$, $\Module'$
such that $\Module$ invokes $\Module'$.

Evaluating equation~\eqref{eq:matrix_mult} from left to right (or right to left)
yields a sequence of matrix-vector multiplications. 
The following two theorems use the results of~\cite{Liberty09}
and~\cite{Williams07} on matrix-vector multiplications to provide a speedup on
the distance extraction problem when the semiring has constant size $|D|=O(1)$.

\begin{theorem}[Mailman's speedup~\cite{Liberty09}]\label{them:weight_extraction_mailman}
Let $\RSM$ be an RSM over a semiring of constant size, and
$\Automaton$ an $\RSM$-automaton with $\kappa$ marks. 
After $O(|\RSM|\cdot \theta_e\cdot \theta_x\cdot \kappa^3)$ preprocessing
time, configuration and superconfiguration distance queries $\conf{u}{S}$ are
answered in
$O\left(|S|\cdot \frac{ \theta_e^2\cdot \kappa^2}{\log (\theta_e \cdot
\kappa)}\right)$ time.
\end{theorem}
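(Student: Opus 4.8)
The plan is to reduce the problem to iterated matrix-vector products over a constant-size semiring and then apply the Mailman algorithm to each product. First I would recall from Theorem~\ref{them:finite-height} that the configuration automaton $\Automaton_{\post^*}$ is constructed in time $O(\Height\cdot(|\RSM|\cdot\theta_e\cdot\kappa^2 + |\Call|\cdot\theta_e\cdot\theta_x\cdot\kappa^3))$; since the semiring has constant size, $\Height=O(1)$, and this bound simplifies to the stated $O(|\RSM|\cdot\theta_e\cdot\theta_x\cdot\kappa^3)$ preprocessing time (absorbing the first summand, as $|\Call|\le|\RSM|$ and $\theta_x\cdot\kappa\ge 1$). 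The extra preprocessing for the superconfiguration case---building $\ov{\Automaton}_{\post^*}$ with module-labeled transitions---takes only $O(|\RSM|\cdot\theta_e)$ as argued in Section~\ref{subsec:semiring_dist}, so it does not affect the bound. Thus the preprocessing accounting is immediate.

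For the query, I would set up the matrices $A_u\in D^{\kappa\times(\theta_e\cdot\kappa)}$ and $A_{b_i}\in D^{(\theta_e\cdot\kappa)\times(\theta_e\cdot\kappa)}$ exactly as in equation~\eqref{eq:matrix_mult}, collecting all transitions labeled with a given box (or module) into one matrix; this collection step costs $O(\theta_e^2\cdot\kappa^2)$ per distinct label and is done once, or can be folded into preprocessing. Evaluating \eqref{eq:matrix_mult} from right to left is a sequence of $|S|+1$ matrix-vector multiplications, each with a matrix of dimension at most $m\times m$ with $m=\theta_e\cdot\kappa$. The naive cost of each is $O(m^2)=O(\theta_e^2\cdot\kappa^2)$, giving the bound of Theorem~\ref{them:weight_extraction}; the point here is to do better. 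Since $|D|=O(1)$, the Mailman algorithm of~\cite{Liberty09} computes a single such matrix-vector product over a constant-size semiring in time $O(m^2/\log m)$ after writing the matrix in a suitable form. Applying it to each of the $|S|+1$ products yields total query time $O\big(|S|\cdot \frac{\theta_e^2\cdot\kappa^2}{\log(\theta_e\cdot\kappa)}\big)$, as claimed, with the $\mathbf{\one}$ row and column vectors handled trivially at the two ends.

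The one subtlety I would want to address carefully is that the Mailman speedup is normally stated for Boolean (or finite-field) matrix-vector multiplication, whereas here multiplication is over an arbitrary idempotent semiring of constant size. The resolution is that Mailman's method only relies on the matrix entries coming from a constant-size alphabet: it partitions the columns of the matrix into blocks of $\log m$ columns, precomputes (in a batched, amortized fashion) the contribution of every possible $\log m$-block pattern of the vector, and reads off the answer by table lookup; none of this requires field structure, only that the semiring operations $\oplus,\otimes$ on a constant domain cost $O(1)$, which holds by assumption. I would phrase this as a remark that~\cite{Liberty09} extends verbatim to constant-size semirings, or cite it as folklore, since the combinatorial core is unchanged. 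This generalization is the main (mild) obstacle; everything else is bookkeeping of the dimensions $m=\theta_e\cdot\kappa$ and the length $|S|$, together with the observation that the preprocessing bound of Theorem~\ref{them:finite-height} collapses under $\Height=O(1)$.
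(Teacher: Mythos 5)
Your proposal is correct and follows essentially the same route as the paper: the preprocessing bound is just the automaton construction of Theorem~\ref{them:finite-height} with $\Height=O(1)$ (plus the linear-time relabeling for superconfigurations), and the query is the right-to-left evaluation of equation~\eqref{eq:matrix_mult} as $|S|+O(1)$ matrix-vector products of dimension $m=\theta_e\cdot\kappa$, each accelerated to $O(m^2/\log m)$ by the Mailman algorithm. Your extra remark justifying that the Mailman table-lookup scheme needs only a constant-size entry alphabet and $O(1)$-cost semiring operations, not field structure, is a worthwhile point that the paper leaves implicit behind the citation to~\cite{Liberty09}.
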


\begin{theorem}[Williams's speedup~\cite{Williams07}]\label{them:weight_extraction_williams}
Let $\RSM$ be an RSM over a semiring of size $|D|$, and
$\Automaton$ an $\RSM$-automaton with $\kappa$ marks. 
For any fixed $\eps>0$, let
$X=|\RSM|\cdot \theta_e\cdot \theta_x\cdot \kappa^3$ and
$Z=|\RSM|\cdot \kappa \cdot (\theta_e\cdot \kappa)^{1+\eps \log_2|D|}$.
After $O(\max(X,Z))$ preprocessing time, configuration and superconfiguration
distance queries $\conf{u}{S}$ are answered in
$O\left(|S|\cdot \frac{ \theta_e^2\cdot \kappa^2}{\eps^2\cdot \log^2 (\theta_e
\cdot \kappa)}\right)$ time.
\end{theorem}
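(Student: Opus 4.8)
The plan is to reduce the extraction problem to a chain of matrix--vector multiplications with \emph{fixed} matrices, and then invoke the preprocessing result of~\cite{Williams07}; structurally this mirrors the proof of Theorem~\ref{them:weight_extraction_mailman}, only plugging in a different matrix--vector primitive. Recall from the discussion preceding the theorem that a configuration distance query $\conf{u}{S}$ with $S = b_1 \dots b_{|S|}$ amounts to evaluating the semiring expression~\eqref{eq:matrix_mult}, i.e.\ $\mathbf{\one}_{\kappa} \cdot A_u \cdot A_{b_1} \cdots A_{b_{|S|}} \cdot \mathbf{\one}_{\kappa\cdot\theta_e}^{\top}$, where each $A_b$ is a $(\theta_e\cdot\kappa)\times(\theta_e\cdot\kappa)$ matrix over $D$ and $A_u$ is a $\kappa\times(\theta_e\cdot\kappa)$ matrix, all read off from $\Automaton_{\post^*}$ (and, for superconfiguration queries, the module-labeled matrices $A_{\Module,\Module'}$ of $\ov{\Automaton}_{\post^*}$ from Section~\ref{subsec:semiring_dist}). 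First I would evaluate this expression from right to left: starting from the all-$\one$ column vector, multiply successively by $A_{b_{|S|}}, \dots, A_{b_1}$, then by $A_u$, and finally take the inner product with $\mathbf{\one}_\kappa$. This is a sequence of $|S|$ matrix--vector products over the semiring with the $m\times m$ matrices $A_{b_i}$ (where $m = \theta_e\cdot\kappa$), plus one cheaper product with $A_u$ and a dot product. The key observation is that the matrices $A_u$, $A_b$, $A_{\Module,\Module'}$ depend only on $\Automaton_{\post^*}$, not on the query, so all of them can be preprocessed once, up front.

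Second, I would apply the matrix--vector multiplication result of~\cite{Williams07}: for a fixed $m\times m$ matrix over a semiring with $|D|$ elements and any fixed $\eps>0$, after preprocessing time that is (up to constants) the number of nonzero entries of the matrix times $m^{\eps\log_2|D|}$ -- obtained by choosing the block length in the Four-Russians-style construction as $\eps\log_2 m$, so that each length-$t$ vector block ranges over $(\theta_e\cdot\kappa)^{\eps\log_2|D|}$ values -- every subsequent product $A v$ is computed in time $O\bigl(m^2/(\eps^2\log^2 m)\bigr)$. Summing the preprocessing over all matrices extracted from $\Automaton_{\post^*}$ -- one $A_u$ per internal node, one $A_b$ per box, and for superconfigurations one $A_{\Module,\Module'}$ per invoking pair (of which there are at most as many as boxes) -- and using that their combined number of nonzero entries is bounded by the number of transitions of $\Automaton_{\post^*}$, i.e.\ $O(|\RSM|\cdot\theta_e\cdot\kappa^2)$ by Lemma~\ref{lem:complexity}, the total preprocessing is $O\bigl(|\RSM|\cdot\theta_e\cdot\kappa^2\cdot(\theta_e\cdot\kappa)^{\eps\log_2|D|}\bigr) = O(Z)$. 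Since $\Automaton_{\post^*}$ must first be constructed, which costs $O(X)$ with $X = |\RSM|\cdot\theta_e\cdot\theta_x\cdot\kappa^3$ (Theorem~\ref{them:finite-height}; the module-relabeling step of Section~\ref{subsec:semiring_dist} adds only $O(|\RSM|\cdot\theta_e)$), the overall preprocessing is $O(\max(X,Z))$. Each of the $|S|$ matrix--vector steps in a query then costs $O\bigl(m^2/(\eps^2\log^2 m)\bigr) = O\bigl(\theta_e^2\kappa^2/(\eps^2\log^2(\theta_e\cdot\kappa))\bigr)$, for a query total of $O\bigl(|S|\cdot\theta_e^2\kappa^2/(\eps^2\log^2(\theta_e\cdot\kappa))\bigr)$, the remaining $A_u$-product and dot product being dominated by this.

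Correctness of the returned value is immediate from~\eqref{eq:matrix_mult} together with the correctness of $\Automaton_{\post^*}$ (Theorem~\ref{them:finite-height}) and, for superconfigurations, the relabeling identity of Section~\ref{subsec:semiring_dist}, so the whole argument is a complexity bookkeeping exercise. The one place needing care -- the main obstacle -- is making the preprocessing accounting match the claimed $Z$: the bound in~\cite{Williams07} is phrased for a single dense $n\times n$ matrix, and a naive per-matrix application would contribute a $\sum_b (\theta_e\cdot\kappa)^2$ term that need not be dominated by $X$; one has to use the input-sensitive form of the preprocessing and the fact that the matrices $A_b$ (resp.\ $A_{\Module,\Module'}$) are collectively sparse, with only $O(|\RSM|\cdot\theta_e\cdot\kappa^2)$ nonzero entries in total, rather than treating each as dense. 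Once this aggregate/sparse accounting is set up, the stated trade-off between $O(\max(X,Z))$ preprocessing and $O\bigl(|S|\cdot\theta_e^2\kappa^2/(\eps^2\log^2(\theta_e\cdot\kappa))\bigr)$ query time falls out directly.
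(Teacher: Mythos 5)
Your proposal is correct and follows essentially the same route as the paper, which only sketches this argument: Section~\ref{subsec:semiring_dist_domain} reduces a query to the matrix--vector chain of equation~\eqref{eq:matrix_mult} and invokes the preprocessing-based multiplication of~\cite{Williams07}, exactly as you do with right-to-left evaluation and one-time preprocessing of the query-independent matrices $A_u$, $A_b$, $A_{\Module,\Module'}$. Your point that the preprocessing must be charged to the \emph{collective} number of nonzero entries, i.e.\ the $O(|\RSM|\cdot\theta_e\cdot\kappa^2)$ transitions of $\Automaton_{\post^*}$, rather than treating each $A_b$ as a dense $(\theta_e\kappa)\times(\theta_e\kappa)$ matrix, is precisely the accounting that makes the total equal $Z=|\RSM|\cdot\kappa\cdot(\theta_e\cdot\kappa)^{1+\eps\log_2|D|}$, and is implicit in the paper's statement.
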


Finally, using the Four-Russians technique for parsing on non-deterministic
automata~\cite{Myers92}, we obtain the following speedup for the case of
reachability.
We note that although the alphabet is not of constant size (i.e., the number of
boxes is generally non-constant) this poses no overhead, as long as comparing two
boxes for equality requires constant time (which is the case in the standard RAM
model).

\begin{theorem}[Four-Russians speedup~\cite{Myers92}]\label{them:weight_extraction_four_russians}
Let $\RSM$ be an RSM over a binary semiring, and $\Automaton$ an
$\RSM$-automaton with $\kappa$ marks. 
After $O(|\RSM|\cdot \theta_e\cdot \theta_x\cdot \kappa^3)$ preprocessing
time, configuration and superconfiguration distance queries $\conf{u}{S}$ are
answered in
$O\left(|\RSM|\cdot \theta_e\cdot \kappa^2\cdot \frac{ |S|}{\log
(|S|)}\right)$ time.
\end{theorem}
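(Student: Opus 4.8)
\myparagraph{Proof idea.}
The plan is to reduce a query over the binary semiring to a membership (parsing) test on an ordinary nondeterministic finite automaton, and then invoke the Four-Russians NFA-simulation algorithm of Myers~\cite{Myers92} as a black box. First I would specialize the output of $\ConfDistBounded$ to the Boolean semiring $\booleanSemiring$, whose height is $\Height = O(1)$. By Theorem~\ref{them:finite-height} the constructed automaton $\Automaton_{\post^*}$ satisfies $\Automaton_{\post^*}(c) = \rsmdist(\lang(\Automaton),c)$ for every configuration $c$, it has $O(|\RSM|\cdot\theta_e\cdot\kappa^2)$ transitions by Lemma~\ref{lem:complexity}, and it is built in $O(|\RSM|\cdot\theta_e\cdot\kappa^2 + |\Call|\cdot\theta_e\cdot\theta_x\cdot\kappa^3) = O(|\RSM|\cdot\theta_e\cdot\theta_x\cdot\kappa^3)$ time, which fits the claimed preprocessing budget. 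Over $\booleanSemiring$ we have $\Automaton_{\post^*}(\conf{u}{S}) = \bigoplus_{\lambda\in\Runs(\conf{u}{S})}\runwt(\lambda)$, and since all transition weights lie in $\{\zero,\one\}$ a run contributes $\one$ precisely when every transition on it has weight $\one$. Thus, letting $\mathcal{N}$ be the plain $\eps$-NFA over the box alphabet $B$ obtained from $\Automaton_{\post^*}$ by discarding weight-$\zero$ transitions, the query value $\rsmdist(\lang(\Automaton),\conf{u}{S})$ equals $\one$ iff $\mathcal{N}$ has a run labeled $S$ from some state $\state{u}{m}$ to a final state; by Proposition~\ref{prop:run-shape} we may take the start set to be the precomputed set of entry states reachable from $\state{u}{m}$ by a single $\eps$-transition, after which only b-transitions occur.

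Second I would run Myers's algorithm on $\mathcal{N}$ and the word $S = b_1\dots b_{|S|}$: an NFA with $m$ transitions can be simulated on a length-$n$ word in $O(m\cdot n/\log n)$ time by maintaining the bit-vector of active states in $\Theta(\log n)$-sized blocks and advancing it one block per step through a Four-Russians table. With $m = O(|\RSM|\cdot\theta_e\cdot\kappa^2)$ and $n = |S|$ this yields the stated query time $O(|\RSM|\cdot\theta_e\cdot\kappa^2\cdot|S|/\log|S|)$. Two technical points must be checked. (i)~Since the box alphabet is not of constant size, the per-block transition tables cannot be precomputed over the whole alphabet; instead each entry is filled lazily during the scan, keyed by the boxes that actually occur in $S$, at cost $O(\log|S|)$ per entry. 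Over the whole scan this adds only $O(m\cdot|S|/\log|S|)$ work, which is already absorbed by the query bound, and it keeps the one-time preprocessing at $O(|\RSM|\cdot\theta_e\cdot\theta_x\cdot\kappa^3)$; here it suffices that two boxes can be tested for equality in $O(1)$ time in the RAM model, as noted in the remark preceding the theorem. (ii)~The block size must be tied to the query length $|S|$ rather than to $m$, which is exactly why the tables are materialized at query time and not during preprocessing.

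For superconfiguration distance queries $\conf{u}{\ov{S}}$ with $\ov{S} = \Module_1\dots\Module_{|\ov{S}|}$ I would repeat the argument verbatim on the module-labeled automaton $\ov{\Automaton}_{\post^*}$ of Section~\ref{subsec:semiring_dist}, which is also produced during the one-time preprocessing and again has $O(|\RSM|\cdot\theta_e\cdot\kappa^2)$ transitions; simulating it on the module word then costs $O(|\RSM|\cdot\theta_e\cdot\kappa^2\cdot|\ov{S}|/\log|\ov{S}|)$. I expect the only genuine obstacle to be the accounting in point (i): verifying that the lazily built Four-Russians tables keep both the per-query and the preprocessing costs within the advertised bounds, and that a non-constant alphabet really costs nothing beyond constant-time symbol comparison. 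Everything else is a routine instantiation of Theorem~\ref{them:finite-height}, Lemma~\ref{lem:complexity}, and Proposition~\ref{prop:run-shape}, followed by the citation to~\cite{Myers92}.
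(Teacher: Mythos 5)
Your proposal matches the paper's (largely implicit) argument: the paper also reduces the query to parsing the stack word on the $O(|\RSM|\cdot\theta_e\cdot\kappa^2)$-transition automaton $\Automaton_{\post^*}$ (equivalently, the iterated matrix--vector products of equation~\eqref{eq:matrix_mult} over the Boolean semiring) and then invokes Myers's Four-Russians NFA-simulation bound, flagging exactly the same non-constant-alphabet caveat that you resolve in your point~(i). Your write-up is, if anything, more explicit than the paper's about why the preprocessing fits the budget and how the start-state set is obtained via Proposition~\ref{prop:run-shape}, so no gap remains.
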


\subsection{A Speedup for Sparse RSMs}\label{subsec:sparse_dist}

We call an RSM $\RSM$ \emph{sparse} if there is a constant bound $r$ such that
for all modules $\Module_i$ we have $|\set{ Y_i(b) \mid b \in B_i}| \leq r$
i.e., every module invokes at most $r$ other modules (although $\Module_i$ can
have arbitrarily many boxes).
Typical call-graphs of most programs are very sparse, 
e.g., typical call graphs of thousands of nodes have average degree at most eight~\cite{Bhattacharya12,Qu15}.
Hence, an RSM modeling a typical program is expected to comprise thousands of
modules, while the average module invokes a small number of other modules.
Although this does not imply a constant bound on the number of invoked modules,
such an assumption provides a good theoretical basis for the analysis of typical
programs.

Our goal is to provide a speedup for extracting superconfiguration distances w.r.t.~a
sparse RSM.
This is achieved by an additional polynomial-time preprocessing, which then
allows to process a distance query in blocks of logarithmic size, and thus offers
a speedup of the same order.

Given an RSM $\RSM$ of $k$ modules and an integer $z$, there exist at most
$k\cdot r^z$ valid module sequences $\Module_1\dots, \Module_{z+1}$ which can
appear as a substring in a module sequence $\ov{S}$ which is refined by some
stack $S$.
Recall the definition of the matrices
$A_{\Module, \Module'}\in D^{(\theta_e\cdot \kappa) \times (\theta_e\cdot
\kappa)}$ from Section~\ref{subsec:semiring_dist_domain}.
For every valid sequence of $z+1$ modules $s=\Module_1\dots, \Module_{z+1}$, we
construct a matrix
$A_{s}=A_{\Module_1, \Module_2}\cdot A_{\Module_2,\Module_3}\cdot \dots
A_{\Module_z,\Module_{z+1}}$ in total time
\begin{align}\label{eq:matrix_complexity}
k\cdot (\theta_e\cdot \kappa)^{\omega} \sum_{i=1}^z r^i = O\left( |\RSM|\cdot  \theta_e^{\omega-1} \kappa^{\omega}\cdot r^z \right)
\end{align}
where $(\theta_e\cdot \kappa)^\omega=\Omega(\theta^2\cdot \kappa^2)$ is time
require to multiply two
$D^{(\theta_e\cdot \kappa) \times (\theta_e\cdot \kappa)}$ matrices
(currently $\omega\simeq 2.372$, due to~\cite{Williams12}).

Observe that as long as $z=O(\log|\RSM|)$, there are polynomially many such
sequences $s$, and thus each one can be indexed in $O(1)$ time on the standard
RAM model. Then a superconfiguration distance query $\tup{u,S}$ can be answered by
grouping $S$ in $\lceil\frac{|S|}{z}\rceil $ blocks of size $z$ each, and for each such block $s$
multiply with matrix $A_{s}$.

\begin{theorem}[Sparsity speedup]\label{them:weight_extraction_sparse}
Let $\RSM$ be a sparse RSM over a semiring of height $\Height$, 
and $\Automaton$ an $\RSM$-automaton with $\kappa$ marks. 
Let $X=\Height\cdot |\RSM|\cdot \theta_e\cdot \theta_x\cdot \kappa^3$,
and given an integer parameter $x = O(\poly|\RSM|)$, let
$Z=|\RSM|\cdot \theta_e^{\omega-1} \kappa^{\omega}\cdot x$.
After $O(\max(X,Z))$ preprocessing time, superconfiguration distance queries
$\conf{u}{S}$ are answered in
$O\left(|S|\cdot \left \lceil \frac{\theta_e^2\cdot \kappa^2}{\log
x}\right\rceil \right)$ time.
\end{theorem}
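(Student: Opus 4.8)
The plan is to layer a Four-Russians-style block precomputation on top of the matrix encoding of Section~\ref{subsec:semiring_dist_domain}. Fix $z = \lfloor\log_r x\rfloor$, so that $r^z \le x$ and, since $x = O(\poly|\RSM|)$, also $z = O(\log|\RSM|) = \Theta(\log x)$. Preprocessing has two phases. In the first phase I run $\ConfDistBounded$ to build $\Automaton_{\post^*}$ in $O(\Height\cdot|\RSM|\cdot\theta_e\cdot\theta_x\cdot\kappa^3) = O(X)$ time (Theorem~\ref{them:finite-height}), then form the module-labeled automaton $\ov{\Automaton}_{\post^*}$ and the matrices $A_u \in D^{\kappa\times(\theta_e\kappa)}$ and $A_{\Module,\Module'} \in D^{(\theta_e\kappa)\times(\theta_e\kappa)}$ of Section~\ref{subsec:semiring_dist_domain}, in $O(|\RSM|\cdot\theta_e)$ extra time.

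In the second phase I enumerate every \emph{valid} module sequence $s = \Module_1\dots\Module_{z+1}$, i.e., those in which $\Module_{i+1}$ contains a box that invokes $\Module_i$ for every $i$, and for each store the product $A_s = A_{\Module_1,\Module_2}\cdot A_{\Module_2,\Module_3}\cdots A_{\Module_z,\Module_{z+1}}$ (the all-$\zero$ matrix on any inconsistent pair). By sparsity there are at most $k\cdot r^z \le |\RSM|\cdot r^z = O(|\RSM|\cdot x)$ such sequences, so the table is of polynomial size; moreover a sequence is encoded by its last module together with a vector in $\{1,\dots,r\}^{z}$ recording, at each step, which of the $\le r$ invoked modules is chosen, which fits in $O(\log|\RSM|)$ bits, i.e., a constant number of machine words, so the table supports $O(1)$-time lookup after also precomputing, for each module, the indices of the modules it invokes. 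Building the products incrementally in order of length, extending each length-$i$ product to a length-$(i+1)$ one by a single matrix multiplication, costs $\sum_{i=1}^z k\cdot r^i\cdot(\theta_e\kappa)^\omega$, which is precisely the bound of~\eqref{eq:matrix_complexity}, namely $O(|\RSM|\cdot\theta_e^{\omega-1}\kappa^\omega\cdot r^z) = O(Z)$. Hence total preprocessing time is $O(X) + O(Z) = O(\max(X,Z))$.

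To answer a query $\conf{u}{S}$ with $S = b_1\dots b_{|S|}$, I first replace each box by its module, in $O(|S|)$ time, obtaining the unique superstack $\ov S = \Module_1\dots\Module_{|S|}$ refined by $S$. By Section~\ref{subsec:semiring_dist} and the matrix formulation~\eqref{eq:matrix_mult} of Section~\ref{subsec:semiring_dist_domain}, $\rsmdist(R,\conf{u}{\ov S})$ equals the value of $\mathbf{\one}_\kappa\cdot A_u\cdot A_{\Module_1,\Module_2}\cdots A_{\Module_{|S|-1},\Module_{|S|}}\cdot\mathbf{\one}_{\kappa\theta_e}^{\top}$ computed over the semiring. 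I evaluate this as a left-to-right sequence of vector-matrix products, but first grouping the $A_{\Module_i,\Module_{i+1}}$ factors into $\lceil|S|/z\rceil$ contiguous blocks of $z$ factors (the final block possibly shorter, handled by also storing $A_s$ for sequences of length up to $z+1$) and replacing each block by the precomputed matrix $A_s$, whose choice-vector is assembled while scanning the block in $O(z)$ time. This leaves $O(\lceil|S|/z\rceil)$ vector-matrix multiplications of cost $O(\theta_e^2\kappa^2)$ each, giving total query time $O\bigl(|S| + \lceil|S|/z\rceil\cdot\theta_e^2\kappa^2\bigr) = O\bigl(|S|\cdot\lceil\theta_e^2\kappa^2/\log x\rceil\bigr)$ since $z = \Theta(\log x)$.

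I expect the delicate part to be purely quantitative: choosing the block size $z$ large enough for a $\Theta(\log x)$ query speedup while keeping the sequence table polynomial and $O(1)$-indexable (this is exactly where the hypothesis $x = O(\poly|\RSM|)$ enters), verifying that the incremental construction amortizes to $\sum_{i\le z} r^i = O(r^z)$ rather than $z\cdot r^z$ matrix multiplications so that the $O(Z)$ bound is met, and checking that the residual last block does not inflate the query bound. Correctness, by contrast, is routine: it rests only on associativity of semiring matrix multiplication and on the already-established fact (Section~\ref{subsec:semiring_dist}) that $\ov{\Automaton}_{\post^*}$ captures superconfiguration distances, so no new reasoning about $\ConfDistBounded$ or its invariants is needed.
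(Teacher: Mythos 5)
Your proposal is correct and follows essentially the same route as the paper: Section~\ref{subsec:sparse_dist} \emph{is} the paper's argument, namely precompute the products $A_s$ for all $O(k\cdot r^z)$ valid module sequences of length up to $z+1$ via the incremental construction of~\eqref{eq:matrix_complexity}, then answer a query by compressing the product~\eqref{eq:matrix_mult} into $\lceil|S|/z\rceil$ block lookups, with $z=\Theta(\log x)$ chosen so the table stays polynomial and $O(1)$-indexable. The details you add (the explicit choice-vector encoding for constant-time lookup, storing products of all lengths $\le z$ for the residual block) are exactly the right way to fill in what the paper leaves implicit.
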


By varying the parameter $z$, Theorem~\ref{them:weight_extraction_sparse}
provides a tradeoff between preprocessing and query times.
Finally, the presented method can be combined with the preprocessing on
constant-size semirings of Section~\ref{subsec:semiring_dist_domain} which leads
to a $\Theta(\log z)$ factor improvement on the query times of
Theorem~\ref{them:weight_extraction_mailman},
Theorem~\ref{them:weight_extraction_williams}, and
Theorem~\ref{them:weight_extraction_four_russians}.


\section{Context-Bounded Reachability in Concurrent Recursive State Machines}
\label{sec:rsm_concurrent}

Context bounding, i.e., limiting the number of context switches considered
during state space exploration, is an effective technique for systematic 
analysis of concurrent programs.
The context-bounded reachability problem in concurrent pushdown systems has been
studied in~\cite{QadeerR05}.
In this section we phrase the context-bounded reachability problem over concurrent RSMs
and show that the procedure of~\cite{QadeerR05} using our algorithm
$\ConfDistBounded$ together with the results of the previous sections give a
better time complexity for the problem.
As the section follows closely the well-known framework of concurrent pushdown
systems~\cite{QadeerR05}, we keep the description brief.

\myparagraph{Concurrent RSMs.}
A \emph{concurrent RSM (CRSM)} $\CRSM$ is a collection of RSMs $\RSM_i$ equipped with a
finite set of \emph{global states} $G$ used for communication between the RSMs.
To this end, the semantics of RSMs is lifted to $\RSM_i$-configurations of the
form $\gconf{g}{u_i}{S_i}$, carrying an additional global state $g \in G$.
Then, a \emph{global configuration} of $\CRSM$ is a tuple
$\tup{g,\lconf{u_1}{S_1},\dots,\conf{u_n}{S_n}}$, where $\gconf{g}{u_i}{S_i}$
are configurations of $\RSM_i$, respectively.
The semantics of $\CRSM$ over global configurations is the standard interleaving
semantics, i.e., in each step some RSM $\RSM_i$ modifies the global state and
its local configuration, while the local configuration of every other RSM
remains unchanged.

\myparagraph{Context-bounded reachability.}
For a positive natural number $k$ and a fixed initial global configuration $c$,
the \emph{$k$-bounded reachability problem} asks for all global configurations
$c'$ such that there is a computation from $c$ to $c'$ that switches control
between RSMs at most $k-1$ times.

\myparagraph{An algorithm for context-bounded reachability.}
The procedure of~\cite{QadeerR05} for solving the $k$-bounded reachability
problem for \emph{concurrent pushdown systems (CPDSs)} systematically performs
$\post^*$ operations on the reachable configuration set of every constituent
PDS, while capturing all possible interleavings within $k$ context switches.
The $k$-bounded reachability problem for CRSMs can be solved with an almost
identical procedure, replacing the black-box invocations of the PDS reachability
algorithm of~\cite{Schwoon02} with our algorithm $\ConfDistBounded$.
However, using our algorithm for each $\post^*$ operation, we obtain a
complexity improvement over the method of~\cite{QadeerR05}.

\myparagraph{Key complexity improvement.}
The key advantage of our algorithm as compared to~\cite{QadeerR05} is as 
follows: in the algorithm of~\cite{QadeerR05}, in each iteration the 
configuration automata, used to represent the reachable configurations of each 
component RSM, grows by a cubic term; in contrast, replacing with our 
algorithm the configuration automata grows only by a linear term in each 
iteration.
This comes from the fact that in our configuration automata every state 
corresponds to a node of the RSM, whereas such strong correspondence does not 
hold for the configuration automata of~\cite{QadeerR05}.

\begin{restatable}{theorem}{concurrent}
\label{them:concurrent}
For a concurrent RSM $\CRSM$, and a bound $k$, the procedure of~\cite[Figure~2]{QadeerR05}
using $\ConfDistBounded$ for performing $\post^*$ operations
correctly solves the $k$-bounded reachability problem and requires
$O(|\CRSM|\cdot \theta_e^{||}\cdot  \theta_x^{||}\cdot  n^k\cdot |G|^{k+2})$ time.
\end{restatable}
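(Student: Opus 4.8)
The plan is to split the argument into a correctness part and a complexity part, in both cases exploiting that the procedure of~\cite[Figure~2]{QadeerR05} is used essentially verbatim, with only the black-box $\post^*$ subroutine replaced. For correctness, I would first recall that~\cite{QadeerR05} already establishes that this procedure solves the $k$-bounded reachability problem for concurrent pushdown systems, provided the subroutine it invokes correctly computes, from a configuration automaton for a regular set $R$ of configurations of a component, a configuration automaton for $\post^*(R)$. I would then observe that $\ConfDistBounded$ provides exactly this interface: instantiated over the Boolean semiring $\booleanSemiring$ (which has height $2$), Lemma~\ref{lem:completeness} and Lemma~\ref{lem:soundness} — equivalently Theorem~\ref{them:finite-height} — give that $\lang(\Automaton_{\post^*}) = \post^*(\lang(\Automaton))$. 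The only genuinely new point is to formalize how one \emph{context} of a component $\RSM_i$ (an uninterrupted run segment during which the global state evolves) is captured by a $\post^*$ computation; this is realized by running $\ConfDistBounded$ on the product $\RSM_i \times G$, whose nodes are node--global-state pairs and whose transitions update the global component. The correctness of the concurrent procedure in the RSM phrasing then follows from~\cite{QadeerR05} by the same induction on the number of context switches.

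For the complexity I would bound two quantities and multiply. First, the number of invocations of $\ConfDistBounded$: following the structure of~\cite[Figure~2]{QadeerR05}, the reachable aggregate configurations are explored in a tree of depth $k$, one level per context switch, where each node branches over which of the $n$ components runs next and over the global state reached at the following switch point ($|G|$ choices), for a total of $O(n^k\cdot|G|^k)$ $\post^*$ computations. Second, the cost of a single such computation: by Theorem~\ref{them:finite-height} with $\Height = O(1)$, a $\post^*$ on $\RSM_i\times G$ with an input automaton of $\kappa$ marks costs $O(|\RSM_i\times G|\cdot\theta_e\cdot\kappa^2 + |\Call(\RSM_i\times G)|\cdot\theta_e\cdot\theta_x\cdot\kappa^3)$, with $\theta_e,\theta_x$ referring to $\RSM_i\times G$. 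The crucial quantitative input, and the source of the improvement over~\cite{QadeerR05}, is that in our configuration automata every state corresponds to a node, so that across all $k$ contexts the per-component automaton keeps $O(1)$ marks and grows only additively in each iteration rather than cubically; substituting $|\RSM_i\times G| = O(|\RSM_i|\cdot|G|)$, $\theta_e = O(\theta_e^{||}\cdot|G|)$, $\theta_x = O(\theta_x^{||}\cdot|G|)$ and $\kappa = O(1)$ gives a per-computation cost of $O(|\RSM_i|\cdot\theta_e^{||}\cdot\theta_x^{||}\cdot|G|^2)$, and hence, summing over all components and all invocations, a total of $O(|\CRSM|\cdot\theta_e^{||}\cdot\theta_x^{||}\cdot n^k\cdot|G|^{k+2})$.

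The main obstacle I anticipate is the fine bookkeeping of the per-component configuration automata as they are threaded through the $k$ context switches: one must verify that re-entering a component at a new global state, while its previously computed reachable set is carried over unchanged during the runs of other components, can be arranged so that the mark count $\kappa$ stays $O(1)$ (rather than $\Omega(k)$ from accumulating one fresh mark per $\post^*$) and the automaton size grows only linearly — this is precisely the ``key complexity improvement'' claim and is what must be made rigorous. A secondary, more mechanical obstacle is transcribing the procedure of~\cite{QadeerR05} and its correctness argument, which are stated for pushdown systems, into the RSM formulation; by the linear equivalence of the two models this is routine, but it has to be spelled out carefully enough to certify that the translation introduces no hidden blow-up in any of the parameters.
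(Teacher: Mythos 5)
Your proposal follows essentially the same route as the paper: correctness is delegated to~\cite{QadeerR05} plus Theorem~\ref{them:finite-height}, and the complexity is obtained by multiplying the number of $\post^*$ invocations ($O((n\cdot|G|)^k)$, one tree level per context switch) by the cost of a single invocation of $\ConfDistBounded$ on the product with $G$. The one place where you diverge is exactly the obstacle you flag: the bookkeeping of the mark count $\kappa$. The paper does \emph{not} arrange for $\kappa$ to stay $O(1)$; by Theorem~\ref{them:finite-height} each run of $\ConfDistBounded$ adds one fresh mark, so the automaton fed to the $i$-th level of the procedure has $i$ marks, and the per-invocation cost at that level carries a factor $i^3$. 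The paper's total is then $|\RSM|\cdot|G|^2\cdot\theta_e\cdot\theta_x\cdot\sum_{i=1}^{k} i^3\cdot(n\cdot|G|)^i$, and the $\poly(k)$ factor is absorbed (with the context bound $k$ treated as a small fixed constant, as is standard in this setting). So you need not prove the stronger claim that $\kappa=O(1)$ — which would indeed require some compaction of the automaton that the paper never performs — you only need the linear growth of $\kappa$ guaranteed by Theorem~\ref{them:finite-height}; the ``key complexity improvement'' over~\cite{QadeerR05} is precisely linear rather than cubic growth of the automaton per iteration, not constant size. With that adjustment your accounting coincides with the paper's and the stated bound follows.
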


\noindent
Compared to Theorem~\ref{them:concurrent}, solving the CRSM problem by
translation to a CPDS and using the algorithm of~\cite{QadeerR05} gives the
bound $O(|\CRSM|^5 \cdot \theta^{||~5}_x\cdot n^k\cdot |G|^k)$.
Conversely, solving the CPDS problem by translation to a CRSM and using our
algorithm gives an improvement by a factor $\Omega(|\CPDS|^3/|G|^2)$.
We refer to \appendixreftr{Appendix~\ref{app:discussion-concurrent}} for a
detailed discussion.


\section{Experimental Results}
\label{sec:evaluation}

In this section we empirically demonstrate the algorithmic improvements achieved
by our RSM-based algorithm over existing PDS-based algorithms on interprocedural
program analysis problems.
The main goal is to demonstrate the improvements in algorithmic ideas rather
than implementation details and engineering aspects.
In particular, we implemented our algorithm $\ConfDistBounded$ in a prototype
tool and compared its efficiency against jMoped~\cite{jmoped}, which implements
the algorithms of~\cite{Schwoon02,Reps05} and is a leading tool for the analysis
of weighted pushdown systems.
In all cases we used an explicit representation of data valuations on the nodes of
RSMs, as opposed to a symbolic semiring representation.
All experiments were run on a machine with an Intel Xeon CPU and a memory limit
of 80GB.
We first present our result on a synthetic example to verify the algorithmic
improvements on a constructed family, and then present results on real-world
benchmarks.

\subsection{A Family of Dense RSMs}\label{subsec:example}
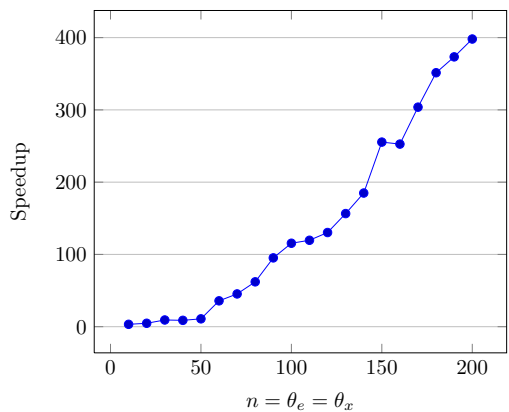
\begin{wrapfigure}{r}{0.54\textwidth}
  \centering
  \begin{tikzpicture}[scale=.8]
    \begin{axis}[
      xlabel={$n = \theta_e = \theta_x$},
      ylabel={Speedup},
      ymajorgrids=true,
      ]
      \addplot table [x=ex, y=speedup, col sep=tab] {graphics/scaling_dense.csv};
    \end{axis}
  \end{tikzpicture}
  \caption{Speedup of our algorithm over the algorithms of~\cite{Schwoon02,Reps05} implemented by jMoped on the RSM family $\RSM_n$.}
  \label{fig:scaling_dense}
\end{wrapfigure}


For our first experiments we constructed a family of dense RSMs that can be scaled 
in size.
The purpose of this experiments is to verify that (i)~our algorithm indeed
achieves a speedup over the algorithms of~\cite{Schwoon02,Reps05}, 
and (ii)~the speedup scales with the size of the input to ensure that
improvements on real-world benchmarks are not due to implementation details,
such as the used data types.
Let $\RSM_n$ be a single-module RSM that consists of $n$ entries and $n$ exits,
and a single box which makes a recursive call.
The transition relation is
$\delta = (\En \times (\Call \cup \Ex)) \cup (\Ret \times \Ex)$,
i.e., every entry node connects to every call and exit node, and every return
node connects to every exit node. 
Hence $|\RSM_n|=n^2$.
The transition weights are irrelevant, as we will focus on reachability.
The initial configuration automaton $\Automaton$ contains a single entry state.
We considered $\RSM_n$ with $n$ in the range from 10 to 200.
For each RSM, we used the standard translation to a PDS~\cite{AlurBEGRY05}, and
then applied our tool and jMoped to compute a configuration automaton that
represents $\post^{*}(\lang(\Automaton))$.
Figure~\ref{fig:scaling_dense} depicts the obtained speedup, which scales
linearly with $n$.
We have also experimented with other similar synthetic RSMs with different means 
of scaling; and in all cases the obtained speedups have the same qualitative behavior.
This confirms the theoretical algorithmic improvements of our algorithm on the
synthetic benchmarks. 

\subsection{Boolean Programs from SLAM/SDV}

\myparagraph{Benchmarks.}
For our second experiments we used the collection of Boolean programs distributed as
part of the SLAM/SDV project~\cite{BallBLKL10,BallR00}. 
These programs are the final abstractions in the verification of Windows device
drivers, and thus they represent RSMs obtained from real-world programs. 
From the Boolean programs we obtained RSMs where every node represents a control
location together with a valuation of Boolean variables, and call/entry and
exit/return nodes model the parameter passing between functions.
Thus, the RSMs are naturally multi-entry-multi-exit.
Overall we obtained 73 RSMs, which correspond to the largest Boolean programs
possible to handle explicitly.

\myparagraph{Evaluation.}
To ensure a fair performance comparison, we applied two preprocessing steps to
the benchmark RSMs.
\begin{compactitem}
\item First, to ensure that both tools compute the same result without any
potential unnecessary work, we restricted the state space of the RSMs to the
interprocedurally reachable states.
\item Second, to focus on the performance of interprocedural analysis, we eliminated
all internal nodes by computing the intraprocedural transitive closure within
every RSM module.
\end{compactitem}
The above two transformations ensure preprocessing steps like removal of
unreachable states and intraprocedural analysis is already done, and we compare
the interprocedural algorithmic aspects of the algorithms.
For each RSM, we used the standard translation to a PDS~\cite{AlurBEGRY05}, and
then applied our tool and jMoped to compute a configuration automaton that
represents $\post^{*}(\lang(\Automaton))$, where $\Automaton$ is an initial
configuration automaton that contains the entry states of the main module.
Table~\ref{tbl:bp-experiments} shows for every benchmark the number of RSM
transitions (Trans.), their ratio to nodes (D), the runtime for computing the intraprocedural transitive
closure (TC), the runtime of jMoped (jMop), the runtime of our tool (Ours), and
the speedup our tool achieved over jMoped (SpUp).

Out tool clearly outperforms jMoped on every benchmark, with speedups from 3.94
up to 28.48. 
The runtimes of our tool range from 0.13 to 33.96 seconds, while the runtimes of
jMoped range from 1.03 to 950.82 seconds.
Thus, our experiments show that also for real-world examples our algorithm
successfully exploits the structure of procedural programs preserved in RSMs.
This shows the potential of our algorithm for building program analysis tools.

Note that the benchmark RSMs are quite large, with millions of nodes and
transitions, which even a basic implementation of our algorithm handled quite 
efficiently.
Moreover, in our experiments we observed that our tool uses considerably less
memory than jMoped.
While we set 80GB as the memory limit, the peak memory consumption of jMoped 
was 72GB, whereas our tool solved all benchmarks with less than 32GB memory.

\begin{table}[tb]
  \newcommand{\tableheadings}{
    \bf \# &
    \bf Trans. &
    \bf D &
    \multicolumn{1}{c}{\bf TC} &
    \multicolumn{1}{c}{\bf jMop} &
    \multicolumn{1}{c}{\bf Ours} &
    \multicolumn{1}{c|}{\bf SpUp} \\}
  \newcommand{\tablecolspec}{|r|rr|cccc|}
  \scriptsize
    \begin{tabular}[t]{|r|rr|d{3.2}d{3.2}d{3.2}d{3.2}|}
      \hline
      \tableheadings
      \hline
       1 &    246,101 & ~1.9 &   1.18 &   1.10 &  0.28 & 3.94 \\
       2 &    216,021 & 0.8 &   0.70 &   1.03 &  0.26 & 3.96 \\
       3 &    593,041 & 1.5 &   1.05 &   2.05 &  0.49 & 4.19 \\
       4 &  1,043,217 & 1.2 &   3.01 &   4.67 &  1.11 & 4.20 \\
       5 &    329,088 & 1.4 &   1.41 &   1.43 &  0.34 & 4.24 \\
       6 & 10,281,149 & 3.0 &  11.36 &  52.00 & 10.61 & 4.90 \\
       7 &    908,092 & 1.7 &   2.04 &   3.31 &  0.65 & 5.08 \\
       8 &    969,388 & 2.2 &   2.00 &  33.71 &  6.60 & 5.11 \\
       9 &    298,126 & 1.5 &   0.68 &   1.31 &  0.25 & 5.23 \\
      10 &  1,780,776 & 1.3 &   5.82 &   6.44 &  1.20 & 5.35 \\
      11 &    163,853 & 1.4 &   0.33 &   1.03 &  0.19 & 5.35 \\
      12 &    205,608 & 1.0 &   0.50 &   4.62 &  0.86 & 5.36 \\
      13 & 28,568,561 & 1.7 &  23.21 & 102.54 & 18.82 & 5.45 \\
      14 & 21,911,277 & 1.8 &  15.79 &  80.41 & 14.64 & 5.49 \\
      15 &  2,453,881 & 1.5 &   4.54 &   9.57 &  1.72 & 5.55 \\
      16 &  5,833,574 & 1.8 &   6.97 &  21.14 &  3.80 & 5.56 \\
      17 &    332,768 & 0.8 &   0.77 &   2.28 &  0.41 & 5.59 \\
      18 &  1,782,697 & 1.3 &   5.79 &   6.70 &  1.20 & 5.60 \\
      19 &    246,127 & 1.9 &   1.31 &   1.36 &  0.24 & 5.63 \\
      20 & 21,648,560 & 1.8 &  15.50 &  79.45 & 14.01 & 5.67 \\
      21 &  7,033,834 & 2.1 &   8.23 &  23.97 &  4.21 & 5.70 \\
      22 & 28,944,391 & 1.7 &  24.26 & 105.00 & 18.15 & 5.78 \\
      23 &    464,004 & 1.7 &   0.75 &   2.17 &  0.37 & 5.83 \\
      24 &    424,916 & 1.6 &   1.20 &   2.94 &  0.49 & 5.96 \\
      25 & 22,186,326 & 1.6 &  17.77 &  63.27 & 10.56 & 5.99 \\
      26 & 11,719,007 & 5.2 &  20.36 &  52.29 &  8.55 & 6.11 \\
      27 &  2,989,001 & 1.4 &   3.55 &  11.04 &  1.80 & 6.12 \\
      28 &  1,952,647 & 1.3 &   3.83 &   7.98 &  1.30 & 6.13 \\
      29 &  7,970,359 & 3.2 &   4.04 &  30.16 &  4.70 & 6.42 \\
      30 &    682,435 & 2.1 &   2.14 &   4.88 &  0.76 & 6.42 \\
      31 &  9,480,799 & 4.9 &  17.23 &  44.34 &  6.77 & 6.55 \\
      32 &    845,867 & 2.4 &   1.59 &   3.22 &  0.48 & 6.67 \\
      33 &    953,420 & 3.1 &   1.22 &   4.51 &  0.67 & 6.77 \\
      34 &  1,205,731 & 2.0 &   3.31 &   4.68 &  0.68 & 6.84 \\
      35 &    754,270 & 1.7 &   4.25 &  22.28 &  3.23 & 6.90 \\
      36 &  1,463,749 & 2.0 &   2.38 &   6.10 &  0.88 & 6.95 \\
      37 &    434,884 & 5.8 &   6.85 &   1.90 &  0.27 & 7.10 \\
      \hline    
    \end{tabular}
    \hfill
    \begin{tabular}[t]{|r|rr|d{3.2}d{3.2}d{3.2}d{3.2}|}
      \hline
      \tableheadings
      \hline
      38 & 14,473,411 &   1.5 &   9.68 &  53.38 &  7.49 & 7.13 \\
      39 & 11,616,241 &   3.3 &  19.59 &  42.73 &  5.54 &  7.71 \\
      40 &    300,401 &   2.6 &   0.74 &   1.05 &  0.14 &  7.79 \\
      41 &  1,916,064 &   2.3 &   3.38 &  10.83 &  1.39 &  7.80 \\
      42 &    216,070 &   1.7 &   0.56 &   1.37 &  0.17 &  7.83 \\
      43 &  1,293,130 &   2.3 &   2.06 &   5.44 &  0.69 &  7.92 \\
      44 &  8,364,920 &   2.1 &   6.31 &  32.95 &  4.09 &  8.05 \\
      45 & 18,733,065 &   4.9 &  10.84 &  62.14 &  7.63 &  8.15 \\
      46 &  5,373,059 &   6.4 &   8.66 &  18.20 &  2.17 &  8.38 \\
      47 &  1,342,348 &   1.6 &   4.75 &   5.02 &  0.58 &  8.73 \\
      48 &    779,369 &   7.2 &   1.94 &   6.73 &  0.77 &  8.75 \\
      49 & 18,812,123 &   4.9 &   8.87 &  63.86 &  6.99 &  9.14 \\
      50 & 40,025,428 &   6.3 &  36.49 & 310.16 & 33.07 &  9.38 \\
      51 &  2,503,668 &  15.3 &  21.53 &  10.17 &  1.08 &  9.44 \\
      52 & 40,084,249 &   6.2 &  36.37 & 320.70 & 33.96 &  9.44 \\
      53 &  4,852,736 &   6.5 &   4.14 &  17.68 &  1.83 &  9.64 \\
      54 & 18,520,461 &   5.4 &   8.96 &  60.24 &  6.21 &  9.69 \\
      55 &  6,796,783 &   7.0 &   9.78 &  21.33 &  2.16 &  9.87 \\
      56 & 40,026,391 &   6.3 &  35.69 & 327.66 & 33.05 &  9.91 \\
      57 &    805,305 &   4.7 &   1.66 &   8.14 &  0.80 & 10.17 \\
      58 &  4,532,440 &  26.4 &   7.49 &  33.46 &  3.15 & 10.61 \\
      59 & 18,374,693 &   5.8 &   8.99 &  60.54 &  5.52 & 10.96 \\
      60 &  1,284,096 &   5.9 &   1.53 &  48.54 &  4.39 & 11.05 \\
      61 &  3,862,954 &   6.3 &   3.44 &  12.94 &  1.14 & 11.38 \\
      62 & 52,269,131 &   3.4 &  44.45 & 177.98 & 15.53 & 11.46 \\
      63 &    130,721 &   2.2 &   0.43 &   1.55 &  0.13 & 11.52 \\
      64 &    545,063 &  16.4 &   6.88 &   2.27 &  0.16 & 13.85 \\
      65 &    545,046 &  16.4 &   6.78 &   2.17 &  0.15 & 14.04 \\
      66 &    829,090 &  12.3 &   9.60 &   3.40 &  0.24 & 14.17 \\
      67 & 63,918,783 & 267.0 & 115.87 & 244.01 & 16.00 & 15.25 \\
      68 & 20,382,912 &   3.3 &  15.78 &  76.69 &  4.80 & 15.98 \\
      69 & 29,689,784 &   6.2 &  11.18 & 120.82 &  7.16 & 16.88 \\
      70 &  2,619,392 &   5.2 &   3.48 & 660.92 & 31.62 & 20.90 \\
      71 &  2,575,360 &   5.7 &   3.03 & 589.87 & 25.69 & 22.96 \\
      72 &  2,639,872 &   5.0 &   3.17 & 816.08 & 29.93 & 27.27 \\
      73 &  2,691,072 &   4.5 &   3.43 & 950.82 & 33.39 & 28.48 \\
         &            &       &        &        &       &       \\
      \hline
    \end{tabular}
  \caption{Comparison of our tool against jMoped. 
    Runtimes are given in seconds.
    The names of all benchmarks are given in \appendixreftr{Appendix~\ref{app:bp-names}}.}%
  \label{tbl:bp-experiments}
\end{table}


\subsection{Discussion}
In our experiments we compared the implementation of our algorithm with jMoped
on sequential RSM analysis in an explicit setting. 
While our algorithm can be made symbolic in a straightforward way\appendixref{
  (see Appendix~\ref{app:discussion-symbolic})}, a symbolic implementation and
efficiency for large symbolic domains involve significant engineering efforts.
Moreover, the main goal of our work is to compare the algorithmic improvements
over the existing approaches, which is best demonstrated in an explicit setting,
since in the explicit setting the improvements are algorithmic rather than due
to implementation details of symbolic data-structures.
Our experimental results show the potential of the new algorithmic ideas, and
investigating the applicability of them with a symbolic implementation is a
subject of future work.


\section{Related Work}

\myparagraph{Sequential setting.}
Pushdown systems are very well studied for interprocedural
analysis~\cite{Reps95,Sagiv96,Callahan86}.
While the most basic problem is reachability, the weighted pushdown systems
(i.e., pushdown systems enriched with semiring) can express several basic
dataflow properties, and other relevant problems in interprocedural program
analysis~\cite{Reps05,LalRB05,RepsLK07,LalR08}. 
Hence weighted pushdown systems have been studied in many different contexts,
such as~\cite{Sagiv96,Reps95,Horwitz95,Chaudhuri08}, and tools have been
developed, such as Moped~\cite{moped}, jMoped~\cite{jmoped}, and
WALi~\cite{WALi}.
The more convenient model of RSMs was introduced and studied
in~\cite{AlurBEGRY05}, which on the one hand explicitly models the function calls
and returns, and on the other hand specifies many natural parameters for
algorithmic analysis. 
In this work, we improve the fundamental algorithms for RSMs over finite-height
semirings, as compared to the bounds obtained by translating RSMs to pushdown
systems and applying the best-known bounds for the pushdown case.
Along with general RSMs, special cases of SESE RSMs have also been considered,
such as RSMs with constant treewidth, and only same context
queries~\cite{CIPG15} (i.e., computation of node distances between nodes of the
same module).
Our results apply to the general case of all RSMs and are not restricted to any
special types of queries.

\myparagraph{Concurrent setting.}
The problem of reachability in concurrent pushdown systems (or concurrent RSMs)
is again a fundamental problem in program analysis, which allows for the
interprocedural analysis in a concurrent setting.
However, the problem is undecidable~\cite{Ramalingam00}.
Motivated by practical problems, where bugs are discovered with few context
switches, the context-bounded reachability problem, where there can be at most
$k$ context switches have been considered for concurrent pushdown
systems~\cite{QadeerR05,MusuvathiQ07,MusuvathiQBBNN08,LalTKR08,LalR09} as well
as related models of asynchronous pushdown networks~\cite{Bouajjani05}.
We present a new algorithm for concurrent pushdown systems and concurrent RSMs
which improves the existing complexity when the size of the 
global component is small.


\section{Conclusion}\label{sec:conclusion}
In this work we consider RSMs, a fundamental model for interprocedural analysis,
with path properties expressed over finite-height semirings, that can express a
large class of properties for program analysis.
We present algorithms that improve the previous algorithms, both in the
sequential as well as in the concurrent setting.
Moreover, along with our algorithm, we present new methods to extract distances
from the data-structure (configuration automata) that the algorithm constructs.
We present a prototype implementation for sequential RSMs in an explicit setting
that provides significant improvements for real-world programs obtained from
SLAM/SDV benchmarks.
Our results show the potential of the new algorithmic ideas.
There are several interesting directions of future work.
A symbolic implementation is a direction for future work.
Another direction of future work is to explore the new algorithmic ideas in the
concurrent setting in practice.


\subsubsection*{Acknowledgments.}
\grants{}

\bibliographystyle{plain}
\bibliography{references}

\begin{thebibliography}{10}

\bibitem{jmoped}
{jMoped 2.0}.
\newblock \url{https://www7.in.tum.de/tools/jmoped/}.

\bibitem{moped}
{Moped}.
\newblock \url{http://www2.informatik.uni-stuttgart.de/fmi/szs/tools/moped/}.

\bibitem{WALi}
{WALi}.
\newblock \url{https://research.cs.wisc.edu/wpis/wpds/}.

\bibitem{AlurBEGRY05}
Rajeev Alur, Michael Benedikt, Kousha Etessami, Patrice Godefroid, Thomas~W.
  Reps, and Mihalis Yannakakis.
\newblock {Analysis of Recursive State Machines}.
\newblock {\em {ACM} Trans. Program. Lang. Syst.}, 27(4), 2005.

\bibitem{HBMC:Procedural}
Rajeev Alur, Ahmed Bouajjani, and Javier Esparza.
\newblock {Model Checking Procedural Programs}.
\newblock In {\em Handbook of Model Checking}. Springer, 2016.

\bibitem{BallBLKL10}
Thomas Ball, Ella Bounimova, Vladimir Levin, Rahul Kumar, and Jakob
  Lichtenberg.
\newblock The static driver verifier research platform.
\newblock In {\em CAV}, 2010.

\bibitem{BallR00}
Thomas Ball and Sriram~K. Rajamani.
\newblock Bebop: {A} symbolic model checker for boolean programs.
\newblock In {\em SPIN}, 2000.

\bibitem{Bhattacharya12}
Pamela Bhattacharya, Marios Iliofotou, Iulian Neamtiu, and Michalis Faloutsos.
\newblock Graph-based analysis and prediction for software evolution.
\newblock In {\em ICSE}, 2012.

\bibitem{Bouajjani05}
Ahmed Bouajjani, Javier Esparza, Stefan Schwoon, and Jan Strej\v{c}ek.
\newblock Reachability analysis of multithreaded software with asynchronous
  communication.
\newblock In {\em FSTTCS}, 2005.

\bibitem{Callahan86}
David Callahan, Keith~D. Cooper, Ken Kennedy, and Linda Torczon.
\newblock Interprocedural constant propagation.
\newblock In {\em CC}, 1986.

\bibitem{CIPG15}
Krishnendu Chatterjee, Rasmus Ibsen{-}Jensen, Andreas Pavlogiannis, and
  Prateesh Goyal.
\newblock Faster algorithms for algebraic path properties in recursive state
  machines with constant treewidth.
\newblock In {\em {POPL}}, 2015.

\bibitem{Chaudhuri08}
Swarat Chaudhuri.
\newblock Subcubic algorithms for recursive state machines.
\newblock In {\em POPL}, 2008.

\bibitem{Cousot77}
P.~Cousot and R~Cousot.
\newblock Static determination of dynamic properties of recursive procedures.
\newblock In {\em IFIP Conf. on Formal Description of Programming Concepts},
  1977.

\bibitem{Giegerich81}
Robert Giegerich, Ulrich M\"{o}ncke, and Reinhard Wilhelm.
\newblock Invariance of approximate semantics with respect to program
  transformations.
\newblock In {\em ECI}, 1981.

\bibitem{Grove93}
Dan Grove and Linda Torczon.
\newblock Interprocedural constant propagation: A study of jump function
  implementation.
\newblock In {\em PLDI}, 1993.

\bibitem{Horwitz95}
Susan Horwitz, Thomas Reps, and Mooly Sagiv.
\newblock Demand interprocedural dataflow analysis.
\newblock {\em SIGSOFT Softw. Eng. Notes}, 1995.

\bibitem{Knoop92}
Jens Knoop and Bernhard Steffen.
\newblock The interprocedural coincidence theorem.
\newblock In {\em CC}, 1992.

\bibitem{Knoop96}
Jens Knoop, Bernhard Steffen, and J\"{u}rgen Vollmer.
\newblock Parallelism for free: Efficient and optimal bitvector analyses for
  parallel programs.
\newblock {\em ACM Trans. Program. Lang. Syst.}, 1996.

\bibitem{LalR08}
Akash Lal and Thomas~W. Reps.
\newblock Solving multiple dataflow queries using wpdss.
\newblock In {\em SAS}, 2008.

\bibitem{LalR09}
Akash Lal and Thomas~W. Reps.
\newblock Reducing concurrent analysis under a context bound to sequential
  analysis.
\newblock {\em Formal Methods in System Design}, 35(1), 2009.

\bibitem{LalRB05}
Akash Lal, Thomas~W. Reps, and Gogul Balakrishnan.
\newblock Extended weighted pushdown systems.
\newblock In {\em CAV}, 2005.

\bibitem{LalTKR08}
Akash Lal, Tayssir Touili, Nicholas Kidd, and Thomas~W. Reps.
\newblock Interprocedural analysis of concurrent programs under a context
  bound.
\newblock In {\em TACAS}, 2008.

\bibitem{Land91}
William Landi and Barbara~G. Ryder.
\newblock Pointer-induced aliasing: A problem classification.
\newblock In {\em POPL}, 1991.

\bibitem{Liberty09}
Edo Liberty and Steven~W. Zucker.
\newblock The mailman algorithm: A note on matrix--vector multiplication.
\newblock {\em Inf. Process. Lett.}, 109(3), 2009.

\bibitem{MusuvathiQ07}
Madanlal Musuvathi and Shaz Qadeer.
\newblock Iterative context bounding for systematic testing of multithreaded
  programs.
\newblock In {\em PLDI}, 2007.

\bibitem{MusuvathiQBBNN08}
Madanlal Musuvathi, Shaz Qadeer, Thomas Ball, G{\'{e}}rard Basler,
  Piramanayagam~Arumuga Nainar, and Iulian Neamtiu.
\newblock Finding and reproducing heisenbugs in concurrent programs.
\newblock In {\em OSDI}, 2008.

\bibitem{Myers92}
Gene Myers.
\newblock A four russians algorithm for regular expression pattern matching.
\newblock {\em J. ACM}, 39(2), 1992.

\bibitem{QadeerR05}
Shaz Qadeer and Jakob Rehof.
\newblock {Context-Bounded Model Checking of Concurrent Software}.
\newblock In {\em {TACAS}}, 2005.

\bibitem{Qadeer04}
Shaz Qadeer and Dinghao Wu.
\newblock {KISS: Keep It Simple and Sequential}.
\newblock In {\em PLDI}, 2004.

\bibitem{Qu15}
Yu~Qu, Xiaohong Guan, Qinghua Zheng, Ting Liu, Jianliang Zhou, and Jian Li.
\newblock Calling network: {A} new method for modeling software runtime
  behaviors.
\newblock {\em {ACM} {SIGSOFT} Software Engineering Notes}, 40(1):1--8, 2015.

\bibitem{Ramalingam00}
G.~Ramalingam.
\newblock Context-sensitive synchronization-sensitive analysis is undecidable.
\newblock {\em {ACM} Trans. Program. Lang. Syst.}, 22(2), 2000.

\bibitem{Reps95}
Thomas Reps, Susan Horwitz, and Mooly Sagiv.
\newblock Precise interprocedural dataflow analysis via graph reachability.
\newblock In {\em POPL}, 1995.

\bibitem{RepsLK07}
Thomas~W. Reps, Akash Lal, and Nicholas Kidd.
\newblock Program analysis using weighted pushdown systems.
\newblock In {\em FSTTCS}, 2007.

\bibitem{Reps05}
Thomas~W. Reps, Stefan Schwoon, Somesh Jha, and David Melski.
\newblock {Weighted Pushdown Systems and Their Application to Interprocedural
  Dataflow Analysis}.
\newblock {\em Sci. Comput. Program.}, 58(1-2), 2005.

\bibitem{Sagiv96}
Mooly Sagiv, Thomas Reps, and Susan Horwitz.
\newblock Precise interprocedural dataflow analysis with applications to
  constant propagation.
\newblock {\em Theor. Comput. Sci.}, 1996.

\bibitem{Schwoon02}
Stefan Schwoon.
\newblock {\em {Model-Checking Pushdown Systems}}.
\newblock {Ph.}{D.} {T}hesis, Technische Universit{\"a}t M{\"u}nchen, 2002.

\bibitem{Suwimonteerabuth08}
Dejvuth Suwimonteerabuth, Javier Esparza, and Stefan Schwoon.
\newblock Symbolic context-bounded analysis of multithreaded java programs.
\newblock In {\em SPIN}, 2008.

\bibitem{Williams07}
Ryan Williams.
\newblock {Matrix-Vector Multiplication in Sub-Quadratic Time (Some
  Preprocessing Required)}.
\newblock In {\em SODA}, 2007.

\bibitem{Williams12}
Virginia~Vassilevska Williams.
\newblock {Multiplying matrices faster than Coppersmith-Winograd}.
\newblock In {\em STOC}, 2012.

\end{thebibliography}

\ifdefined\techreport
\clearpage
\appendix
\section{Proofs of Section~\ref{sec:rsm_sequential}}
\label{app:proofs-sequential}

\begin{proposition}\label{prop:run-marks}
  At all times, every run in the automaton contains at most one transition
  switching from the fresh mark to an old mark, and no transition switching from
  an old mark to the fresh mark.
  Furthermore, every accepting run has to end in a state with an old mark.
\end{proposition}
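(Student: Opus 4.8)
The plan is to prove the statement by an induction on the execution of algorithm $\ConfDistBounded$, tracking how each relaxation step can alter the set of runs in the automaton. The base case is the automaton right after preprocessing: the only transitions with nonzero weight are those of the original input automaton $\Automaton$ (all of whose states carry old marks from $M$), plus the $\eps$-self-loops on entry states. All the freshly-added transitions (those touching a state with mark $\freshMark$) have weight $\zero$, so a run of nonzero weight cannot use them; hence every such run lies entirely within old marks and the claim holds trivially at initialization. For the self-loops the source and target mark coincide, so no switch is introduced either way.

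For the inductive step I would examine the four relaxation cases depicted in Figure~\ref{fig:relax} and Algorithm~\ref{alg:rsm_post}. The crucial structural observation is that \emph{every} transition the algorithm ever relaxes has the fresh mark $\freshMark$ on its \emph{source} state: in Line~\ref{alg:rsm_post:relax1} the relaxed transition is $\state{u'}{\freshMark}\trans[\eps]\state{e}{m_e}$, in Line~\ref{alg:rsm_post:relax2} it is $\state{e'}{\freshMark}\trans[b]\state{e}{m_e}$, and in Lines~\ref{alg:rsm_post:relax3} and~\ref{alg:rsm_post:relax4} it is $\state{\rnd{b}{x}}{\freshMark}\trans[\eps]\state{e'}{m_{e'}}$; the $\eps$-self-loops added in Line~\ref{alg:rsm_post:add_eps_loop} go from $\state{e'}{\freshMark}$ to itself. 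So the algorithm never gives nonzero weight to a transition whose source has an old mark and whose target has the fresh mark, which rules out any ``old $\to$ fresh'' switch inside a nonzero-weight run, at all times. I would then argue that appending such a transition at the \emph{front} of an existing run $\lambda$ can create at most one fresh-to-old switch: if $\lambda$ starts in a state with the fresh mark, the new transition's target coincides with that fresh-marked state, so no new switch appears; if $\lambda$ starts in an old-marked state, the new transition contributes exactly one fresh$\to$old switch, and by the induction hypothesis $\lambda$ itself (starting at an old mark) had \emph{no} fresh$\to$old switch (since after such a switch one stays in old marks forever, and it started old), so the result has exactly one. Combined with Proposition~\ref{prop:run-shape}, which says every run can be taken to start with an $\eps$-transition followed by $b$-transitions among entry states, this keeps the case analysis manageable.

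For the last sentence — every \emph{accepting} run ends in a state with an old mark — I would use the definition of $\Automaton_{\post^*}$: its final states are exactly the unmodified final states of $\Automaton$, and $F\subseteq \En\times\Marks$ with all marks coming from $M$, i.e., old. Since $\freshMark\notin M$, no final state carries the fresh mark, so an accepting run must terminate in an old-marked state.

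The main obstacle I anticipate is being careful about \emph{where} a relaxation inserts a transition relative to the existing runs and about the direction of the mark switch. Because runs read transitions in reverse (weights extended in reverse order) and the relaxed transitions are prepended as new \emph{first} transitions of runs, one must phrase the induction hypothesis on the \emph{structure of every run currently realizable with nonzero weight}, not merely on individual transitions, and verify that each of the four cases preserves ``at most one fresh$\to$old switch and no old$\to$fresh switch.'' The $\zero$-weight bookkeeping (preprocessing-added transitions that are never relaxed stay at $\zero$ and thus never enter a nonzero run) needs to be stated once and reused, and the interplay with Proposition~\ref{prop:run-shape} must be invoked to restrict attention to the canonical run shape, after which the switch-counting becomes a short combinatorial check.
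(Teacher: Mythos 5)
Your proof is correct; note that the paper states this proposition without proof, so there is nothing to compare against directly. The essential observation you make is exactly the right one: every transition introduced in preprocessing and every transition ever passed to \Relax{} (Lines~\ref{alg:rsm_post:relax1}, \ref{alg:rsm_post:relax2}, \ref{alg:rsm_post:relax3}, \ref{alg:rsm_post:relax4}, and the self-loops of Line~\ref{alg:rsm_post:add_eps_loop}) has $\freshMark$ on its \emph{source} state, while all transitions of the original automaton $\Automaton$ connect states with old marks; hence no transition in the automaton ever has an old-marked source and a fresh-marked target, so a run can never switch old~$\to$~fresh, and after its (unique possible) fresh~$\to$~old switch it is trapped in old marks, giving at most one such switch. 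The final-state claim follows, as you say, from $F$ being left unmodified. Two remarks on streamlining. First, your induction over the execution of the algorithm is heavier machinery than needed: since the preprocessing adds all states and transitions a priori and the main loop only \emph{relaxes weights} of existing transitions, the claim is a purely static property of the transition relation fixed after preprocessing, and holds \emph{at all times} for that reason alone; there is no need to track how relaxations prepend transitions to runs. Second, the proposition quantifies over \emph{all} runs, not only runs of nonzero weight, so your base case (which discards zero-weight transitions) and your phrasing ``never gives nonzero weight to a transition whose source has an old mark and whose target has the fresh mark'' prove slightly less than stated; the stronger and equally available fact is that no such transition \emph{exists} at all, regardless of weight. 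The appeal to Proposition~\ref{prop:run-shape} is likewise unnecessary --- the switch-counting argument applies to arbitrary runs and does not benefit from normalizing their shape.
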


\lemcompleteness*
\begin{proof}
  First we show that for every initialized computation
  $\pi : \conf{u}{S} \Reach{*} \conf{u'}{S'}$ there is a run
  \[
    \overunderbraces
    {             & \br{2}{t_1}                                        &                    }
    { \lambda   : & \state{u'}{m_{u'}} \trans[\eps] & \state{e}{m_e}   & \reach[S']{*} q_f    }
    {             &                                 &                  &                  }
  \]
  accepting $\conf{u'}{S'}$ such that (1)~$\runwt(\lambda) \sleq \compwt(\pi)$,
  and (2)~$t_1$ was added to the worklist.
  We proceed by induction on the length of $\pi$.
  Part (2) of the induction hypothesis is used to argue that $t_1$ will be
  extracted with its final weight at some point in the algorithm.
  We do not explicitly prove this part below since it is an obvious consequence
  of the steps in the algorithm we refer to in order to prove part (1).
  
  As a base case, if $|\pi| = 0$ then $u=u'$, $S=S'$, and $\compwt(\pi) = \one$.
  Since $\conf{u}{S} \in \lang(\Automaton)$ there must be a $\Automaton$-run
  $\lambda$ accepting $\conf{u}{S}$, and since all transitions in the initial
  automaton have weight $\one$ we have $\runwt(\lambda) = \one$.

  For the induction step, if $|\pi| > 0$ there is a configuration
  $\conf{u_1}{S_1}$ such that
  \[ \pi : \underbrace{\conf{u}{S} \Reach{*} \conf{u_1}{S_1}}_{\pi_1} \Trans \conf{u'}{S'}. \]
  By applying the induction hypothesis to $\pi_1$ we obtain an accepting run
  \[
    \overunderbraces
    {             & \br{2}{t_1}                                        &                    }
    { \lambda_1 : & \state{u_1}{m_{u_1}} \trans[\eps] & \state{e}{m_e} & \reach[S_1]{*} q_f }
    {             &                                   & \br{2}{\lambda_1'}                  }
  \]
  such that
  \[ \runwt(\lambda_1) = \runwt(\lambda_1') \otimes \autwt(t_1) \sleq \compwt(\pi_1). \]
  Let $\Module_i$ be the module of $u_1$.
  We split cases according to the type of the last transition of $\pi$.

  \begin{enumerate}
  \item \textit{Internal transition:} $u' \in \In_i$, $\tup{u_1,u'} \in \delta_i$,
    and $S' = S_1$.
    We consider the iteration of the main loop where $t_1$ is extracted from
    $\WL$ with its final weight.
    Line~\ref{alg:rsm_post:relax1} relaxes the transition
    $t = \state{u'}{\freshMark} \trans[\eps] \state{e}{m_e}$ with
    $\autwt(t_1) \otimes \rsmwt_i(u_1,u')$ and hence
    \[ \autwt(t) \sleq \autwt(t_1) \otimes \rsmwt_i(u_1,u'). \]
    By combining $t$ and $\lambda_1'$ we obtain the accepting run
    \[ \lambda : \state{u'}{\freshMark} \trans[\eps] \state{e}{m_e} \reach[S']{*} q_f \]
    and we derive
    \begin{align*}
      \runwt(\lambda) &=     \runwt(\lambda_1') \otimes \autwt(t) \\
                      &\sleq \runwt(\lambda_1') \otimes \autwt(t_1) \otimes \rsmwt_i(u_1,u') \\
                      &\sleq \compwt(\pi_1) \otimes \rsmwt_i(u_1,u') \\
                      &=     \compwt(\pi).
    \end{align*}
  \item \textit{Call transition:} $u' = e' \in \En_{Y_i(b)}$ for some box
    $b \in B_i$, $\tup{u_1,\cnd{b}{e'}} \in \delta_i$, and $S' = b S_1$.

    Again, we consider the iteration of the main loop where $t_1$ is extracted from
    $\WL$ with its final weight.
    Line~\ref{alg:rsm_post:relax2} relaxes the transition
    $t = \state{e'}{\freshMark} \trans[b] \state{e}{m_e}$ with
    $\autwt(t_1) \otimes \rsmwt_i(u_1,\cnd{b}{e'})$ and hence
    \[ \autwt(t) \sleq \autwt(t_1) \otimes \rsmwt_i(u_1,\cnd{b}{e'}). \]
    By combining $t$ and $\lambda_1'$ we obtain the accepting run
    \[ \lambda : \state{e'}{\freshMark} \trans[\eps] \state{e'}{\freshMark} \trans[b] \state{e}{m_e} \reach[S']{*} q_f \]
    and we derive
    \begin{align*}
      \runwt(\lambda) &=     \runwt(\lambda_1') \otimes \autwt(t) \\
                      &\sleq \runwt(\lambda_1') \otimes \autwt(t_1) \otimes \rsmwt_i(u_1,\cnd{b}{e'}) \\
                      &\sleq \compwt(\pi_1) \otimes \rsmwt_i(u_1,\cnd{b}{e'}) \\
                      &=     \compwt(\pi).
    \end{align*}
  \item \textit{Return transition:} $u' = \rnd{b}{x} \in R_i$ for some
    $b \in B_i$ and $x \in \Ex_{Y_i(b)}$, $\tup{u_1,x} \in \delta_{Y_i(b)}$, and
    $S_1 = b S'$.
    First note that
    \[ \compwt(\pi_1) = \compwt(\pi). \]
    We consider the iteration of the main loop, where
    $t_\Automaton = \state{u_3}{m_{u_3}} \trans[\eps] \state{e}{m_e}$ is the
    transition extracted from $\WL$ such that the if-condition
    $\summary(\state{e}{m_e},x) \not\sleq \autwt(t_\Automaton)$ in
    line~\ref{alg:rsm_post:summary_if} holds for the last time.
    Then $\autwt(t_\Automaton) \sleq \autwt(t_1)$ and since
    line~\ref{alg:rsm_post:summarize} relaxes $\summary(\state{e}{m_e},x)$ with
    $\autwt(t_\Automaton)$ we have
    \[ \summary(\state{e}{m_e},x) \sleq \autwt(t_1). \]
    Now observe that we must have
    \[
      \overunderbraces
      {              &         \br{2}{t_2}                                                    &                    }
      { \lambda_1' : & \state{e}{m_e} \trans[b / v] & \state{e_2}{m_{e_2}} & \reach[S']{*} q_f.                    }
      {              &                                                                       & \br{2}{\lambda_1''}}
    \]
    We distinguish whether
    the transition $t_2$ already had weight $v$ in the current iteration of
    processing $t_\Automaton$ or not.
    If yes, then line~\ref{alg:rsm_post:relax3} in the current iteration, if no,
    then line~\ref{alg:rsm_post:relax4} in the later iteration where $t_2$ is
    extracted with weight $v$ from $\WL$, relaxes the transition
    $t = \state{\rnd{b}{x}}{\freshMark} \trans[\eps] \state{e_2}{m_{e_2}}$ with
    $v \otimes \summary(\state{e}{m_e},x)$ and hence
    \[ \autwt(t) \sleq v \otimes \summary(\state{e}{m_e},x). \]
    By combining $t$ and $\lambda_1''$ we obtain the accepting run
    \[ \lambda :\state{\rnd{b}{x}}{\freshMark} \trans[\eps] \state{e_2}{m_{e_2}} \reach[S']{*} q_f \]
    and we derive
    \begin{align*}
      \runwt(\lambda) &=     \runwt(\lambda_1'') \otimes \autwt(t) \\
                      &\sleq \runwt(\lambda_1'') \otimes v \otimes \summary(\state{e}{m_e},x) \\
                      &=     \runwt(\lambda_1') \otimes \summary(\state{e}{m_e},x) \\
                      &\sleq \runwt(\lambda_1') \otimes \autwt(t_1) \\
                      &\sleq \compwt(\pi_1) \\
                      &=     \compwt(\pi).
    \end{align*}
  \end{enumerate}
  In all cases we obtain the desired run $\lambda$ accepting $\conf{u'}{S'}$
  with $\runwt(\lambda) \sleq \compwt(\pi)$.
  Now the claim of the lemma follows, as
  \begin{align}\label{eq:completeness:proof}
    \Automaton_{\post^*}(c)
    =
    \Compswt_{\lambda \in \Runs(c) }\runwt(\lambda)
    \sleq
    \Compswt_{\pi \in \Comps(\lang(\Automaton),c)} \compwt(\pi)
    =
    \rsmdist(\lang(\Automaton), c)
  \end{align}
  were the inequality holds since, as shown above, the weight of any $\pi$ is
  bounded from below by the weight of a $\lambda$.\qed
\end{proof}



\begin{restatable}[Soundness invariants]{lemma}{sequentialsoundnessinv}\label{lem:soundness:inv}
  Algorithm $\ConfDistBounded$ maintains the following loop invariants:
  \begin{enumerate}[series=bound:inv,label={\textnormal{I\arabic*}}]
  \item\label{inv:sum} The function $\summary$ maintains sound summaries, i.e.,
    for every entry $e\in \En_i$ and exit $x\in \Ex_i$ of the same module
    $\Module_i$, and every box $b \in B_j$ with $Y_j(b) = i$, there exists a set
    $\Comps$ of computations $\pi : \conf{e}{b} \Reach{*} \conf{\rnd{b}{x}}{\eps}$
    such that $\Compswt(\Comps) = \summary(\state{e}{\freshMark},x)$.
  \item\label{inv:fresh} For every run
    $\lambda : \state{u_2}{\freshMark} \reach[S]{*} \state{u_1}{\freshMark}$,
    there exists a set $\Comps$ of computations
    $\pi : \conf{u_1}{\eps} \Reach{*} \conf{u_2}{S}$, such that
    $\Compswt(\Comps) = \runwt(\lambda)$.
  \item\label{inv:sound} For every run $\lambda$ accepting a configuration $c$
    there exists a set $\Comps$ of initialized computations ending in $c$, such
    that $\Compswt(\Comps) = \runwt(\lambda)$.
  \end{enumerate}
\end{restatable}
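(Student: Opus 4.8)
The plan is to prove the three invariants \ref{inv:sum}, \ref{inv:fresh}, and \ref{inv:sound} \emph{simultaneously}, by induction on the number of statements executed by $\ConfDistBounded$ (equivalently, on the number of $\Relax$ calls together with the direct $\summary$ updates on Line~\ref{alg:rsm_post:summarize}). Before the main loop, $\summary$ is identically $\zero$, so \ref{inv:sum} holds with every witness set empty; and in the preprocessed automaton every transition has weight either $\one$ (the original transitions of $\Automaton$) or $\zero$ (the ones added by the preprocessing), so every run has weight $\one$ or $\zero$. A weight-$\zero$ run is witnessed by the empty computation set, and a weight-$\one$ run consists only of transitions of weight $\one$ and is witnessed by the corresponding length-$0$ computation, which is initialized exactly when the accepted configuration lies in $\lang(\Automaton)$; this settles \ref{inv:fresh} and \ref{inv:sound} at the start. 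Throughout I will freely use Proposition~\ref{prop:run-shape} to assume an accepting run starts with a single $\eps$-transition followed only by $b$-transitions, Proposition~\ref{prop:run-marks} to assume a run switches from $\freshMark$ to an old mark at most once and ends in an old mark, and the syntactic fact that every transition the algorithm ever relaxes has source state carrying $\freshMark$ --- hence transitions between two old-mark states are never modified and retain weight $\one$, so any run issuing from an old-mark state has weight $\one$.

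For the inductive step, each interesting statement is a relaxation $\autwt(t) \ass \autwt(t) \oplus \Delta$ or the summary update $\summary(\state{e}{m_e},x) \ass \summary(\state{e}{m_e},x) \oplus \autwt(t_\Automaton)$, and in every case I want a witness set for the \emph{new} weight. For a relaxation this reduces to witnessing only the increment: any run $\lambda$ through the updated $t$ has $\runwt(\lambda) = \runwt(\lambda_{\mathrm{pre}}) \otimes \bigl(\autwt^{\mathrm{old}}(t) \oplus \Delta\bigr) = \runwt(\lambda^{\mathrm{old}}) \oplus \bigl(\runwt(\lambda_{\mathrm{pre}}) \otimes \Delta\bigr)$ by distributivity, where $\lambda_{\mathrm{pre}}$ is the part of $\lambda$ extended before $t$ in the (reversed) weight accumulation; the first summand is witnessed by the induction hypothesis, and a union of witness sets has combined weight equal to the $\oplus$ of the parts (by idempotency), so only $\runwt(\lambda_{\mathrm{pre}}) \otimes \Delta$ needs attention. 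The recipe for the increment is uniform: take a witness computation set for the run that \emph{feeds} the relaxation (supplied by \ref{inv:sound} or \ref{inv:fresh}) and extend every computation in it by exactly one RSM step --- an internal step for Line~\ref{alg:rsm_post:relax1}, a call step for Line~\ref{alg:rsm_post:relax2}, a return step for Lines~\ref{alg:rsm_post:relax3} and \ref{alg:rsm_post:relax4} --- after splicing in, where needed, an entry-to-exit segment furnished by \ref{inv:sum}; distributivity of $\otimes$ over $\oplus$ and the fact that extending a set of computations by a fixed step multiplies its combined weight by that step's weight make the weight bookkeeping go through.

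Spelling out the three invariants: for \ref{inv:sum}, when Line~\ref{alg:rsm_post:summarize} fires on $t_\Automaton = \state{u}{\freshMark} \trans[\eps] \state{e}{\freshMark}$ with exit transition $\tup{u,x} \in \delta_i$, invariant \ref{inv:fresh} gives a set of same-context computations $\conf{e}{\eps} \Reach{*} \conf{u}{\eps}$ of combined weight $\autwt(t_\Automaton)$; since such computations are non-decreasing we may prepend any box $b$ with $Y_j(b) = i$ at the bottom of the stack and append the return step $\conf{u}{b} \Trans \conf{\rnd{b}{x}}{\eps}$ (of weight $\one$, by the exit-weight convention), yielding computations $\conf{e}{b} \Reach{*} \conf{\rnd{b}{x}}{\eps}$ of combined weight $\autwt(t_\Automaton)$, whose union with the old witness set for $\summary(\state{e}{\freshMark},x)$ matches the updated value. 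For \ref{inv:fresh} and for \ref{inv:sound} at Lines~\ref{alg:rsm_post:relax1} and \ref{alg:rsm_post:relax2}, the relaxed transition is fed by the extracted $\eps$-transition $t_\Automaton = \state{u}{m_u} \trans[\eps] \state{e}{m_e}$, so a run through the relaxed transition extends (after the inserted weight-$\one$ $\eps$-self-loop in the call case, which changes neither weight nor accepted configuration) a run through $t_\Automaton$; applying \ref{inv:sound} (resp.\ \ref{inv:fresh}, when the pertinent states carry $\freshMark$) to the latter and extending each witness computation by $\conf{u}{S} \Trans \conf{u'}{S}$ (internal) or $\conf{u}{S} \Trans \conf{e'}{bS}$ (call) produces the witness. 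For \ref{inv:sound} at Lines~\ref{alg:rsm_post:relax3} and \ref{alg:rsm_post:relax4}, write the increment as $\runwt(\lambda_{\mathrm{pre}}) \otimes v \otimes \summary(\cdot,x)$, where $v$ is the weight of the $b$-transition $\state{e}{m_e} \trans[b] \state{e'}{m_{e'}}$; use \ref{inv:sound} on the run consisting of that $b$-transition followed by $\lambda_{\mathrm{pre}}$ to reach $\conf{e}{bS'}$, splice after it an entry-to-exit segment reaching $\conf{u}{bS'}$ from \ref{inv:sum}, and append the return step $\conf{u}{bS'} \Trans \conf{\rnd{b}{x}}{S'}$.

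The genuine obstacle is the mark bookkeeping, in two guises. First, $\runwt$ accumulates transition weights in reversed order while $\compwt$ accumulates computation weights forwards, so at every splicing one must carefully line up the reversed composition of run fragments with the forward composition of their witness computations --- pervasive, but mechanical. Second, and more delicate, Line~\ref{alg:rsm_post:relax3} uses $\summary(\state{e}{m_e},x)$ for a possibly \emph{old} mark $m_e$, about which \ref{inv:sum} promises nothing (whereas Line~\ref{alg:rsm_post:relax4} uses the fresh-mark summary $\summary(\state{e}{\freshMark},x)$, to which \ref{inv:sum} applies directly). Here one must argue separately: an $\eps$-transition into an old-mark state is either an original unit-weight transition of $\Automaton$ --- so the run segment it anchors corresponds to a prefix of an initialized computation drawn from $\lang(\Automaton)$ --- or was itself relaxed from a $\freshMark$-source and is thus covered by \ref{inv:sound}; combining these with the fact that any run leaving an old-mark state has weight $\one$ lets one build the witness for the non-fresh-mark summary case from the initialized computations underlying the initial configurations. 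Getting this case exactly right, and checking that Lines~\ref{alg:rsm_post:relax3} and \ref{alg:rsm_post:relax4} between them cover every way a return transition can be discovered, is where the bulk of the work lies.
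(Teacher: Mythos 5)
Your proposal follows essentially the same route as the paper's proof: a simultaneous induction over the algorithm's relaxation and summary-update steps, witnessing each weight increment by applying the invariants \ref{inv:sum}--\ref{inv:sound} to sub-runs and then splicing, stack-lifting, and extending the resulting computation sets, with the reversed-weight bookkeeping handled by distributivity. If anything you go further than the paper's own write-up, which details only the call-transition relaxation case and silently passes over the old-mark summary used at Line~\ref{alg:rsm_post:relax3}, the point you correctly isolate as the delicate one and for which your sketched resolution (the original-versus-relaxed dichotomy for transitions into old-marked states, plus the weight-$\one$ property of runs leaving them) is sound.
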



\begin{proof}
  Note that every accepting run has to end in a final state with an old mark
  since we do not add final states to the automaton.
  Moreover, we recall that Proposition~\ref{prop:run-marks} implies that every
  run contains at most one transition switching from the fresh mark to an old
  mark, and no transition switching from an old mark to the fresh mark.
  
  Initially, the invariants hold due to the initialization steps of the
  algorithm.
  
  Now we need to show that~\ref{inv:sum} is preserved by updates to the summary
  function in line~\ref{alg:rsm_post:summarize}, and that ~\ref{inv:fresh}
  and~\ref{inv:sound} are preserved by all possible relaxations performed in
  line~\ref{alg:rsm_post:relax1}, \ref{alg:rsm_post:relax2},
  \ref{alg:rsm_post:relax3}, and \ref{alg:rsm_post:relax4}.
  
  Since all cases follow the similar pattern of applying the invariants to
  sub-runs and combining the obtained sets of computations suitably, we only
  give the details of one case for~\ref{inv:sound}.
  
  Consider the run
  \[
  \overunderbraces
  {           &                                        & \br{3}{t}                                                           &                                        }
  { \lambda : & \state{u_2}{\freshMark} \reach[S_2]{*} & \state{e'}{\freshMark} & \trans[b / v_t] & \state{e}{m_e} & \reach[S_1]{*} q_f }
  {           & \br{2}{\lambda_2}                                               &                    & \br{2}{\lambda_1}                                              }
  \]
  with $m_e \neq \freshMark$ and an iteration of the main loop where transition
  $t_\Automaton = \state{u}{m_u} \trans[\eps] \state{e}{m_e}$ is extracted from
  $\WL$ and transition $t$ is relaxed in line~\ref{alg:rsm_post:relax2} due to a
  call transition $t_\RSM = \tup{u,\cnd{b}{e'}}$.
  
  We apply~\ref{inv:sound} to $t_\Automaton,\lambda_1$ to obtain a set of
  initialized computations ending in $\conf{u}{S_1}$. 
  We extended each computation by $t_\RSM$ to obtain the set $\Comps_1'$ of
  initialized computations ending in $\conf{e'}{bS_1}$.
  We apply~\ref{inv:sound} to $t,\lambda_1$ to obtain a set $\Comps_1''$ of
  initialized computations ending in $\conf{e'}{bS_1}$.
  We apply~\ref{inv:fresh} to $\lambda_2$ to obtain a set of computations from
  $\conf{e'}{\eps}$ to $\conf{u_2}{S_2}$.
  We lift the stack of each computation by $bS_1$ to obtain the set $\Comps_2$
  of computations from $\conf{e'}{bS_1}$ to $\conf{u_2}{S_2bS_1}$.
  Let $\Comps$ be the set of computations obtained by combining every
  computation in $\Comps_1' \cup \Comps_1''$ with every computation in
  $\Comps_2$.
  Then $\Comps$ is the desired set of initialized computations ending in
  $\conf{u_2}{S_2bS_1}$, such that $\Compswt(\Comps) = \runwt(\lambda)$ after
  the relaxation of $t$.\qed
\end{proof}


\lemsoundness*
\begin{proof}
  Conversely to
  the inequality of equation~\eqref{eq:completeness:proof} in the proof of
  Lemma~\ref{lem:completeness}we derive
  \begin{align}
    \rsmdist(\lang(\Automaton), c)
    =
    \Compswt_{\pi \in \Comps(\lang(\Automaton),c)} \compwt(\pi)
    \sleq
    \Compswt_{\lambda \in \Runs(c) }\runwt(\lambda)
    =
    \Automaton_{\post^*}(c)
  \end{align}
  were the inequality holds since, by invariant~\ref{inv:sound} from Lemma~\ref{lem:soundness:inv}, the weight of any $\lambda$ is
  bounded from below by the weight of a $\Comps \subseteq \Comps(\lang(\Automaton),c)$.\qed
\end{proof}


\lemcomplexity*
\begin{proof}
We bound the number of times that each loop will be executed.
\begin{compactenum}
\item For a given node $u$, line~\ref{alg:rsm_post:extract_wl} will be
executed at most $\Height \cdot \theta_e\cdot (\kappa+1)^2$ times.
We denote by $\Outgoing_i(u)$, $\Outgoing_c(u)$, $\Outgoing_x(u)$ the number
of internal, call, and exit transitions from node $u$
in $\delta_i$ of module $\Module_i$.
For every iteration of line~\ref{alg:rsm_post:while}, the following upper
bounds on each inner loop are straightforward:
\begin{compactenum}
\item Line~\ref{alg:rsm_post:internal_trans_loop}:
$\Outgoing_i(u)$ times.
\item Line~\ref{alg:rsm_post:call_trans_loop}:
$\Outgoing_c(u)$ times.
\item Line~\ref{alg:rsm_post:exit_trans_loop}:
$\Outgoing_x(u)$ times.
\end{compactenum}
Hence for a given pair $\state{u}{m}$ the algorithm spends
$O(\Height\cdot \theta_e\cdot (\Outgoing_i(u)+\Outgoing_c(u) + \Outgoing_x(u))\cdot \kappa^2)$
time in the above loops, and summing over all $u$ we obtain
$O(\Height\cdot \theta_e\cdot |\RSM|\cdot \kappa^2)$ time.
\item Given a pair of a state and an exit $(\state{e}{m_e},x)$,
Line~\ref{alg:rsm_post:summary_if} will hold true at most $\Height$ times.
Summing over all possible such pairs, we obtain that Line~\ref{alg:rsm_post:entry_exit_trans_loop} will be executed
$O(\Height \cdot |\Call| \cdot  \theta_e\cdot \theta_x\cdot  \kappa^3)$ times in total.
\item Finally, line~\ref{alg:rsm_post:relax4} will be executed $O(\Height \cdot |\Call|\cdot \theta_e\cdot \theta_x\cdot  \kappa^2)$ times in total,
since the total number of different edges of the form $\state{e}{m_e} \trans[b] \state{e'}{m_{e'}}$ added in the worklist
is bounded by the number of call nodes that were used in Line~\ref{alg:rsm_post:call_trans_loop}, times the maximum number of entries and exits in any module of the RSM.
\end{compactenum}
The desired result follows.\qed
\end{proof}



\section{Symbolic Extensions}
\label{app:discussion-symbolic}

Note that in our framework we deal with explicit RSMs, and our
$\ConfDistBounded$ algorithm is also explicit.
However, our results carry over to symbolic extensions of RSMs, similar to
symbolic PDS~\cite{HBMC:Procedural}.
For the symbolic extension we describe the symbolic extension of the model and
our algorithm.
\begin{compactitem}
\item {\em Symbolic model.} 
We consider the RSM to represent the control-flow structure of a program
(represented explicitly), and the semiring capturing valuations on the variables
(represented symbolically).
The symbolic semiring operations express the value changes along the program 
execution.
In general the MEME RSMs can represent explicitly a combination of 
control-flow and some Boolean variables.

\item {\em Symbolic algorithm.} 
We observe that our algorithm for computation on the semiring uses the basic 
semiring operations.
Hence our algorithm can be straightforwardly made symbolic on the semiring, and 
is only explicit on the RSM structure.
\end{compactitem}

\section{Proofs of Section~\ref{sec:rsm_concurrent}}
\label{app:proofs-concurrent}

\concurrent*
\begin{proof}
  The algorithm of~\cite[Figure~3]{QadeerR05} for concurrent RSMs basically
  calls algorithm for sequential RSMs as a black-box procedure.
  Using our algorithm $\ConfDistBounded$ we obtain an algorithm for concurrent
  RSM, and the correctness follows from~\cite{QadeerR05} and
  Theorem~\ref{them:finite-height}. 

  We now sketch the complexity analysis.
  By Theorem~\ref{them:finite-height}, every execution of the algorithm
  $\ConfDistBounded$ increases the number of marks of the input configuration
  automaton by $1$.
  In the $i$-th iteration of~\cite[Figure~3]{QadeerR05} the algorithm will
  perform a $\post^*$ operation on a configuration automaton of $i$ marks, which
  by Theorem~\ref{them:finite-height} will require
  $O(|\RSM|\cdot |G|^2\cdot \theta_e\cdot \theta_x \cdot i^3)$ time.
  Each such iteration will spawn $n\cdot |G|$ iterations in the inner loop
  of~\cite[Figure~3]{QadeerR05}, one for each component RSM (among $n$
  components) and state of the global component (among $|G|$ possible states).
  Then the total time is (up to constant factors)
  \[
    |\RSM|\cdot |G|^2\cdot \theta_e\cdot \theta_x \cdot \sum_{i=1}^{k}{i^3\cdot
      (n\cdot |G|)^i} = O\left(|\RSM|\cdot |G|^2\cdot \theta_e\cdot \theta_x\cdot
      (n\cdot |G|)^k\right)
  \]
  The desired result follows.\qed
\end{proof}


\section{Comparison with existing work on $k$-bounded reachability}
\label{app:discussion-concurrent}

We compare our results for CPDSs, CRSMs, and the related model of asynchronous
pushdown networks (APNs).

\myparagraph{Comparison for CPDSs.}
As shown in~\cite{QadeerR05}, the $k$-bounded reachability problem in a
CPDS $\CPDS$ can be solved in time
\[
O(|\CPDS|^5\cdot n^k\cdot |G|^k).
\]
The bisimulation relation of~\cite[Theorem~1]{AlurBEGRY05} between PDSs and RSMs has a straightforward 
extension to CPDSs and CRSMs.
In particular:
\begin{compactenum}
\item Given a CPDS $\CPDS$ with $n$ components and global set $G$, the
$k$-bounded reachability problem for $\CPDS$ can be reduced to the $k$-bounded
reachability problem for a CRSM $\CRSM$ with $n$ components and global set
$G$. 
Additionally,
\[
|\CRSM|=\Theta(|\CPDS|);\quad  \theta_e^{||}=\Theta(|\Gamma|)=O( |\CPDS|); \quad  \theta_x^{||}=\Theta(1)
\]
\item Given a CRSM $\CRSM$ with $n$ components and global set $G$, the
$k$-bounded reachability problem for $\CRSM$ can be reduced to the $k$-bounded
reachability problem for a CPDS $\CPDS$ with $n$ components and global set
$G$. 
Additionally,
\[
|\CPDS|=\Theta(|\CRSM| \cdot \theta_x^{||});
\]
\end{compactenum}
It follows that our approach can solve the $k$-bounded reachability problem of a
CPDS $\CPDS$ in time 
\[
O(|\CPDS|^2\cdot n^k\cdot |G|^{k+2}).
\] 
Note that typically $k$ is very small, (e.g. 
$k=2$ in~\cite{Qadeer04}, $k=3$ in~\cite{Suwimonteerabuth08}).
However, in real applications the size of $G$ is typically smaller than 
$|\CPDS|$, e.g., when $G$ encodes only the synchronization variables among threads.
Our algorithm gives an improvement by a factor $\Omega(|\CPDS|^3/|G|^2)$.

\myparagraph{Comparison for CRSMs.} 
The naive upper bound for the $k$-bounded reachability problem of a CRSM $\CRSM$ 
obtained using a modification of the standard method of~\cite{AlurBEGRY05} to reduce it to the
CPDS case, and then apply the algorithm of~\cite{QadeerR05}, is 
$O(|\CRSM|^5 \cdot \theta^{||~5}_x\cdot n^k\cdot |G|^k)$. 
In contrast, our bound is 
$O(|\CRSM|\cdot \theta_e^{||}\cdot  \theta_x^{||}\cdot  n^k\cdot |G|^{k+2})$. 

\myparagraph{Comparison for APNs.} 
The problem of $k$-bounded reachability has also been studied in the closely
related model of \emph{asynchronous pushdown networks (APNs)}~\cite{Bouajjani05}.
Informally, the main difference of an APN from a CPDS is that in the former
case, the stacks have an additional set of local control states, different from
the common global finite control $G$.
Hence APNs are more general than CPDS.
As shown in~\cite{Bouajjani05}, the $k$-bounded reachability problem for an APN
$\APN$ of $n$ components can be solved essentially in time
$O(n^k\cdot |G|^k +n\cdot |G|^{k+2}\cdot |\APN|^2\cdot |P| )$, where $P$ is the
set of local control states.
Since APNs are more general than CPDSs, our previous analysis implies that the
algorithm of~\cite{Bouajjani05} can be used to solve the $k$-bounded
reachability problem for a CRSM $\CRSM$ in time
$O(n^k\cdot |G|^k +n\cdot |G|^{k+2}\cdot |\CRSM|^2\cdot \theta^{||}_x)$.
This time is incomparable with what we obtain from
Theorem~\ref{them:concurrent}. 
When the number of entries $\theta^{||}_e$ and the number of components $n$ is
constant, the algorithm presented in this work has a better complexity.


\section{Names of the Boolean programs used as benchmarks}
\label{app:bp-names}

All listed benchmarks belong to the ``bebop-itp'' collection.

\begin{enumerate}[noitemsep]
\item src\_7600\_general\_toaster\_kmdf\_filter\_generic\_\_InvalidReqAccess
\item src\_7600\_general\_toaster\_wdm\_filter\_devupper\_\_PnpSurpriseRemove
\item src\_7600\_general\_event\_wdm\_\_MarkIrpPending2
\item src\_7600\_storage\_sfloppy\_\_PagedCode
\item src\_7600\_input\_kbfiltr\_sys\_\_InvalidReqAccess
\item src\_7600\_general\_toaster\_wdm\_toastmon\_\_PendedCompletedRequest
\item src\_7600\_general\_toaster\_wdm\_filter\_devupper\_\_CriticalRegions
\item src\_7600\_network\_ndis\_athwifi\_driver\_atheros\_\_Irql\_SendRcv\_Function
\item src\_7600\_general\_event\_wdm\_\_IrpProcessingComplete
\item src\_7600\_general\_toaster\_wdm\_func\_featured1\_\_WmiForward
\item src\_7600\_general\_toaster\_wdm\_func\_featured1\_\_IrqlKeWaitForSingleObject
\item src\_7600\_network\_ndis\_athwifi\_driver\_atheros\_\_Irql\_Timer\_Function
\item src\_7600\_general\_toaster\_wdm\_func\_featured1\_\_IrqlIoPassive3
\item src\_7600\_general\_toaster\_wdm\_func\_featured2\_\_IrqlIoApcLte
\item src\_7600\_general\_toaster\_wdm\_func\_featured2\_\_WmiForward
\item src\_7600\_general\_ioctl\_kmdf\_sys\_\_InitFreeDeviceCreateType4
\item src\_7600\_general\_toaster\_wdm\_func\_incomplete2\_\_IrqlReturn
\item src\_7600\_general\_pcidrv\_wdm\_hw\_\_WmiForward
\item src\_7600\_input\_moufiltr\_\_InvalidReqAccess
\item src\_7600\_general\_toaster\_wdm\_func\_featured2\_\_IrqlIoPassive3
\item src\_7600\_general\_toaster\_kmdf\_filter\_sideband\_\_ControlDeviceInitAPI
\item src\_7600\_general\_toaster\_wdm\_func\_featured1\_\_IrqlIoApcLte
\item src\_7600\_general\_toaster\_wdm\_func\_featured2\_\_PnpSurpriseRemove
\item src\_7600\_general\_amcc5933\_sys\_\_KmdfIrql
\item src\_7600\_general\_toaster\_wdm\_func\_incomplete1\_\_IrpProcessingComplete
\item src\_7600\_general\_toaster\_wdm\_filter\_devupper\_\_PendedCompletedRequest
\item src\_7600\_general\_toaster\_wdm\_func\_incomplete1\_\_TargetRelationNeedsRef
\item src\_7600\_general\_toaster\_wdm\_func\_incomplete2\_\_TargetRelationNeedsRef
\item src\_7600\_general\_toaster\_wdm\_func\_featured1\_\_PnpSurpriseRemove
\item src\_7600\_bth\_bthecho\_bthcli\_sys\_\_RequestFormattedValid
\item src\_7600\_general\_toaster\_wdm\_filter\_buslower\_\_PendedCompletedRequest
\item src\_7600\_input\_hiddigi\_wacompen\_\_MarkIrpPending
\item src\_7600\_general\_toaster\_wdm\_bus\_\_PnpSurpriseRemove
\item src\_7600\_general\_toaster\_kmdf\_bus\_static\_\_PdoInitFreeDeviceCreateType4
\item src\_7600\_network\_ndis\_athwifi\_driver\_atheros\_\_Irql\_IrqlSetting\_Function
\item src\_7600\_serial\_serenum\_\_IrqlIoApcLte
\item src\_7600\_hid\_hidusbfx2\_sys\_\_SyncReqSend2
\item src\_7600\_general\_toaster\_wdm\_toastmon\_\_IrpProcessingComplete
\item src\_7600\_general\_toaster\_kmdf\_filter\_sideband\_\_ControlDeviceDeleted
\item src\_7600\_general\_cancel\_startio\_\_MarkIrpPending2
\item src\_7600\_general\_toaster\_wdm\_func\_featured2\_\_IrqlKeSetEvent
\item src\_7600\_general\_toaster\_wdm\_func\_incomplete2\_\_IrqlKeSetEvent
\item src\_7600\_serial\_serenum\_\_IrqlIoPassive3
\item src\_7600\_general\_toaster\_wdm\_filter\_devlower\_\_IrpProcessingComplete
\item src\_7600\_general\_toaster\_wdm\_func\_featured1\_\_IrqlExAllocatePool
\item src\_7600\_general\_toaster\_wdm\_func\_featured2\_\_CriticalRegions
\item src\_7600\_general\_toaster\_wdm\_bus\_\_MarkIrpPending2
\item src\_7600\_storage\_class\_cdrom\_\_WdfSpinlockRelease
\item src\_7600\_general\_toaster\_wdm\_func\_featured2\_\_IrqlKeWaitForSingleObject
\item src\_7600\_network\_ndis\_xframeii\_sys\_ndis6\_\_MandatoryOid
\item src\_7600\_general\_toaster\_kmdf\_filter\_sideband\_\_DeviceInitAllocate
\item src\_7600\_network\_ndis\_xframeii\_sys\_ndis6\_\_NdisStallExecution\_Delay
\item src\_7600\_bth\_bthecho\_bthsrv\_sys\_\_InvalidReqAccessLocal
\item src\_7600\_general\_toaster\_wdm\_func\_featured2\_\_IrqlKeApcLte2
\item src\_7600\_general\_toaster\_wdm\_bus\_\_IrqlIoPassive3
\item src\_7600\_network\_ndis\_xframeii\_sys\_ndis6\_\_SpinlockRelease
\item src\_7600\_storage\_filters\_diskperf\_\_PnpIrpCompletion
\item src\_7600\_storage\_filters\_diskperf\_\_TargetRelationNeedsRef
\item src\_7600\_general\_toaster\_wdm\_func\_featured2\_\_IrqlZwPassive
\item src\_7600\_hid\_hidusbfx2\_hidmapper\_\_ForwardedAtBadIrql
\item src\_7600\_general\_pcidrv\_wdm\_hw\_\_DoubleCompletion
\item src\_7600\_serial\_serenum\_\_MarkIrpPending2
\item src\_7600\_general\_toaster\_wdm\_func\_incomplete2\_\_MarkIrpPending2
\item src\_7600\_input\_moufiltr\_\_SyncReqSend2
\item src\_7600\_general\_toaster\_kmdf\_filter\_generic\_\_SyncReqSend2
\item src\_7600\_input\_kbfiltr\_sys\_\_SyncReqSend2
\item src\_7600\_input\_hiddigi\_wacompen\_\_SpinLock
\item src\_7600\_general\_toaster\_wdm\_filter\_devupper\_\_IrpProcessingComplete
\item src\_7600\_general\_pcidrv\_wdm\_hw\_\_IrqlIoPassive1
\item src\_7600\_general\_toaster\_wdm\_func\_incomplete1\_\_ForwardedAtBadIrql
\item src\_7600\_general\_toaster\_wdm\_filter\_devlower\_\_ForwardedAtBadIrql
\item src\_7600\_general\_toaster\_wdm\_toastmon\_\_ForwardedAtBadIrql
\item src\_7600\_general\_toaster\_wdm\_filter\_devupper\_\_ForwardedAtBadIrql
\end{enumerate}


\fi

\end{document}